\newtheorem{assumption}{Assumption}
\title{Symblicit algorithms for optimal strategy synthesis in monotonic Markov decision processes (extended version)\thanks{This work has been partly supported by ERC Starting Grant (279499: inVEST), ARC project (number AUWB-2010-10/15-UMONS-3) and European project Cassting (FP7-ICT-601148).}}
\author{Aaron Bohy$^1$ \and V\'eronique Bruy\`ere$^1$ \and Jean-Fran\c{c}ois Raskin$^{2}$}	 
\institute{$^1$Universit\'e de Mons$\quad$ $^2$Universit\'e Libre de Bruxelles}
\newcommand{\prism}{\ensuremath{\sf{PRISM}}}
\newcommand{\modest}{\ensuremath{\sf{MODEST}}}
\newcommand{\mrmc}{\ensuremath{\sf{MRMC}}}
\newcommand{\quasy}{\ensuremath{\sf{QUASY}}}
\newcommand{\defeq}{\stackrel{\sf{def}}{=}}
\newcommand{\nat}{{\mathbb{N}}}
\definecolor{light-gray}{gray}{0.9}
\definecolor{light-gray2}{gray}{0.96}
\newcommand{\CTL}{\ensuremath{\sf{CTL}} }
\newcommand{\PCTL}{\ensuremath{\sf{PCTL}} }
\newcommand{\LTL}{\ensuremath{\sf{LTL}} }
\newcommand{\LTLMP}{\textnormal{\ensuremath{\sf{LTL_\textsf{MP}}}} }
\newcommand{\Lit}{\textnormal{\textsf{Lit}} }
\newcommand{\MP}{\mathsf{MP}}
\newcommand{\TS}{\mathsf{TS}}
\newcommand{\mpval}{\mathsf{Val}}
\newcommand{\Outcome}{{\sf{Out}}}
\newcommand{\Win}{{\sf{Win}}}
\newcommand{\nextLTL}{{\sf{X}}}
\newcommand{\until}{{\sf{U}}}
\newcommand{\probI}{\pi_I}
\newcommand{\R}{\mathbb{R}}
\newcommand{\Z}{\mathbb{Z}}
\newcommand{\distrset}{\mathcal{D}}
\newcommand{\support}{\mathsf{Supp}}
\newcommand{\domain}{\mathsf{Dom}}
\newcommand{\ActionsO}{\Sigma}
\newcommand{\ActionsI}{T}
\newcommand{\actionO}{\sigma}
\newcommand{\actionI}{\tau}
\newcommand{\edges}{\textnormal{\textbf{E}}}
\newcommand{\distr}{\textnormal{\textbf{D}}}
\newcommand{\probmat}{\textnormal{\textbf{P}}}
\newcommand{\suc}{\mathsf{succ}}
\newcommand{\id}{\textnormal{\textbf{I}}}
\newcommand{\totfunc}{\mathcal{F}_\textnormal{tot}}
\newcommand{\enabledactions}{\ActionsO_s}
\newcommand{\enabledactionsprime}{\ActionsO_{s'}}
\newcommand{\enabledactionsxi}{\ActionsO_{x_i}}
\newcommand{\enabledstates}{S_\actionO}
\newcommand{\reward}{\textnormal{\textbf{C}}}
\newcommand{\ETP}{\mathbb{E}^{\textnormal{TS}_G}}
\newcommand{\EMP}{\mathbb{E}^\textnormal{MP}}
\newcommand{\Edot}{\mathbb{E}^{~\cdot}}
\newcommand{\symbMDP}{\mathcal{M}}
\newcommand{\symbReward}{\mathcal{C}}
\newcommand{\symbMC}{\symbMDP_{\lambda_n}}
\newcommand{\symbRewardMC}{\symbReward_{\lambda_n}}
\newcommand{\symbQuotient}{\symbMC'}
\newcommand{\symbRewardQuotient}{\symbRewardMC'}
\newcommand{\symbValue}{\mathcal{X}}
\newcommand{\symbGoal}{\mathcal{G}}
\newcommand{\explQuotient}{M_{\lambda_n}'}
\newcommand{\explRewardQuotient}{\reward_{\lambda_n}'}
\newcommand{\equivlump}{\sim_L}
\newcommand{\diff}{\backslash}
\newcommand{\pseudclos}{\updownarrow\!\!}
\newcommand{\antclos}{\downarrow\!\!}
\newcommand{\antunion}{\dot\cup}
\newcommand{\antinter}{\dot\cap}
\newcommand{\PA}{PA}
\newcommand{\monMDP}{M_\preceq}
\newcommand{\monMC}{M_{\preceq,\lambda}}
\newcommand{\monMCzero}{M_{\preceq,\lambda_0}}
\newcommand{\PreMC}{\Pre_\lambda}
\newcommand{\Pre}{\textnormal{\textsf{Pre}}}
\newcommand{\apre}{\mathsf{APre}}
\newcommand{\val}{block}
\newcommand{\equivdistr}{\sim_{\distr,\lambda}}
\newcommand{\partdistr}{S_{\equivdistr}}
\newcommand{\equivreward}{\sim_{\reward,\lambda}}
\newcommand{\partreward}{S_{\equivreward}}
\newcommand{\equivsigmarew}{\sim_{\reward,\actionO}}
\newcommand{\equivsigmaprob}{\sim_{\probmat,\actionO}}
\newcommand{\equivstrat}{\sim_\lambda}
\newcommand{\equivstratzero}{\sim_{\lambda_0}}
\newcommand{\equivstratp}{\sim_{\lambda'}}
\newcommand{\partstrat}{S_{\equivstrat}}
\newcommand{\partlump}{S_{\equivlump}}
\newcommand{\equiva}{\sim_{l_\actionO}}
\newcommand{\parta}{(S_{\actionO})_{\equiva}}
\newcommand{\equivc}{\sim_{\probmat,\actionO,C}}
\newcommand{\partc}{S_{\equivc}}
\newcommand{\equivalump}{\sim_{l_\actionO \wedge L}}
\newcommand{\partalump}{(S_\actionO)_{\equivalump}}
\newcommand{\partprom}{(S_\actionO)_{\equivalump}^<}
\newcommand{\laB}{l_\actionO(B)}
\newcommand{\las}{l_\actionO(s)}
\newcommand{\lasp}{l_\actionO(s')}
\newcommand{\lac}{l_\actionO(C)}
\newcommand{\lad}{l_\actionO(D)}
\newcommand{\timeout}{\textbf{TO}}
\newcommand{\memout}{\textbf{MO}}
\begin{document}
  \maketitle	

\begin{abstract} 
When treating Markov decision processes (MDPs) with large state spaces, using explicit representations quickly becomes unfeasible. Lately, Wimmer et al. have proposed a so-called symblicit algorithm for the synthesis of optimal strategies in MDPs, in the quantitative setting of expected mean-payoff. This algorithm, based on the strategy iteration algorithm of Howard and Veinott, efficiently combines symbolic and explicit data structures, and uses binary decision diagrams as symbolic representation. The aim of this paper is to show that the new data structure of pseudo-antichains (an extension of antichains) provides another interesting alternative, especially for the class of monotonic MDPs. We design efficient pseudo-antichain based symblicit algorithms (with open source implementations) for two quantitative settings: the expected mean-payoff and the stochastic shortest path. For two practical applications coming from automated planning and \LTL synthesis, we report promising experimental results w.r.t. both the run time and the memory consumption.
\end{abstract}

\section{Introduction}
Markov decision processes~\cite{putermanMDP,DBLP:books/daglib/0020348} (MDPs) are rich models that exhibit both nondeterministic choices and stochastic transitions. Model-checking and synthesis algorithms for MDPs exist for logical properties expressible in the logic \PCTL\cite{DBLP:journals/fac/HanssonJ94}, a stochastic extension of \CTL\cite{DBLP:conf/lop/ClarkeE81}, and are implemented in tools like \prism~\cite{KNP11}, \modest~\cite{DBLP:conf/fdl/Hartmanns12}, \mrmc~\cite{DBLP:journals/pe/KatoenZHHJ11}\dots There also exist algorithms for {\em quantitative properties} such as the long-run average reward (mean-payoff) or the stochastic shortest path, that have been implemented in tools like \quasy~\cite{DBLP:conf/tacas/ChatterjeeHJS11} and \prism~\cite{DBLP:conf/vmcai/EssenJ12}.


There are two main families of algorithms for MDPs. First, {\em value iteration} algorithms assign values to states of the MDPs and refines locally those values by successive approximations. If a fixpoint is reached, the value at a state $s$ represents a probability or an expectation that can be achieved by an optimal strategy that resolves the choices present in the MDP starting from $s$. This value can be, for example, the maximal probability to reach a set of goal states.  Second, {\em strategy iteration} algorithms start from an arbitrary strategy and iteratively improve the current strategy by local changes up to the convergence to an optimal strategy. Both methods have their advantages and disadvantages. Value iteration algorithms usually lead to easy and efficient implementations, but in general the fixpoint is not guaranteed to be reached in a finite number of iterations, and so only approximations are computed. On the other hand, strategy iteration algorithms have better theoretical properties as convergence towards an optimal strategy in a finite number of steps is usually ensured, but they often require to solve systems of linear equations, and so they are more difficult to implement efficiently.


When considering large MDPs, that are obtained from high level descriptions or as the product of several components, explicit methods often exhaust available memory and are thus impractical. This is the manifestation of the well-known {\em state explosion problem}. In non-probabilistic systems, symbolic data structures such as binary decision diagrams (BDDs) have been investigated~\cite{DBLP:journals/iandc/BurchCMDH92} to mitigate this phenomenon. For probabilistic systems, multi-terminal BDDs  (MTBDDs) are useful but they are usually limited to systems with around $10^{10}$ or $10^{11}$ states only~\cite{prismwebsite}. Also, as mentioned above, some algorithms for MDPs rely on solving linear systems, and there is no easy use of BDD like structures for implementing such algorithms.


Recently, Wimmer et al.~\cite{DBLP:conf/qest/WimmerBBHCHDT10} have proposed a method that {\em mixes} symbolic and explicit representations to efficiently implement the Howard and Veinott strategy iteration algorithm~\cite{howard1960dynamic,veinott1966finding} to synthesize optimal strategies for mean-payoff objectives in MDPs. Their solution is as follows. First, the MDP is represented and handled symbolically using MTBDDs. Second, a strategy is fixed symbolically and the MDP is transformed into a Markov chain (MC). To analyze this MC, a linear system needs to be constructed from its state space. As this state space is potentially huge, the MC is first reduced by {\em lumping}~\cite{kemeny1960jl,buchholz1994exact} (bisimulation reduction), and then a (hopefully) compact linear system can be constructed and solved. Solutions to this linear system allow to show that the current strategy is optimal, or to obtain sufficient information to improve it. A new iteration is then started. The main difference between this method and the other methods proposed in the literature is its {\em hybrid nature}: it is symbolic for handling the MDP and for computing the lumping, and it is explicit for the analysis of the reduced MC. This is why the authors of~\cite{DBLP:conf/qest/WimmerBBHCHDT10} have coined their approach {\em symblicit}.

\paragraph{Contributions.} In this paper, we build on the symblicit approach described above. Our contributions are threefold. 
First, we show that the symblicit approach and strategy iteration can also be efficiently applied to the {\em stochastic shortest path} problem. We start from an algorithm proposed by Bertsekas and Tsitsiklis~\cite{bertsekas1996neuro} with a preliminary step of de Alfaro~\cite{DBLP:conf/concur/Alfaro99}, and we show how to cast it in the symblicit approach.
Second, we show that alternative data structures can be more efficient than BDDs or MTBDDs for implementing a symblicit approach, both for mean-payoff and stochastic shortest path objectives. In particular, we consider a natural class of MDPs with {\em monotonic properties} on which our alternative data structure is more efficient. For such MDPs, as for subset constructions in automata theory~\cite{DBLP:conf/cav/WulfDHR06,DBLP:conf/tacas/DoyenR07}, antichain based data structures usually behave better than BDDs. The application of antichains to monotonic MDPs requires nontrivial extensions: for instance, to handle the lumping step, we need to generalize existing antichain based data structures in order to be closed under negation. To this end, we introduce a new data structure called {\em pseudo-antichain}.
Third, we have implemented our algorithms and we show that they are more efficient than existing solutions on natural examples of monotonic MDPs. We show that monotonic MDPs naturally arise in probabilistic planning~\cite{blum2000probabilistic} and when optimizing controllers synthesized from \LTL specifications with mean-payoff objectives~\cite{DBLP:conf/tacas/BohyBFR13}.

\paragraph{Structure of the paper.} In Section~\ref{sec:prelim}, we recall the useful definitions, and introduce the notion of monotonic MDP. In Section~\ref{sec:si}, we recall strategy iteration algorithms for mean-payoff and stochastic shortest path objectives, and we present the symblicit version of those algorithms. We introduce the notion of pseudo-antichains in Section~\ref{sec:pa}, and we describe our pseudo-antichain based symblicit algorithms in Section~\ref{sec:paalgo}. In Section~\ref{sec:experiments}, we propose two applications of the symblicit algorithms and give experimental results. Finally in Section~\ref{sec:conclusion}, we summarize our results.

\section{Preliminaries}\label{sec:prelim}
In this section, we recall useful definitions and we introduce the notion of monotonic Markov decision process. We also state the problems that we study. 

\paragraph{Functions and probability distributions.}  For any (partial or total) function $f$, we denote by $\domain(f)$ the domain of definition of $f$. For all sets $A, B$, we denote by $\totfunc(A, B) = \{f : A \rightarrow B \mid \domain(f) = A\}$ the set of total functions from $A$ to $B$. A \textit{probability distribution} over a finite set $A$ is a total function $\pi : A \rightarrow [0,1]$ such that $\sum_{a \in A} \pi(a) = 1$. Its \textit{support} is the set $\support(\pi) = \{a \in A \mid \pi(a) > 0\}$.
We denote by $\distrset(A)$ the set of probability distributions over $A$. 

\paragraph{Stochastic models.} A \textit{discrete-time Markov chain (MC)} is a tuple $(S, \probmat)$ where $S$ is a finite set of states and $\probmat : S \rightarrow \distrset(S)$ is a stochastic transition matrix. For all $s, s' \in S$, we often write $\probmat(s, s')$ for $\probmat(s)(s')$. 
A \textit{path} is an infinite sequence of states $\rho = s_0s_1s_2 \ldots $ such that $\probmat(s_i, s_{i+1}) > 0$ for all $i \geq 0$. Finite paths are defined similarly, and $\probmat$ is naturally extended to finite paths. 

\noindent
A \textit{Markov decision process (MDP)} is a tuple $(S, \ActionsO, \probmat)$ where $S$ is a finite set of states, $\ActionsO$ is a finite set of actions and $\probmat : S \times \ActionsO \rightarrow \distrset(S)$ is a partial stochastic transition function. We often write $\probmat(s, \actionO, s')$ for $\probmat(s, \actionO)(s')$. For each state $s \in S$, we denote by $\enabledactions \subseteq \ActionsO$ the set of enabled actions in $s$, where an action $\actionO \in \ActionsO$ is \textit{enabled} in $s$ if $(s, \actionO) \in \domain(\probmat)$. For all state $s \in S$, we require $\enabledactions \neq \emptyset$, and we thus say that the MDP is $\ActionsO$\textit{-non-blocking}. For all action $\actionO \in \ActionsO$, we also introduce notation $S_\actionO$ for the set of states in which $\actionO$ is enabled.
For $s \in S$ and  $\actionO \in \enabledactions$, we denote by $\suc(s, \actionO) = \support(\probmat(s, \actionO))$ the set of possible successors of $s$ for enabled action $\actionO$. 

\paragraph{Strategies.} Let $(S, \ActionsO, \probmat)$ be an MDP. A \textit{memoryless strategy} is a total function $\lambda : S \rightarrow \ActionsO$ mapping each state $s$ to an enabled action $\actionO \in \enabledactions$. We denote by $\Lambda$ the set of all memoryless strategies. A memoryless strategy $\lambda$ induces an MC $(S, \probmat_\lambda)$ such that for all $s, s' \in S$, $\probmat_\lambda(s, s') = \probmat(s, \lambda(s), s')$.

\paragraph{Costs and value functions.}
Additionally to an MDP $(S, \ActionsO, \probmat)$, we consider a partial \textit{cost function} $\reward : S \times \ActionsO \rightarrow \R$ with $\domain(\reward) = \domain(\probmat)$ that associates a cost with a state $s$ and an enabled action $\actionO$ in $s$. A memoryless strategy $\lambda$ assigns a total cost function $\reward_\lambda : S \rightarrow \R$ to the induced MC $(S, \probmat_\lambda)$, such that $\reward_\lambda(s) = \reward(s, \lambda(s))$. 
Given a path $\rho = s_0s_1s_2 \ldots $ in this MC, 
the \textit{mean-payoff} of $\rho$ is $\MP(\rho) = \limsup_{n \rightarrow \infty} \frac{1}{n}\sum_{i = 0}^{n-1}  \reward_\lambda(s_i)$. Given a subset $G \subseteq S$ of \textit{goal} states and a finite path $\rho$ reaching a state of $G$, the \textit{truncated sum up to $G$} of $\rho$ is $\TS_G(\rho) = \sum_{i = 0}^{n-1}  \reward_\lambda(s_i)$ where $n$ is the first index such that $s_n \in G$.

\noindent
Given an MDP with a cost function $\reward$, and a memoryless strategy $\lambda$, we consider two classical value functions of $\lambda$ defined as follows. 
For all state $s \in S$, 
the \textit{expected mean-payoff} of $\lambda$ is $\EMP_\lambda(s) = \lim_{n \to \infty} \frac{1}{n} \sum_{i=0}^{n-1}\probmat_\lambda^i \reward_\lambda (s)$. Given a subset $G \subseteq S$, and assuming that $\lambda$ reaches $G$ from state $s$ with probability 1, the \textit{expected truncated sum up to $G$} of $\lambda$ is $\ETP_\lambda(s) = \sum_{\rho}\probmat_\lambda(\rho) \TS_G (\rho)$ where the sum is over all finite paths $\rho = s_0 s_1 \ldots s_n$ such that $s_0 = s$, $s_n \in G$, and $s_0, \ldots, s_{n-1} \not \in G$.
Let $\lambda^*$ be a memoryless strategy. Given a value function $\Edot_{\lambda} \in \{\EMP_{\lambda},\ETP_{\lambda}\}$, we say that $\lambda^*$ is \textit{optimal} if $\Edot_{\lambda^*}(s) = \inf_{\lambda \in \Lambda} \Edot_{\lambda}(s)$ for all $s \in S$, and $\Edot_{\lambda^*}$ is called the \textit{optimal} value function.\footnote{\label{fn:altobj}An alternative objective might be to maximize the value function, in which case $\lambda^*$ is optimal if $\Edot_{\lambda^*}(s) = \sup_{\lambda \in \Lambda} \Edot_{\lambda}(s)$ for all $s \in S$.} 
Note that we might have considered other classes of strategies but it is known that for these value functions, there always exists a memoryless strategy that minimizes the expected value of all states~\cite{DBLP:books/daglib/0020348,putermanMDP}. 

\paragraph{Studied problems.} In this paper, we study algorithms for solving MDPs for two quantitative settings: the expected mean-payoff and the stochastic shortest path. Let $(S, \ActionsO, \probmat)$ be an MDP and $\reward : S \times \ActionsO \rightarrow \R$ be a cost function. $(i)$~The \textit{expected mean-payoff (EMP) problem} is to synthesize an optimal strategy for the expected mean-payoff value function. As explained above, such a memoryless optimal strategy always exists, and the problem is solvable in polynomial time via linear programming~\cite{putermanMDP,filar1996competitive}. $(ii)$~When $\reward$ is restricted to \textit{strictly positive} values in $\R_{>0}$, and a subset $G \subseteq S$ of goal states is given, the \textit{stochastic shortest path (SSP) problem} is to synthesize an optimal strategy for the expected truncated sum value function, among the set of strategies that reach $G$ with probability $1$, provided such strategies exist. For all $s \in S$, we denote by $\Lambda_s^P$ the set of \textit{proper strategies} for $s$ that are the strategies that lead from $s$ to $G$ with probability $1$.
Solving the SSP problem consists in two steps. The first step is to determine the set $S^P = \{s \in S \mid \Lambda_s^P \neq \emptyset\}$ of \textit{proper} states, i.e. states having at least one proper strategy. The second step consists in synthesizing an optimal strategy $\lambda^*$ such that $\ETP_{\lambda^*}(s) = \inf_{\lambda \in \Lambda_s^P}\ETP_\lambda(s)$ for all states of $s \in S^P$. It is known that memoryless optimal strategies exist for the SSP, and the problem can be solved in polynomial time through linear programming~\cite{bertsekas1996neuro,bertsekas1991analysis}. Note that the existence of at least one proper strategy for each state is often stated as an assumption on the MDP. It that case, an algorithm for the SSP problem is limited to the second step.

\bigskip
In~\cite{DBLP:conf/qest/WimmerBBHCHDT10}, the authors present a BDD based \textit{symblicit} algorithm for the EMP problem, that is, an algorithm that efficiently combines symbolic and explicit representations. In this paper, we are interested in proposing antichain based (instead of BDD based) symblicit algorithms for both the EMP and SSP problems. Due to the use of antichains, our algorithms apply on a particular, but natural, class of MDPs, called \textit{monotonic} MDPs. We first recall the definition of antichains and related notions. We then consider an example to intuitively illustrate the notion of monotonic MDP and we conclude with its formal definition.

\paragraph{Closed sets and antichains.} Let $S$ be a finite set equipped with a partial order $\preceq$ such that $(S,\preceq)$ is a \textit{semilattice}, i.e. for all $s, s' \in S$, their greatest lower bound $s \sqcap s'$ always exists. A set $L \subseteq S$ is \textit{closed} for $\preceq$ if for all $s \in L$ and all $s' \preceq s$, we have $s' \in L$. If $L_1, L_2 \subseteq S$ are two closed sets, then $L_1\cap L_2$ and $L_1 \cup L_2$ are closed, but $L_1 \diff L_2$ is not necessarily closed. The \textit{closure} $\antclos L$ of a set $L$ is the set $\antclos L = \{s' \in S$ $|$ $\exists s \in L \cdot s' \preceq s\}$. Note that $\antclos L = L$ for all closed sets $L$. 
A set $\alpha$ is an \textit{antichain} if all its elements are pairwise incomparable with respect to $\preceq$. For $L \subseteq S$, we denote by $\lceil L \rceil$ the set of its maximal elements, that is $\lceil L \rceil = \{s \in L \mid \forall s' \in L \cdot s \preceq s' \Rightarrow s = s'\}$. This set $\lceil L \rceil$ is an antichain. If $L$ is closed, then  $\antclos\lceil L \rceil = L$, and $\lceil L \rceil$ is called the \textit{canonical representation} of $L$. The interest of antichains is that they are \textit{compact} representations of closed sets.

\begin{example}\label{ex:monkey}
To illustrate the notion of monotonic MDP in the SSP context, we consider the following example, inspired from~\cite{russell1995artificial}, where a monkey tries to reach an hanging bunch of bananas. There are several items strewn in the room that the monkey can get and use, individually or simultaneously. There is a \textit{box} on which it can climb to get closer to the bananas, a \textit{stone} that can be thrown at the bananas, a \textit{stick} to try to take the bananas down, and obviously the \textit{bananas} that the monkey wants to eventually obtain. Initially, the monkey possesses no item. The monkey can make actions whose effects are to add and/or to remove items from its inventory. We add stochastic aspects to the problem. For example, using the \textit{stick}, the monkey has probability $\frac{1}{5}$ to obtain the \textit{bananas}, while combining the \textit{box} and the \textit{stick} increases this probability to $\frac{1}{2}$. Additionally, we associate a (positive) cost with each action, representing the time spent executing the action. For example, picking up the \textit{stone} has a cost of $1$, while getting the \textit{box} costs $5$. The objective of the monkey is then to minimize the expected cost for reaching the \textit{bananas}.

This kind of specification naturally defines an MDP. The set $S$ of states of the MDP is the set of all the possible combinations of items. Initially the monkey is in the state with no item. The available actions at each state $s \in S$ depend on the items of $s$. For example, when the monkey possesses the \textit{box} and the \textit{stick}, it can decide to try to reach the \textit{bananas} by using one of these two items, or the combination of both of them. If it decides to use the \textit{stick} only, it will reach the state $s \cup \{\textit{bananas}\}$ with probability $\frac{1}{5}$ whereas it will stay at state $s$ with probability $\frac{4}{5}$. This MDP is monotonic in the following sense.
First, the set $S$ is a closed set equipped with the partial order~$\supseteq$. 
Second, the action of trying to reach the \textit{bananas} with the \textit{stick} is also available if the monkey possesses the \textit{stick} together with other items. Moreover, if it succeeds (with probability $\frac{1}{5}$), it will reach a state with the \textit{bananas} and all the items it already had at its disposal. In other words, for all states $s' \in S$ such that $s' \supseteq s = \{\textit{stick}\}$, we have that $\ActionsO_{s} \subseteq \ActionsO_{s'}$, and $t' \supseteq t = \{ \textit{bananas}, \textit{stick}\}$ with $t'$ the state reached from $s'$  with probability $\frac{1}{5}$. Finally, note that the set of goal states $G = \{s \in S \mid \textit{bananas} \in s\}$ is closed. 
\end{example}

\paragraph{New definition of MDPs.} To properly define the notion of monotonic MDPs, we need a slightly different, but equivalent, definition of MDPs which is based on a set $\ActionsI$ of stochastic actions. In this definition, an MDP $M$ is a tuple $(S, \ActionsO, \ActionsI, \edges, \distr)$ where $S$ is a finite set of states, $\ActionsO$ and $\ActionsI$ are two finite sets of actions such that $\ActionsO \cap \ActionsI = \emptyset$, $\edges : S \times \ActionsO \rightarrow \totfunc(\ActionsI, S)$ is a partial successor function, and $\distr : S \times \ActionsO \rightarrow \distrset(\ActionsI)$ is a partial stochastic function such that $\domain(\edges) = \domain(\distr)$. Figure~\ref{fig:exmdp} intuitively illustrates the relationship between the two definitions. 

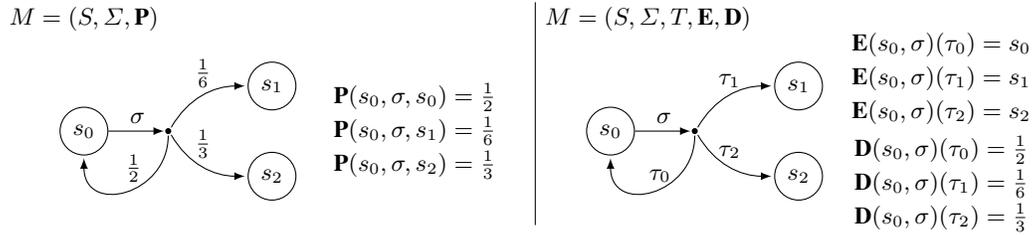
\begin{figure}
\centering
\vspace{-0.4cm}
  \begin{tikzpicture}[>=stealth',shorten >=1pt,auto,node distance=2.5cm,bend angle=0,scale=1,font=\small]
   \tikzstyle{p1}=[draw,circle,text centered,minimum size=3mm]
    \node[p1]  (0)  at (0, 0) {$s_0$};
    \node[p1]  (1)  at (2.5, 0.6) {$s_1$};
    \node[p1]  (2)  at (2.5, -0.6) {$s_2$};
     \fill (1.12,0) circle (0.04cm);

    \node[p1]  (3)  at (7, 0) {$s_0$};
    \node[p1]  (4)  at (9.5, 0.6) {$s_1$};
    \node[p1]  (5)  at (9.5, -0.6) {$s_2$};
     \fill (8.12,0) circle (0.04cm);

    \node at (4.4,0.45) {$\probmat(s_0, \sigma, s_0) = \frac{1}{2}$};
    \node at (4.4,0) {$\probmat(s_0, \sigma, s_1) = \frac{1}{6}$};     
    \node at (4.4,-0.45) {$\probmat(s_0, \sigma, s_2) = \frac{1}{3}$};

    \node at (11.4,1.15) {$\edges(s_0, \sigma)(\actionI_0) = s_0$};
    \node at (11.4,0.7) {$\edges(s_0, \sigma)(\actionI_1) = s_1$};     
    \node at (11.4,0.25) {$\edges(s_0, \sigma)(\actionI_2) = s_2$};
    \node at (11.4,-0.25) {$\distr(s_0, \sigma)(\actionI_0) = \frac{1}{2}$};
    \node at (11.4,-0.7) {$\distr(s_0, \sigma)(\actionI_1) = \frac{1}{6}$};     
    \node at (11.4,-1.15) {$\distr(s_0, \sigma)(\actionI_2) = \frac{1}{3}$};

    \draw (6,-1.25) --  (6,1.75);

    \node at (0,1.5) {$M = (S, \ActionsO, \probmat)$};
    \node at (7.5,1.5) {$M = (S, \ActionsO, \ActionsI, \edges, \distr)$};
    \path
	(0) -- (0);
    \draw[->,>=latex] (0) to[out=0,in=180] node [above, xshift=0mm,yshift=0mm] {$\actionO$} (1.1, 0);
    \draw[->,>=latex] (1.12, -0.06) to[out=270,in=270, distance=0.8cm] node [above, xshift=1mm,yshift=0mm] {$\frac{1}{2}$} (0);
    \draw[->,>=latex] (1.16, 0.05) to[out=60,in=180] node [above, xshift=0mm] {$\frac{1}{6}$} (1);
    \draw[->,>=latex] (1.16, -0.05) to[out=300,in=180] node [above, xshift=0mm,yshift=0mm] {$\frac{1}{3}$} (2);

    \draw[->,>=latex] (3) to[out=0,in=180] node [above, xshift=0mm,yshift=0mm] {$\actionO$} (8.1, 0);
    \draw[->,>=latex] (8.12, -0.06) to[out=270,in=270, distance=0.8cm] node [above, xshift=1mm,yshift=0mm] {$\actionI_0$} (3);
    \draw[->,>=latex] (8.16, 0.05) to[out=60,in=180] node [above, xshift=0mm] {$\actionI_1$} (4);
    \draw[->,>=latex] (8.16, -0.05) to[out=300,in=180] node [above, xshift=0mm,yshift=0mm] {$\actionI_2$} (5);
\end{tikzpicture}
\vspace{-0.3cm}
\caption{Illustration of the new definition of MDPs for a state $s_0 \in S$ and an action $\actionO \in \Sigma_{s_0}$.}
\label{fig:exmdp}
\end{figure}

Let us explain this relationship more precisely. Let an MDP as given in the new definition. We can then derive from $\edges$ and $\distr$ the partial transition function $\probmat : S \times \ActionsO \rightarrow \distrset(S)$ such that for all $s, s' \in S$ and $\actionO \in \enabledactions$, 
\begin{equation*}
\probmat(s,\actionO)(s') = \sum_{\begin{matrix}\actionI \in \ActionsI \\ \edges(s,\actionO)(\actionI)=s'\end{matrix}}\distr(s,\actionO)(\actionI).
\end{equation*}
Conversely, let an MDP $(S, \ActionsO, \probmat)$ as in the first definition. Then we can choose a set $\ActionsI$ of stochastic actions of size $|S|$ and adequate functions $\edges, \distr$ to get the second definition $(S, \ActionsO, \ActionsI, \edges, \distr)$ for this MDP (see Figure~\ref{fig:exmdp}).

In this new definition of MDPs, for all $s \in S$ and all pair of actions $(\actionO, \actionI) \in \ActionsO \times \ActionsI$, there is at most one $s' \in S$ such that $\edges(s, \actionO)(\actionI) = s'$. We thus say that $M$ is \textit{deterministic}. Moreover, since for all pair $(s, \actionO) \in \domain(\edges)$, $\edges(s, \actionO)$ is a total function mapping each $\actionI \in \ActionsI$ to a state $s \in S$, we say that $M$ is $\ActionsI$\textit{-complete}.

Notice that the notion of MC induced by a strategy can also be described in this new formalism as follows. Given an MDP $(S, \ActionsO, \ActionsI, \edges, \distr)$ and a memoryless strategy $\lambda$, we have the induced MC $(S, \ActionsI, \edges_{\lambda}, \distr_{\lambda})$ such that $\edges_\lambda : S \rightarrow \distrset(\ActionsI)$ is the successor function with $\edges_\lambda(s) = \edges(s, \lambda(s))$, for all $s \in S$, and $\distr_\lambda : S \rightarrow \distrset(\ActionsI)$ is the stochastic function with $\distr_\lambda(s) = \distr(s, \lambda(s))$, for all $s \in S$.

Depending on the context, we will use both definitions $M = (S, \ActionsO, \ActionsI, \edges, \distr)$ and $M = (S, \ActionsO, \probmat)$ for MDPs, assuming that $\probmat$ is always obtained from some set $\ActionsI$ and partial functions $\edges$ and $\distr$.
We can now formally define the notion of monotonic MDP.

\paragraph{Monotonic MDPs.} A \textit{monotonic MDP} is an MDP $\monMDP = (S, \ActionsO, \ActionsI, \edges, \distr)$ such that:
\begin{enumerate} 
	\item The set $S$ is equipped with a partial order $\preceq$ such that $(S,\preceq)$ is a semilattice. 
	\item The partial order $\preceq$ is \textit{compatible} with $\edges$, i.e. for all $s, s' \in S$, if $s \preceq s'$, then  for all $\actionO \in \ActionsO$, $\actionI \in \ActionsI$, for all $t' \in S$ such that $\edges(s', \actionO)(\actionI) = t'$, there exists $t \in S$ such that $\edges(s, \actionO)(\actionI) = t$ and $t \preceq t'$.
\end{enumerate}

\noindent Note that since $(S,\preceq)$ is a semilattice, we have that $S$ is closed for $\preceq$. With this definition, and in particular by compatibility of $\preceq$, we have the next proposition.

\begin{proposition}\label{prop:As}
The following statements hold for a monotonic MDP $\monMDP$:
\begin{itemize}
\item For all $s,s' \in S$, if $s \preceq s'$ then $\enabledactionsprime \subseteq \enabledactions$
\item For all $\actionO \in \ActionsO$, $\enabledstates$ is closed. 
\end{itemize}
\end{proposition}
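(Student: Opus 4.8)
The plan is to establish the first item directly from the compatibility of $\preceq$ with $\edges$ together with the $\ActionsI$-completeness of $\monMDP$, and then to obtain the second item as an immediate corollary of the first. The whole argument is essentially a careful unfolding of what ``enabled'' means in the new formalism, namely membership of the pair $(s,\actionO)$ in $\domain(\edges)$.

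For the first item I would fix $s \preceq s'$ and an action $\actionO \in \enabledactionsprime$, that is a $\actionO$ with $(s',\actionO) \in \domain(\edges)$, and aim to show $(s,\actionO) \in \domain(\edges)$, which is exactly $\actionO \in \enabledactions$. First I would pick any stochastic action $\actionI \in \ActionsI$; such a $\actionI$ exists because $\distr(s',\actionO)$ is a probability distribution over $\ActionsI$, so $\ActionsI \neq \emptyset$. Next I would invoke $\ActionsI$-completeness: since $(s',\actionO) \in \domain(\edges)$, the map $\edges(s',\actionO)$ is a total function on $\ActionsI$, so $t' = \edges(s',\actionO)(\actionI)$ is well defined. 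Finally I would apply the compatibility condition to $s \preceq s'$, the action $\actionO$, the stochastic action $\actionI$ and this $t'$: it yields a state $t$ with $\edges(s,\actionO)(\actionI) = t$. The mere existence of this value forces $\edges(s,\actionO)$ to be defined, i.e. $(s,\actionO) \in \domain(\edges)$, giving $\actionO \in \enabledactions$ as desired.

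For the second item I would fix $\actionO \in \ActionsO$ and recall that $\enabledstates = \{s \in S \mid \actionO \in \enabledactions\}$. To prove that $\enabledstates$ is closed, I take $s \in \enabledstates$ and an arbitrary $s' \preceq s$, and apply the first item to the pair $s' \preceq s$ (with the roles of the two states swapped): this gives $\enabledactions \subseteq \enabledactionsprime$. Since $\actionO \in \enabledactions$ because $s \in \enabledstates$, we conclude $\actionO \in \enabledactionsprime$, that is $s' \in \enabledstates$, which is precisely closedness.

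I do not expect a deep obstacle here: both items are short. The only point requiring care is the handling of partial-function domains, and in particular producing an actual witness to feed into the compatibility condition. Compatibility is vacuous unless there is some $t'$ with $\edges(s',\actionO)(\actionI) = t'$, so the argument genuinely relies on the combination of $\ActionsI \neq \emptyset$ (to choose $\actionI$) and $\ActionsI$-completeness (to guarantee that $\edges(s',\actionO)(\actionI)$ is defined once $\actionO$ is enabled at $s'$); without these two facts the implication would not go through.
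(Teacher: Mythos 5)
Your proof is correct and matches the paper's own (implicit) argument: the paper simply asserts that the proposition follows from the compatibility of $\preceq$ with $\edges$, and your write-up is a careful unfolding of exactly that, deriving the first item from compatibility (using $\ActionsI \neq \emptyset$ and $\ActionsI$-completeness to produce the needed witness $t' = \edges(s',\actionO)(\actionI)$) and the second item as a direct corollary of the first. The attention you pay to the domain of the partial function $\edges$ is precisely the point the paper leaves to the reader.
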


\begin{remark}
In this definition, by monotonic MDPs, we mean MDPs that are built on state spaces \textit{already equipped with a natural partial order}. For instance, this is the case for the two classes of MDPs studied in Section~\ref{sec:experiments}. 
The same kind of approach has already been proposed in~\cite{DBLP:journals/tcs/FinkelS01}.

Note that all MDPs can be seen monotonic. Indeed, let $(S, \Sigma, T, \edges, \distr)$ be a given MDP and let $\preceq$ be a partial order such that all states in $S$ are pairwise incomparable with respect to $\preceq$. Let~$t\not\in S$ be an additional state such that $(1)$ $t \preceq s$ for all $s \in S$, $(2)$ $\edges(s, \sigma)(\tau) \neq t$ for all $s \in S, \sigma \in \Sigma, \tau \in T$, and $(3)$~$\edges(t, \sigma)(\tau) = t$ for all $\sigma \in \Sigma, \tau \in T$. Then, we have that $(S\cup\{t\}, \preceq)$ is a semilattice and $\preceq$ is compatible with $\edges$. However, such a partial order would not lead to efficient algorithms in the sense studied in this paper. 
\end{remark}

\section{Strategy iteration algorithms}\label{sec:si}
In this section, we present strategy iteration algorithms for synthesizing optimal strategies for the SSP and EMP problems. A \textit{strategy iteration} algorithm~\cite{howard1960dynamic} consists in generating a sequence of monotonically improving strategies (along with their associated value functions) until converging to an optimal one. Each iteration is composed of two phases: the \textit{strategy evaluation phase} in which the value function of the current strategy is computed, and the \textit{strategy improvement phase} in which the strategy is improved (if possible) at each state, by using the preceding computed value function. The algorithm stops after a finite number of iterations, as soon as no more improvement can be made, and returns the computed optimal strategy. 

We now describe two strategy iteration algorithms, for the SSP and the EMP. We follow the presentation of those algorithms as given in \cite{DBLP:conf/qest/WimmerBBHCHDT10}.

\subsection{Stochastic shortest path}\label{subsec:SSP}
We start with the strategy iteration algorithm for the SSP problem~\cite{howard1960dynamic,bertsekas1996neuro}. Let $M = (S, \ActionsO, \probmat)$ be an MDP, $\reward : S \times \ActionsO \rightarrow \R_{>0}$ be a strictly positive cost function, and $G \subseteq S$ be a set of goal states. Recall from the previous section that the solution to the SSP problem is to first compute the set of proper states which are the states having at least one proper strategy. 

\paragraph{Computing proper states.} An algorithm is proposed in~\cite{DBLP:conf/concur/Alfaro99} for computing in quadratic time  the set $S^P = \{s \in S \mid \Lambda_s^P \neq \emptyset\}$ of proper states. To present it, given two subsets $X, Y \subseteq S$, we define the predicate $\apre(Y, X)$ such that for all $s \in S$, 
$$ s \models \apre(Y, X) \Leftrightarrow \exists \actionO \in \ActionsO_s, (\suc(s, \actionO) \subseteq Y \land \suc(s,\actionO) \cap X \neq \emptyset).$$
Then, we can compute the set $S^P$ of proper states by the following $\mu$-calculus expression:
$$ S^P = \nu Y \cdot \mu X \cdot (\apre(Y,X) \lor \mathsf{G}),$$
where we denote by $\mathsf{G}$ a predicate that holds exactly for the states in $G$. The algorithm works as follows. Initially, we have $Y_0 = S$. At the end of the first iteration, we have $Y_1 = S \diff C_0$, where $C_0$ is the set of states that reach $G$ with probability $0$. At the end of the second iteration, we have $Y_2 = Y_1 \diff C_1$, where $C_1$ is the set of states that cannot reach $G$ without risking to enter $C_0$ (i.e. states in $C_1$ have a strictly positive probability of entering $C_0$). More generally, at the end of iteration $k > 0$, we have $Y_k = Y_{k-1} \diff C_{k-1}$, where $C_{k-1}$ is the set of states that cannot reach $G$ without risking to enter $\bigcup_{i=0}^{k-2}C_i$. The correctness and complexity results are proved in~\cite{DBLP:conf/concur/Alfaro99}.

Given an MDP $M = (S, \ActionsO, \probmat)$ with a cost function $\reward$ and a set $G \subseteq S$, one can restrict $M$ and $\reward$ to the set $S^P$ of proper states. We obtain a new MDP $M^P = (S^P, \Sigma, \probmat^P)$ with cost function $\reward^P$ such that $\probmat^P$ and $\reward^P$ are the restriction of $\probmat$ and $\reward$ to $S^P$. Moreover, for all state $s \in S^P$, we let $\ActionsO_s^P = \{\actionO \in \enabledactions \mid \suc(s, \actionO) \subseteq S^P\}$ be the set of enabled actions in $s$. Note that by construction of $S^P$, we have $\ActionsO_s^P \neq \emptyset$ for all $s \in S^P$, showing that $M^P$ is  $\ActionsO$\textit-non-blocking. To avoid a change of notation, in the sequel of this subsection, we make the assumption that each state of $M$ is proper.

\paragraph{Strategy iteration algorithm.}
The strategy iteration algorithm for SSP, named \textsc{SSP\_StrategyIteration}, is given in Algorithm~\ref{algo:SSPSI}\footnote{If the expected truncated sum has to be maximized, the cost function is restricted to the strictly negative real numbers and $\arg\min$ is replaced by $\arg\max$ in line 4.}. This algorithm is applied under the typical assumption that all cycles in the underlying graph of $M$ have strictly positive cost~\cite{bertsekas1996neuro}. This assumption holds in our case by definition of the cost function $\reward$.
The algorithm starts with an arbitrary proper strategy $\lambda_0$, that can be easily computed with the algorithm of~\cite{DBLP:conf/concur/Alfaro99}, and improves it until an optimal strategy is found. The expected truncated sum $v_n$ of the current strategy $\lambda_n$ is computed by solving the system of linear equations in line 3, and used to improve the strategy (if possible) at each state. Note that the strategy $\lambda_n$ is improved at a state $s$ to an action $\sigma \in \enabledactions$ only if the new expected truncated sum is strictly smaller than the expected truncated sum of the action $\lambda_n(s)$, i.e. only if $\lambda_n(s) \not\in \underset{\sigma \in \enabledactions}{\arg\min}(\reward(s,\actionO) + \underset{s'\in S}{\sum}\probmat(s, \actionO, s')\cdot v_n(s'))$. If no improvement is possible for any state, an optimal strategy is found and the algorithm terminates in line 7. Otherwise, it restarts by solving the new equation system, tries to improve the strategy using the new values computed, and so on.

\begin{algorithm}[t]
\caption{\textsc{SSP\_StrategyIteration}$($MDP $M$, Strictly positive cost function $\reward$, Goal states $G)$}
\begin{algorithmic}[1] \label{algo:SSPSI}
\STATE $n := 0, \lambda_n :=$ \textsc{InitialProperStrategy}$(M, G)$
\REPEAT
	\STATE Obtain $v_n$ by solving 
	\begin{equation*} 
	\hspace{-9.8cm}	 \reward_{\lambda_n} + (\probmat_{\lambda_n} - \id)v_n = 0
 	\end{equation*}
	\STATE $\widehat{\enabledactions} := \underset{\actionO \in \ActionsO_s}{\arg \min} (\reward(s,\actionO) + \underset{s'\in S}{\sum}\probmat(s, \actionO, s')\cdot v_n(s')), \forall s \in S$
	\STATE Choose $\lambda_{n+1}$ such that $\lambda_{n+1}(s) \in \widehat{\enabledactions}, \forall s \in S$, setting $\lambda_{n+1}(s) := \lambda_n(s)$ if possible.
	\STATE $n := n+1$
\UNTIL{$\lambda_{n} = \lambda_{n-1}$}
\STATE return $(\lambda_{n-1}, v_{n-1})$
\end{algorithmic}
\end{algorithm}

\subsection{Expected mean-payoff}
We now consider the strategy iteration algorithm for the EMP problem~\cite{veinott1966finding,putermanMDP} (see Algorithm~\ref{algo:HV}\footnote{\label{fn:MPmax}If the expected mean-payoff has to be maximized, one has to replace $\arg\min$ by $\arg\max$ in lines 4 and 7.}). More details can be found in \cite{putermanMDP}. The algorithm starts with an arbitrary strategy $\lambda_0$ (here any initial strategy is appropriate). By solving the equation system of line 3, we obtain the gain value $g_n$ and bias value $b_n$ of the current strategy $\lambda_n$. The gain corresponds to the expected mean-payoff, while the bias can be interpreted as the expected total difference between the cost and the expected mean-payoff. 
The computed gain value is then used to locally improve the strategy (lines 4-5). If such an improvement is not possible for any state, the bias value is used to locally improve the strategy (lines 6-7). By improving the strategy with the bias value, only actions that also optimize the gain can be considered (see set $\widehat{\enabledactions}$). 
Finally, the algorithm stops at line $10$ as soon as none of those improvements can be made for any state, and returns the optimal strategy $\lambda_{n-1}$ along with its associated expected mean-payoff. 

\begin{algorithm}[t]
\caption{\textsc{EMP\_StrategyIteration}$($MDP $M$, Cost function $\reward)$}
\begin{algorithmic}[1] \label{algo:HV}
\STATE $n := 0, \lambda_n :=$ \textsc{InitialStrategy}$(M)$
\REPEAT
	\STATE Obtain $g_n$ and $b_n$ by solving 

	\begin{equation*} 
		\hspace{-8.8cm}\begin{cases}
			\ (\probmat_{\lambda_n} - \id)g_n = 0 \\
			\ \reward_{\lambda_n} - g_n + (\probmat_{\lambda_n} - \id)b_n = 0\\
			\ \probmat_{\lambda_n}^*b_n = 0
		\end{cases} 
 	\end{equation*}

	\STATE $\widehat{\enabledactions} := \underset{\actionO \in \ActionsO_s}{\arg \min} \underset{s'\in S}{\sum} \probmat(s, \actionO, s')\cdot g_n(s'), \forall s \in S$
	\STATE Choose $\lambda_{n+1}$ such that $\lambda_{n+1}(s) \in \widehat{\enabledactions}, \forall s \in S$, setting $\lambda_{n+1}(s) := \lambda_n(s)$ if possible.
	\IF{$\lambda_{n+1} = \lambda_n$}
		\STATE Choose $\lambda_{n+1}$ such that $\lambda_{n+1}(s) \in \underset{\sigma \in \widehat{\enabledactions}}{\arg \min} (\reward(s,\actionO) + \underset{s'\in S}{\sum}\probmat(s, \actionO, s')\cdot b_n(s')), \forall s \in S$, \\setting $\lambda_{n+1}(s) = \lambda_n(s)$ if possible.
	\ENDIF
	\STATE $n := n+1$
\UNTIL{$\lambda_{n} = \lambda_{n-1}$}
\STATE return $(\lambda_{n-1}, g_{n-1})$
\end{algorithmic}
\end{algorithm}

\section{Symblicit approach} \label{sec:symblicit}

Explicit-state representations of MDPs like sparse-matrices are often limited to the available memory. When treating MDPs with large state spaces, using explicit representations quickly becomes unfeasible. Moreover, the linear systems of large MDPs are in general hard to solve. Symbolic representations with \textit{(Multi-terminal) Binary Decision Diagrams ((MT)BDDs)} are then an alternative solution. A \textit{BDD}~\cite{DBLP:journals/tc/Bryant86} is a data structure that permits to compactly represent boolean functions of $n$ boolean variables, i.e. $\{0,1\}^n \rightarrow \{0,1\}$. An \textit{MTBDD}~\cite{DBLP:journals/fmsd/FujitaMY97} is a generalization of a BDD used to represent functions of $n$ boolean variables, i.e. $\{0,1\}^n \rightarrow V$, where $V$ is a finite set. A symblicit algorithm for the EMP problem has been studied in~\cite{DBLP:conf/qest/WimmerBBHCHDT10}. It combines symbolic techniques based on (MT)BDDs with explicit representations and often leads to a good trade-off between execution time and memory consumption. 

In this section, we recall the symblicit algorithm proposed in~\cite{DBLP:conf/qest/WimmerBBHCHDT10} for solving the EMP problem on MDPs. However, our description is more general to suit also for the SSP problem. We first talk about bisimulation lumping, a technique used by this symblicit algorithm to reduce the state space of the models it works on.

\subsection{Bisimulation lumping}
The \textit{bisimulation lumping} technique~\cite{kemeny1960jl,DBLP:journals/iandc/LarsenS91,buchholz1994exact} applies to Markov chains. It consists in gathering certain states of an MC which behave equivalently according to the class of properties under consideration. For the expected truncated sum and the expected mean-payoff, the following definition of equivalence of two states can be used. Let $(S, \probmat)$ be an MC and $\reward : S \rightarrow \R$ be a cost function on $S$. 
Let $\sim$ be an equivalence relation on $S$ and $S_{\sim}$ be the induced partition. We call \textit{block} of $S_{\sim}$ any equivalence class of $\sim$.
We say that $\sim$ is a \textit{bisimulation} if for all $s,t \in S$ such that $s \sim t$, we have $\reward(s) = \reward(t)$ and $\probmat(s,C) = \probmat(t,C)$ for all block $C \in S_{\sim}$, where $\probmat(s,C) = \sum_{s'\in C}\probmat(s,s')$. 

Let $(S, \probmat)$ be an MC with cost function $\reward$, and $\sim$ be a bisimulation on $S$. The \textit{bisimulation quotient} is the MC $(S_{\sim}, \probmat_{\sim})$ such that $\probmat_{\sim}(C,C') = \probmat(s,C')$, where $s \in C$ and $C,C' \in S_{\sim}$. The cost function $\reward_{\sim} : S_{\sim} \rightarrow \R$ is transferred to the quotient such that $\reward_{\sim}(C) = \reward(s)$, where $s \in C$ and $C \in S_{\sim}$. The quotient is thus a minimized model equivalent to the original one for our purpose, since it satisfies properties like expected truncated sum and expected mean-payoff as the original model~\cite{DBLP:journals/iandc/BaierKHW05}. Usually, we are interested in the unique \textit{largest} bisimulation, denoted $\equivlump$, which leads to the smallest bisimulation quotient $(S_{\equivlump}, \probmat_{\equivlump})$.

Algorithm \textsc{Lump}~\cite{DBLP:journals/ipl/DerisaviHS03} (see Algorithm~\ref{algo:lump}) describes how to compute the partition induced by the largest bisimulation. This algorithm is based on Paige and Tarjan's algorithm for computing bisimilarity of labeled transition systems~\cite{DBLP:journals/siamcomp/PaigeT87}.

\begin{algorithm}[t]
\caption{\textsc{Lump}$($MC $M$, Cost function $\reward)$}
\begin{algorithmic}[1] \label{algo:lump}
\STATE $P := \textsc{InitialPartition}(M, \reward)$
\STATE $L := P$
\WHILE{$L \neq \emptyset$}
	\STATE $C := \textsc{Pop}(L)$
	\STATE $P_\textnormal{new} := \emptyset$
	\FORALL{$B \in P$}
		\STATE $\{B_1,\dots,B_k\} := \textsc{SplitBlock}(B, C)$
		\STATE $P_\textnormal{new} := P_\textnormal{new} \cup \{B_1,\dots,B_k\}$
		\STATE $B_l :=$ some block in $\{B_1,\dots,B_k\}$
		\STATE  $L := L \cup \{B_1,\dots,B_k\}\diff B_l$
	\ENDFOR
	\STATE $P := P_{\textnormal{new}}$
\ENDWHILE
\STATE return $P$
\end{algorithmic}
\end{algorithm}

For a given MC $M = (S, \probmat)$ with cost function $\reward$, Algorithm \textsc{Lump} first computes the initial partition $P$ such that for all $s, t \in S$, $s$ and $t$ belongs to the same block of $P$ iff \reward$(s)$ = \reward$(t)$. The algorithm holds a list $L$ of potential splitters of $P$, where a \textit{splitter} of $P$ is a set $C \subseteq S$ such that $\exists B \in P, \exists s,s' \in B$ such that \probmat$(s, C)$ $\neq$ \probmat$(s',C)$. Initially, this list $L$ contains the blocks of the initial partition $P$. Then, while $L$ is non empty, the algorithm takes a splitter $C$ from $L$ and refines each block of the partition according to $C$. Algorithm \textsc{SplitBlock} splits a block $B$ into non empty sub-blocks $B_1,\cdots,B_k$ according to the probability of reaching the splitter $C$, i.e. for all $s,s' \in B$, we have $s,s' \in B_i$ for some $i$ iff $\probmat(s,C) = \probmat(s',C)$. The block $B$ is then replaced in $P$ by the computed sub-blocks $B_1, \cdots, B_k$. 
Finally, we add to $L$ the sub-blocks $B_1,\cdots,B_k$, but one which can be omitted since its power of splitting other blocks is maintained by the remaining sub-blocks~\cite{DBLP:journals/ipl/DerisaviHS03}. In general, we prefer to omit the largest sub-block since it might be the most costly to process as potential splitter. 
The algorithm terminates when the list $L$ is empty, which means that the partition is refined w.r.t. all potential splitters, i.e. $P$ is the partition induced by the largest bisimulation $\equivlump$.

\subsection{Symblicit algorithm} 
The algorithmic basis of the symblicit approach is the strategy iteration algorithm (see Algorithm~\ref{algo:SSPSI} for the SSP and Algorithm~\ref{algo:HV} for the EMP). In addition, once a strategy $\lambda_n$ is fixed for the MDP, Algorithm \textsc{Lump} is applied on the induced MC in order to reduce its size and to produce its bisimulation quotient. The system of linear equations is then solved for the quotient, and the computed value functions are used to improve the strategy for each individual state of the MDP. 

The symblicit algorithm is described in Algorithm~\textsc{Symblicit} (see Algorithm~\ref{algo:symblicit}). Note that in line $1$, the initial strategy $\lambda_0$ is selected arbitrarily for the EMP, while it has to be a proper strategy in case of SSP. 
It combines symbolic\footnote{We use calligraphic style for symbols denoting a symbolic representation.} and explicit representations of data manipulated by the underlying algorithm as follows. The MDP $\symbMDP$, the cost function $\symbReward$, the strategies $\lambda_n$, the induced MCs $\symbMC$ with cost functions $\symbRewardMC$, and the set $\symbGoal$ of goal states for the SSP, are symbolically represented. Therefore, the lumping procedure is applied on symbolic MCs and produces a symbolic representation of the bisimulation quotient $\symbQuotient$ and associated cost function $\symbRewardQuotient$ (line 4). However, since solving linear systems is more efficient using an explicit representation of the transition matrix, the computed bisimulation quotient is converted to a sparse matrix representation (line 5). The quotient being in general much smaller than the original model, there is no memory issues by storing it explicitly. The linear system is thus solved on the explicit quotient. The computed value functions $x_n$ (corresponding to $v_n$ for the SPP, and $g_n$ and $b_n$ for the EMP) are then converted into symbolic representations $\symbValue_n$, and transferred back to the original MDP (line 7). Finally, the update of the strategy is performed symbolically.

\begin{algorithm}[t]
\caption{\textsc{Symblicit}$(\textnormal{MDP} \symbMDP, [\textnormal{Strictly positive}] \textnormal{ cost function} \ \symbReward[, \textnormal{Goal states } \symbGoal])$}
\begin{algorithmic}[1] \label{algo:symblicit}
\STATE $n := 0, \lambda_n :=$ \textsc{InitialStrategy}$(\symbMDP[, \symbGoal])$
\REPEAT
	\STATE $(\symbMC, \symbRewardMC) :=$ \textsc{InducedMCAndCost}$(\symbMDP, \symbReward, \lambda_n)$
	\STATE $(\symbQuotient, \symbRewardQuotient) :=$ \textsc{Lump}$(\symbMC, \symbRewardMC)$
	\STATE $(\explQuotient, \explRewardQuotient) :=$ \textsc{Explicit}$(\symbQuotient, \symbRewardQuotient)$
	\STATE $x_n :=$ \textsc{SolveLinearSystem}$(\explQuotient, \explRewardQuotient)$			
	\STATE $\symbValue_n :=$ \textsc{Symbolic}$(x_n)$
	\STATE $\lambda_{n+1} :=$ \textsc{ImproveStrategy}$(\symbMDP, \lambda_n, \symbValue_n)$	
	\STATE $n := n+1$
\UNTIL{$\lambda_{n} = \lambda_{n-1}$}
\STATE return $(\lambda_{n-1}, \symbValue_{n-1})$
\end{algorithmic}
\end{algorithm}

\medskip
In~\cite{DBLP:conf/qest/WimmerBBHCHDT10}, the intermediate symbolic representations use (MT)BDDs. In the sequel, we introduce a new data structure extended from antichains, called \textit{pseudo-antichains}, and we show how it can be used (instead of (MT)BBDs) to solve the SSP and EMP problems for monotonic MDPs under well-chosen assumptions.

\section{Pseudo-antichains}\label{sec:pa}

In this section, we introduce the notion of pseudo-antichains. We start by recalling properties on antichains.
Let $(S, \preceq)$ be a semilattice. We have the next classical properties on antichains \cite{DBLP:journals/fmsd/FiliotJR11}:

\begin{proposition} \label{prop:opantichains}
Let $\alpha_1, \alpha_2 \subseteq S$ be two antichains and $s \in S$. Then:
\begin{itemize}
\item $s \in\  \antclos \alpha_1$ iff $\exists a \in \alpha_1 \cdot s \preceq a$
\item  $\antclos\alpha_1$ $\cup$ $\antclos\alpha_2 =\ \antclos\lceil \alpha_1 \cup \alpha_2\rceil$
\item  $\antclos\alpha_1$ $\cap$ $\antclos\alpha_2 =\ \antclos\lceil \alpha_1 \sqcap \alpha_2\rceil$, 
where $\alpha_1 \sqcap \alpha_2 \defeq \{a_1 \sqcap a_2 \mid a_1 \in \alpha_1,  a_2 \in \alpha_2 \}$
\item $\antclos\alpha_1 \subseteq\ \antclos\alpha_2$ iff $\forall a_1 \in \alpha_1 \cdot \exists a_2 \in \alpha_2 \cdot a_1 \preceq a_2$
\end{itemize}
\end{proposition}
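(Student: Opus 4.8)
The plan is to prove the four bullet points of Proposition~\ref{prop:opantichains} in the order given, since the later items can lean on the earlier ones. Each statement is an equivalence or an equation between closed sets, so the natural strategy is a direct unfolding of the definitions of $\antclos$ and $\sqcap$ together with the semilattice axioms. For the set equalities (bullets two and three) I would prove both inclusions; for the two \emph{iff} statements (bullets one and four) I would argue each direction separately. Throughout I will use that $s \sqcap s'$ is the greatest lower bound, i.e.\ $s \sqcap s' \preceq s$, $s \sqcap s' \preceq s'$, and $t \preceq s \wedge t \preceq s' \Rightarrow t \preceq s \sqcap s'$, and that $\preceq$ is transitive.

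\textbf{Bullet one and the union.}
First I would dispatch $s \in \antclos\alpha_1 \Leftrightarrow \exists a \in \alpha_1 \cdot s \preceq a$: this is essentially the definition of closure $\antclos L = \{s' \mid \exists s \in L \cdot s' \preceq s\}$ specialised to $L = \alpha_1$, so it is immediate. For the union, $\antclos\alpha_1 \cup \antclos\alpha_2 = \antclos\lceil \alpha_1 \cup \alpha_2\rceil$, I would note that closure distributes over union in general, $\antclos(\alpha_1 \cup \alpha_2) = \antclos\alpha_1 \cup \antclos\alpha_2$ (an element lies below some element of $\alpha_1 \cup \alpha_2$ iff it lies below an element of $\alpha_1$ or of $\alpha_2$); it then remains to observe $\antclos(\alpha_1 \cup \alpha_2) = \antclos\lceil \alpha_1 \cup \alpha_2\rceil$, which holds because discarding non-maximal elements does not change the downward closure (every element is below a maximal one).

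\textbf{Bullet three, the intersection.}
This is the crux of the proposition and the step I expect to cost the most. I want $\antclos\alpha_1 \cap \antclos\alpha_2 = \antclos\lceil \alpha_1 \sqcap \alpha_2\rceil$. By the same maximal-elements remark as above it suffices to show $\antclos\alpha_1 \cap \antclos\alpha_2 = \antclos(\alpha_1 \sqcap \alpha_2)$. For the inclusion $\supseteq$, take $s \preceq a_1 \sqcap a_2$ with $a_1 \in \alpha_1$, $a_2 \in \alpha_2$; then $s \preceq a_1 \sqcap a_2 \preceq a_1$ and likewise $s \preceq a_2$, so $s$ lies in both closures. The interesting direction is $\subseteq$: if $s \in \antclos\alpha_1 \cap \antclos\alpha_2$ then by bullet one there are $a_1 \in \alpha_1$ with $s \preceq a_1$ and $a_2 \in \alpha_2$ with $s \preceq a_2$; since $s$ is a common lower bound of $a_1$ and $a_2$, the greatest-lower-bound property gives $s \preceq a_1 \sqcap a_2$, whence $s \in \antclos(\alpha_1 \sqcap \alpha_2)$. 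This is exactly where the semilattice hypothesis is indispensable, as it guarantees $a_1 \sqcap a_2$ exists and captures the common lower bounds.

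\textbf{Bullet four.}
Finally, for $\antclos\alpha_1 \subseteq \antclos\alpha_2 \Leftrightarrow \forall a_1 \in \alpha_1 \cdot \exists a_2 \in \alpha_2 \cdot a_1 \preceq a_2$, the ($\Leftarrow$) direction follows by transitivity: any $s \preceq a_1 \preceq a_2$ lies in $\antclos\alpha_2$. For ($\Rightarrow$), each $a_1 \in \alpha_1$ satisfies $a_1 \preceq a_1$, so $a_1 \in \antclos\alpha_1 \subseteq \antclos\alpha_2$, and bullet one then yields the required $a_2 \in \alpha_2$ with $a_1 \preceq a_2$. The only subtlety worth flagging is that bullet four as stated gives a characterisation of inclusion that does not require $\alpha_2$ to be an antichain for soundness, but the antichain hypothesis keeps the witnesses canonical; no extra work is needed beyond the definitions already invoked.
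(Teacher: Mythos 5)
Your proof is correct, and there is nothing in the paper to compare it against: the paper states this proposition without proof, quoting it as a set of classical properties from the antichain literature~\cite{DBLP:journals/fmsd/FiliotJR11}, and your direct unfolding of the definitions --- reflexivity and transitivity for the first, second and fourth bullets, the greatest-lower-bound property of $\sqcap$ for the third --- is exactly the standard argument one would give. The only hypothesis you use implicitly is finiteness of $S$ (needed so that every element of $\alpha_1\cup\alpha_2$, respectively $\alpha_1\sqcap\alpha_2$, lies below a maximal one when you pass from the set to its antichain $\lceil\cdot\rceil$ of maximal elements), and this is covered by the paper's standing assumption that $S$ is finite.
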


\noindent
For convenience, when $\alpha_1$ and $\alpha_2$ are antichains, we use notation $\alpha_1$ $\antunion$ $\alpha_2$ (resp. $\alpha_1$ $\antinter$ $\alpha_2$) for the antichain $\lceil\antclos\alpha_1$ $\cup \antclos\alpha_2\rceil$ (resp. $\lceil\antclos\alpha_1$ $\cap \antclos\alpha_2\rceil$).

Let $L_1, L_2 \subseteq S$ be two closed sets. Unlike the union or intersection, the difference $L_1\diff L_2$ is not necessarily a closed set. There is thus a need for a new structure that ``represents" $L_1 \diff L_2$ in a compact way, as antichains compactly represent closed sets. In this aim, in the next two sections, we begin by introducing the notion of pseudo-element, and we then introduce the notion of pseudo-antichain. We also describe some properties that can be used in algorithms using pseudo-antichains.

\subsection{Pseudo-elements and pseudo-closures} 
A \textit{pseudo-element} is a couple $(x, \alpha)$ where $x \in S$ and $\alpha \subseteq S$ is an antichain such that $x\not\in$ $\antclos\alpha$. The \textit{pseudo-closure} of a pseudo-element $(x, \alpha)$, denoted by $\pseudclos(x, \alpha)$, is the set $\pseudclos(x, \alpha) = \{s \in S \mid s \preceq x$ and $s \not\in$ $\antclos\alpha\} =$ $\antclos\{x\}\diff\!\!\antclos\alpha$. Notice that $\pseudclos(x, \alpha)$ is non empty since $x\not\in$ $\antclos\alpha$ by definition of a pseudo-element.
The following example illustrates the notion of pseudo-closure of pseudo-elements.

\begin{example} \label{ex:pseudo-element}
Let $\nat^2_{\leq3}$ be the set of pairs of natural numbers in $[0, 3]$ and let $\preceq$ be a partial order on $\nat^2_{\leq3}$ such that $(n_1, n_1') \preceq (n_2, n_2')$ iff $n_1 \leq n_2$ and $n_1' \leq n_2'$. Then, $(\nat^2_{\leq3}, \preceq)$ is a complete lattice with least upper bound $\sqcup$ such that $(n_1, n_1') \sqcup (n_2, n_2') = (\max(n_1, n_2), \max(n_1', n_2'))$, and greatest lower bound $\sqcap$ such that $(n_1, n_1') \sqcap (n_2, n_2') = (\min(n_1, n_2), \min(n_1', n_2'))$.
With $x = (3, 2)$ and $\alpha = \{(2, 1), (0, 2)\}$, the pseudo-closure of the pseudo-element $(x, \alpha)$ is the set $\pseudclos(x, \alpha) = \{(3, 2), (3, 1), (3, 0), (2, 2), (1, 2)\} =\ \antclos\{x\}\setminus\!\!\antclos\alpha$ (see Figure~\ref{fig:pa}).
\end{example}

\begin{figure}[t]
\centering
\begin{minipage}{0.47\linewidth}
     \centering
          \begin{tikzpicture}[-,>=stealth',auto,node distance=2.5cm,bend angle=45,scale=0.68,font=\footnotesize]
	     \fill[light-gray] (-3.7,3) --  (-1, 5.7) -- (2.7, 2) -- (2, 2.7) -- (1, 1.7) -- (-1, 3.7) -- (-2.7, 2);
	    \node (0)  at (0, 0) {$(0,0)$};
	    \node (1)  at (-1,1) {$(1, 0)$};
	    \node (2)  at (1,1) {$(0, 1)$};
	    \node (3)  at (0,2) {$(1, 1)$};
	    \node (4)  at (-2,2) {$(2, 0)$};
	    \node (5)  at (2,2) {$(0, 2)$};
	    \node (6)  at (-3, 3) {$(3, 0)$};
	    \node (7)  at (-1,3) {$(2, 1)$};
	    \node (8)  at (1,3) {$(1, 2)$};
	    \node (9)  at (3,3) {$(0, 3)$};
	    \node (10)  at (-2,4) {$(3, 1)$};
	    \node (11)  at (0,4) {$(2, 2)$};
	    \node (12)  at (2,4) {$(1, 3)$};
	    \node (13)  at (-1,5) {$(3, 2)$};
	    \node (14)  at (1,5) {$(2, 3)$};
	    \node (15)  at (0,6) {$(3, 3)$};
	   \path
		(0) edge[dashed] (1)
	         (0) edge[dashed] (2)
		(1) edge[dashed] (3)
		(1) edge[dashed] (4)
		(2) edge[dashed] (3)
		(2) edge[dashed] (5)
		(3) edge[dashed] (7)
		(3) edge[dashed] (8)
		(4) edge[dashed] (6)
		(4) edge[dashed] (7)
		(5) edge[dashed] (8)
		(5) edge[dashed] (9)
		(6) edge[dashed] (10)
		(7) edge[dashed] (10)
		(7) edge[dashed] (11)
		(8) edge[dashed] (11)
		(8) edge[dashed] (12)
		(9) edge[dashed] (12)
		(10) edge[dashed] (13)
		(11) edge[dashed] (13)
		(11) edge[dashed] (14)
		(12) edge[dashed] (14)
		(13) edge[dashed] (15)
		(14) edge[dashed] (15)
	
		(-3.7,3) edge (-1, 5.7)
		(2.7,2) edge (-1, 5.7)
	
		(-2.7, 2) edge (-1, 3.7)
		(1, 1.7) edge (-1, 3.7)
		(2, 2.7) edge (1, 1.7)
		(2.7, 2) edge (2, 2.7);
	\end{tikzpicture}
	\vspace*{-0.2cm}
	\caption{Pseudo-closure of a pseudo-element over $(\nat^2_{\leq3}, \preceq)$.}
	\vspace*{-0.4cm}	
	\label{fig:pa}
   \end{minipage}
   \hfill
   \begin{minipage}{0.47\linewidth}
	\centering
               \begin{tikzpicture}[-,>=stealth',auto,node distance=2.5cm,bend angle=45,scale=0.83,font=\footnotesize]
		     \fill[light-gray2] (-2.5,-0.5) -- (-0.5,1.5) -- (2,-1) -- (1, -2) -- (0.5,-1.5) -- (0,-2) -- (-1,-1) -- (-1.5,-1.5);
		     \fill[light-gray] (-2,-1) -- (-0.5,0.5) -- (1.5,-1.5) -- (1, -2) -- (-0.5,-0.5) -- (-1.5,-1.5);
		    \node (0)  at (-0.5, 1.8) {$y$};
		    \node (0)  at (0.5, -1.2) {$b_2$};
		    \node (0)  at (-0.5, 0.8) {$x$};
		    \node (0)  at (-1, -0.7) {$b_1$};
		    \node (0)  at (-0.5, -0.2) {$a$};
		    \node (0)  at (-0.5, 1.5) {$\bullet$};
		    \node (0)  at (-0.5, 0.5) {$\bullet$};
		    \node (0)  at (0.5, -1.5) {$\bullet$};    
		    \node (0)  at (-1, -1) {$\bullet$};    
		    \node (0)  at (-0.5, -0.5) {$\bullet$};
		
		    \path
			(-0.5,1.5) edge (-2.5,-0.5)
			(-0.5,1.5) edge (2,-1)
			(-0.5,0.5) edge (-2,-1)
			(-0.5,-0.5) edge (-1.5,-1.5)
			(-0.5,0.5) edge (1.5,-1.5)
			(-0.5,-0.5) edge (1,-2)
			(-1,-1) edge (0,-2)
			(0.5,-1.5) edge (0,-2)
			
			(-2.5,-0.5) edge[ultra thin,dashed] (0.5,2.5)
			(-2,-1) edge[ultra thin,dashed] (1,2)
			(-1.5,-1.5) edge[ultra thin,dashed] (1.5,1.5)
			(-1,-2) edge[ultra thin,dashed] (2,1)
			(-0.5,-2.5) edge[ultra thin,dashed] (2.5,0.5)
			(0,-3) edge[ultra thin,dashed] (3, 0)
			(0,-3) edge[ultra thin,dashed] (-3,0)
			(0.5,-2.5) edge[ultra thin,dashed] (-2.5,0.5)
			(1,-2) edge[ultra thin,dashed] (-2,1)
			(1.5,-1.5) edge[ultra thin,dashed] (-1.5,1.5)
			(2,-1) edge[ultra thin,dashed] (-1,2)
			(2.5,-0.5) edge[ultra thin,dashed] (-0.5,2.5);
	\end{tikzpicture}
	\vspace*{-0.2cm}
	\caption[Inclusion of pseudo-closures of pseudo-elements]{Inclusion of pseudo-closures of pseudo-elements.}
	\vspace*{-0.4cm}
	\label{fig:pa_inclusion}
     \end{minipage}
\end{figure}
 
There may exist two pseudo-elements $(x, \alpha)$ and $(y, \beta)$ such that  $\pseudclos(x, \alpha) =$  $\pseudclos(y, \beta)$. We say that the pseudo-element $(x, \alpha)$ is in \textit{canonical form} if $\forall a \in \alpha \cdot a \preceq x$. The next proposition and its corollary show that the canonical form is unique. Notice that for all pseudo-element $(x, \alpha)$, there exists a pseudo-element in canonical form $(y, \beta)$ such that $\pseudclos(x, \alpha) =$ $\pseudclos(y, \beta)$: it is equal to $(x, \{x\}\ \antinter\ \alpha)$. We say that such a couple $(y, \beta)$ is the \textit{canonical representation} of $\pseudclos(x, \alpha)$.

\begin{proposition} \label{prop:inclusion}
Let $(x, \alpha)$ and $(y, \beta)$ be two pseudo-elements. Then $\pseudclos(x, \alpha) \subseteq$ $\pseudclos(y, \beta)$ iff $x \preceq y$ and $\forall b \in \beta \cdot b \sqcap x \in$ $\antclos\alpha$.
\end{proposition}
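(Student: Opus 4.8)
The plan is to prove both directions of the equivalence by unfolding the pseudo-closure into its set-difference form $\pseudclos(x,\alpha) = \antclos\{x\}\diff\antclos\alpha$ and then arguing pointwise. Throughout I would lean on three facts: the membership characterization from Proposition~\ref{prop:opantichains} (that $s \in \antclos\gamma$ iff $\exists c \in \gamma \cdot s \preceq c$), the observation that every set of the form $\antclos\gamma$ is downward closed, and the universal property of the greatest lower bound $\sqcap$.

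For the forward direction, assume $\pseudclos(x,\alpha) \subseteq \pseudclos(y,\beta)$. First I would establish $x \preceq y$: since $x \not\in \antclos\alpha$ by the very definition of a pseudo-element and trivially $x \preceq x$, the element $x$ itself lies in $\pseudclos(x,\alpha)$; the inclusion then forces $x \in \pseudclos(y,\beta) \subseteq \antclos\{y\}$, i.e. $x \preceq y$. Next, to obtain $b \sqcap x \in \antclos\alpha$ for every $b \in \beta$, I would argue by contradiction. The element $s = b \sqcap x$ satisfies $s \preceq x$ and $s \preceq b$, hence $s \in \antclos\beta$; were $s$ not in $\antclos\alpha$, it would belong to $\pseudclos(x,\alpha)$ and therefore to $\pseudclos(y,\beta) = \antclos\{y\}\diff\antclos\beta$, contradicting $s \in \antclos\beta$.

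For the backward direction, assume $x \preceq y$ and $b \sqcap x \in \antclos\alpha$ for all $b \in \beta$, and take an arbitrary $s \in \pseudclos(x,\alpha)$, so that $s \preceq x$ and $s \not\in \antclos\alpha$. The relation $s \preceq y$ is immediate from $s \preceq x \preceq y$ by transitivity, so it remains only to show $s \not\in \antclos\beta$, which I would again handle by contradiction: if $s \in \antclos\beta$, then $s \preceq b$ for some $b \in \beta$, so $s$ is a lower bound of both $b$ and $x$ and hence $s \preceq b \sqcap x$; since $b \sqcap x \in \antclos\alpha$ and $\antclos\alpha$ is downward closed, this yields $s \in \antclos\alpha$, contradicting the choice of $s$. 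Thus $s \in \pseudclos(y,\beta)$, establishing the inclusion.

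The main obstacle — really the only non-mechanical point — is recognizing that $b \sqcap x$ is the correct witness in both directions, and in particular exploiting the defining property of the greatest lower bound in the semilattice: the implication ``$s \preceq b$ and $s \preceq x$ imply $s \preceq b \sqcap x$'' is exactly what transfers membership between $\antclos\beta$ and $\antclos\alpha$. Once this witness is identified, everything else reduces to downward-closedness of closures and the membership characterization recalled in Proposition~\ref{prop:opantichains}.
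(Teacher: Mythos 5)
Your proof is correct and follows essentially the same route as the paper's: both directions hinge on the witness $s = b \sqcap x$, with the forward direction using $x \in\ \pseudclos(x,\alpha)$ to get $x \preceq y$ and the backward direction using the greatest-lower-bound property $s \preceq b,\ s \preceq x \Rightarrow s \preceq b \sqcap x$ together with downward-closedness of $\antclos\alpha$. The only difference is cosmetic (you phrase two steps as contradictions where the paper uses the contrapositive), so there is nothing to add.
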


\begin{proof}
We prove the two implications:
\begin{itemize}
\item \underline{$\Rightarrow$:} Suppose that $\pseudclos(x, \alpha) \subseteq\ \pseudclos(y, \beta)$ and let us prove that $x \preceq y$ and $\forall b \in \beta \cdot b \sqcap x \in$ $\antclos\alpha$.
As $x \in$ $\pseudclos(x, \alpha) \subseteq$ $\pseudclos (y, \beta)$, then $x \preceq y$. Consider $s = b \sqcap x$ for some $b \in \beta$. We have $s \not\in$ $\pseudclos(y, \beta)$ because $s \preceq b$ and thus $s \not\in$ $\pseudclos(x, \alpha)$. As $s \preceq x$, it follows that $s \in$ $\antclos \alpha$.

\item \underline{$\Leftarrow$:} Suppose that $x \preceq y$ and $\forall b \in \beta \cdot b \sqcap x \in$ $\antclos\alpha$. Let us prove that $\forall s \in$ $\pseudclos(x, \alpha)$, we have $s \in$ $\pseudclos(y, \beta)$. As $s \preceq x$, and $x \preceq y$ by hypothesis, we have $s \preceq y$. Suppose that $s \in$ $\antclos \beta$, that is $s \preceq b$, for some $b \in \beta$. As $s \preceq x$, we have $s \preceq b \sqcap x$ and thus $s \in$ $\antclos \alpha$ by hypothesis. This is impossible since $s \in$ $\pseudclos(x, \alpha)$. Therefore, $s \not\in$ $\antclos\beta$, and thus $s \in $ $\pseudclos(y, \beta)$.
\end{itemize}
\qed\end{proof}

The following example illustrates Proposition~\ref{prop:inclusion}. 

\begin{example}
Let $(S, \preceq)$ be a semilattice and let $(x, \{a\})$ and $(y, \{b_1,b_2\})$, with $x, y, a, b_1, b_2 \in S$, be two pseudo-elements as depicted in Figure~\ref{fig:pa_inclusion}. The pseudo-closure of $(x, \{a\})$ is depicted in dark gray, whereas the pseudo-closure of $(y, \{b_1,b_2\})$ is depicted in (light and dark) gray. We have $x \preceq y$, $b_1\sqcap x = b_1 \in$ $\antclos\{a\}$ and $b_2\sqcap x = b_2 \in$ $\antclos\{a\}$. Therefore $\pseudclos(x, \{a\}) \subseteq$ $\pseudclos(y, \{b_1,b_2\})$.
\end{example}

The next corollary is a direct consequence of the previous proposition.

\begin{corollary}
Let $(x, \alpha)$ and $(y, \beta)$ be two pseudo-elements in canonical form. Then $\pseudclos(x, \alpha)$ $=$ $\pseudclos(y, \beta)$ iff $x = y$ and $\alpha = \beta$.
\end{corollary}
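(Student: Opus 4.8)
The plan is to derive the corollary directly from Proposition~\ref{prop:inclusion} by exploiting that set equality is equivalent to mutual inclusion. The reverse implication is trivial: if $x = y$ and $\alpha = \beta$, then $\pseudclos(x, \alpha)$ and $\pseudclos(y, \beta)$ are literally the same set. So all the content lives in the forward direction, and there I would feed both inclusions into the already-proved proposition.

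Assume $\pseudclos(x, \alpha) = \pseudclos(y, \beta)$. First I would apply Proposition~\ref{prop:inclusion} to the inclusion $\pseudclos(x, \alpha) \subseteq \pseudclos(y, \beta)$ to obtain $x \preceq y$ together with the condition $b \sqcap x \in \antclos\alpha$ for all $b \in \beta$, and symmetrically apply it to $\pseudclos(y, \beta) \subseteq \pseudclos(x, \alpha)$ to obtain $y \preceq x$ together with $a \sqcap y \in \antclos\beta$ for all $a \in \alpha$. Antisymmetry of $\preceq$ then gives $x = y$ immediately.

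It remains to prove $\alpha = \beta$, and this is where the canonical-form hypothesis does the real work. After substituting $x = y$, the two conditions read $b \sqcap x \in \antclos\alpha$ for every $b \in \beta$ and $a \sqcap x \in \antclos\beta$ for every $a \in \alpha$. Because both pseudo-elements are in canonical form, every $a \in \alpha$ satisfies $a \preceq x$ and every $b \in \beta$ satisfies $b \preceq x$, so the meets simplify: $b \sqcap x = b$ and $a \sqcap x = a$. Hence the conditions collapse to $\beta \subseteq \antclos\alpha$ and $\alpha \subseteq \antclos\beta$, which yield $\antclos\beta \subseteq \antclos\alpha$ and $\antclos\alpha \subseteq \antclos\beta$, i.e. $\antclos\alpha = \antclos\beta$.

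Finally, I would invoke the uniqueness of the antichain representation of a closed set: since $\alpha$ and $\beta$ are antichains, $\lceil \antclos\alpha\rceil = \alpha$ and $\lceil \antclos\beta\rceil = \beta$, so $\antclos\alpha = \antclos\beta$ forces $\alpha = \beta$. I do not expect a genuine obstacle anywhere; the only point requiring care is remembering to use the canonical form to turn $\sqcap x$ into the identity, which is precisely what upgrades the two membership conditions from pointwise constraints into the antichain inclusions needed to conclude.
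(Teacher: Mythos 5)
Your proposal is correct and follows essentially the same route as the paper's proof: apply Proposition~\ref{prop:inclusion} to both inclusions, use antisymmetry of $\preceq$ to get $x = y$, exploit the canonical form to simplify $b \sqcap x = b$ and $a \sqcap x = a$, deduce $\antclos\alpha = \antclos\beta$, and conclude $\alpha = \beta$ by uniqueness of the antichain representation of a closed set. No gaps.
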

\begin{proof}
We only prove $\pseudclos(x, \alpha) =$ $\pseudclos(y, \beta)  \Rightarrow x = y$ and $\alpha = \beta$, the other implication being trivial.

Since $\pseudclos(x, \alpha) =$ $\pseudclos(y, \beta)$, we have $\pseudclos(x, \alpha) \subseteq$ $\pseudclos(y, \beta)$ and $\pseudclos(y, \beta) \subseteq$ $\pseudclos(x, \alpha)$. By Proposition~\ref{prop:inclusion}, from $\pseudclos(x, \alpha) \subseteq$ $\pseudclos(y, \beta)$, we know that $x \preceq y$ and $\forall b \in \beta \cdot b \sqcap x \in$ $\antclos\alpha$, and from $\pseudclos(y, \beta) \subseteq$ $\pseudclos(x, \alpha)$, we know that $y \preceq x$ and $\forall a \in \alpha \cdot a \sqcap y \in$ $\antclos\beta$. As $x \preceq y$ and $y \preceq x$, we thus have $x = y$. 

Since $x = y$, by definition of canonical form of pseudo-elements, we have $\forall b \in \beta \cdot b \sqcap x = b \sqcap y = b \in$ $\antclos\alpha$ and $\forall a \in \alpha \cdot a \sqcap y = a \in$ $\antclos\beta$. It follows by Proposition~\ref{prop:opantichains} that $\antclos\alpha \subseteq$ $\antclos\beta$ and $\antclos\beta \subseteq$ $\antclos\alpha$, i.e. $\antclos\alpha =$ $\antclos\beta$. Since antichains are canonical representations of closed sets, we finally get $\alpha = \beta$, which terminates the proof.
\qed\end{proof}

\subsection{Pseudo-antichains}
We are now ready to introduce the new structure of pseudo-antichain. 
A \textit{pseudo-antichain} $A$ is a finite set of pseudo-elements, that is $A = \{(x_i, \alpha_i) \ | \ i \in I\}$ with $I$ finite.
The \textit{pseudo-closure} $\pseudclos A$ of $A$ is defined as the set $\pseudclos A = \bigcup_{i\in I}\pseudclos(x_i, \alpha_i)$. Let $(x_i, \alpha_i)$, $(x_j, \alpha_j) \in A$. We have the two following observations:
\begin{enumerate}
\item If $x_i = x_j$, then $(x_i, \alpha_i)$ and $(x_j, \alpha_j)$ can be replaced in $A$ by the pseudo-element $(x_i, \alpha_i\ \antinter\ \alpha_j)$.
\item If $\pseudclos(x_i, \alpha_i) \subseteq$ $\pseudclos(x_j, \alpha_j)$, then $(x_i, \alpha_i)$ can be removed from $A$.
\end{enumerate}
From these observations, we say that a pseudo-antichain $A = \{(x_i, \alpha_i) \ | \ i \in I\}$ is \textit{simplified} if $\forall i \cdot (x_i, \alpha_i)$ is in canonical form, and $\forall i \neq j \cdot x_i \neq x_j$ and $\pseudclos(x_i, \alpha_i) \not\subseteq$ $\pseudclos(x_j, \alpha_j)$. 
Notice that two distinct pseudo-antichains $A$ and $B$ can have the same pseudo-closure $\pseudclos A =$ $\pseudclos B$ even if they are simplified. We thus say that $A$ is a \textit{\PA-representation}\footnote{``\PA-representation" means pseudo-antichain based representation.} of $\pseudclos A$ (without saying that it is a canonical representation), and that $\pseudclos A$ is \textit{\PA-represented} by $A$. For efficiency purposes, our algorithms always work on simplified pseudo-antichains.

Any antichain $\alpha$ can be seen as the pseudo-antichain $A = \{(x, \emptyset) \mid x \in \alpha\}$. Furthermore, notice that \textit{any} set $X$ can be represented by the pseudo-antichain $A = \{(x, \alpha_x) \ | \ x \in X\}$, with  $\alpha_x = \lceil \{s \in S \ | \ s \preceq x \textnormal{ and } s \neq x\}\rceil$. Indeed $\pseudclos (x, \alpha_x) =  \{x\}$ for all $x$, and thus $X =$ $\pseudclos A$. 

The interest of pseudo-antichains is that given two antichains $\alpha$ and $\beta$, the difference $\antclos\alpha \diff\!\!\antclos\beta$ is \PA-represented by the pseudo-antichain $\{(x, \beta) \ | \ x \in \alpha\}$. 

\begin{lemma}\label{lem:diff} 
Let $\alpha, \beta \subseteq S$ be two antichains. Then
$\antclos\alpha \diff\!\!\antclos\beta =\ \pseudclos\{(x, \beta) \ | \ x \in \alpha\}$.
\end{lemma}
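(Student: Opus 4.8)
The plan is to reduce the claimed identity to two elementary facts: that the closure of an antichain is the union of the closures of its individual elements, and that set difference distributes over union. First I would record that
\begin{equation*}
\antclos\alpha = \bigcup_{x \in \alpha}\antclos\{x\},
\end{equation*}
which is immediate from the definition of the closure operator (equivalently, from the first item of Proposition~\ref{prop:opantichains}): a state $s$ lies in $\antclos\alpha$ iff $s \preceq a$ for some $a \in \alpha$, i.e. iff $s \in \antclos\{x\}$ for some $x \in \alpha$.

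Next I would push the subtraction of $\antclos\beta$ inside this union. Since set difference distributes over unions, we have $\bigl(\bigcup_{x \in \alpha}\antclos\{x\}\bigr)\diff\antclos\beta = \bigcup_{x \in \alpha}\bigl(\antclos\{x\}\diff\antclos\beta\bigr)$. By the very definition of the pseudo-closure of a pseudo-element, each term equals $\antclos\{x\}\diff\antclos\beta = \pseudclos(x,\beta)$, and by the definition of the pseudo-closure of a pseudo-antichain, $\bigcup_{x\in\alpha}\pseudclos(x,\beta) = \pseudclos\{(x,\beta) \mid x \in \alpha\}$. Chaining these equalities yields exactly the statement, so the entire argument is a single short set-theoretic computation.

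The only point requiring care is that a pseudo-element $(x,\beta)$ is by definition required to satisfy $x \not\in \antclos\beta$, whereas nothing forbids some $x \in \alpha$ from lying in $\antclos\beta$. I expect this to be the one delicate spot rather than a genuine obstacle, and I would dispatch it by observing that such an $x$ is harmless: if $x \in \antclos\beta$ then $\antclos\{x\} \subseteq \antclos\beta$, so the term $\antclos\{x\}\diff\antclos\beta$ is empty and contributes nothing to the union. Hence one may either discard these indices (replacing $\alpha$ by $\{x \in \alpha \mid x \not\in\antclos\beta\}$, which changes neither side) or simply read $\pseudclos(x,\beta)$ as $\emptyset$ in the degenerate case; either reading makes every pseudo-element in the collection legitimate and leaves both sides of the identity unchanged.
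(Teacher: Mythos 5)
Your proof is correct, and it is exactly the direct definitional computation the paper has in mind: the paper states Lemma~\ref{lem:diff} without proof, treating it as immediate from $\antclos\alpha = \bigcup_{x\in\alpha}\antclos\{x\}$, distributivity of set difference over union, and the definitions of $\pseudclos(x,\beta)$ and of the pseudo-closure of a pseudo-antichain. Your handling of the degenerate case $x \in \antclos\beta$ (dropping such pairs, or reading their contribution as empty) is also the right convention and matches the paper's own ``abuse of notation'' remark following Proposition~\ref{prop:BooleanOp}.
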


The next proposition indicates how to compute pseudo-closures of pseudo-elements w.r.t. the union, intersection and difference operations. This method can be extended for computing the union, intersection and difference of pseudo-closures of pseudo-antichains, by using the classical properties from set theory like
$X \diff (Y \cup Z) =  X \diff Y \cap X \diff Z$. From the algorithmic point of view, it is important to note that the computations only manipulate (pseudo-)antichains instead of their (pseudo-)closure.

\begin{proposition} 
\label{prop:BooleanOp}
Let $(x, \alpha), (y, \beta)$ be two pseudo-elements.
Then:
\begin{itemize}
\item $\pseudclos(x, \alpha)$ $\cup$ $\pseudclos(y, \beta) =$ $\pseudclos\{(x, \alpha), (y, \beta)\}$
\item $\pseudclos(x, \alpha)$ $\cap$ $\pseudclos(y, \beta) =$ $\pseudclos\{(x\sqcap y, \alpha$ $\antunion$ $\beta)\}$
\item $\pseudclos(x, \alpha)$ $\diff$ $\pseudclos(y, \beta) =$ $\pseudclos ( \{(x, \{y\}$ $\antunion$ $\alpha) \} \cup \{ (x\sqcap b, \alpha) \mid b \in \beta \} )$
\end{itemize}
\end{proposition}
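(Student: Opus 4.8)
The plan is to prove each of the three identities by unfolding the definition $\pseudclos(x,\alpha) = \antclos\{x\} \diff \antclos\alpha$ and reducing everything to set-theoretic manipulations of closed sets, using the antichain operations of Proposition~\ref{prop:opantichains}. Throughout I would keep in mind that $\antunion$ and $\antinter$ are exactly the antichain-level operations realizing $\cup$ and $\cap$ of closures, so that replacing a union/intersection of closed sets by a single closure is always legitimate.

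For the union, the statement is immediate from the definition of the pseudo-closure of a pseudo-antichain: $\pseudclos\{(x,\alpha),(y,\beta)\} = \pseudclos(x,\alpha) \cup \pseudclos(y,\beta)$ by definition, so there is essentially nothing to prove. For the intersection, I would compute
\begin{equation*}
\pseudclos(x,\alpha) \cap \pseudclos(y,\beta) = (\antclos\{x\} \diff \antclos\alpha) \cap (\antclos\{y\} \diff \antclos\beta),
\end{equation*}
and then use the distributivity of $\cap$ over $\diff$ to rewrite this as $(\antclos\{x\} \cap \antclos\{y\}) \diff (\antclos\alpha \cup \antclos\beta)$. Now $\antclos\{x\} \cap \antclos\{y\} = \antclos\{x \sqcap y\}$ (the set of common lower bounds is exactly the downward closure of the meet, using that $(S,\preceq)$ is a semilattice), and $\antclos\alpha \cup \antclos\beta = \antclos(\alpha \antunion \beta)$ by definition of $\antunion$. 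This yields $\antclos\{x\sqcap y\} \diff \antclos(\alpha \antunion \beta) = \pseudclos(x\sqcap y, \alpha \antunion \beta)$, as required. One routine point to check is that this is a bona fide pseudo-element, i.e. that $x \sqcap y \notin \antclos(\alpha \antunion \beta)$ whenever the intersection is nonempty; if the intersection is empty the pseudo-closure on the right should also be read as empty, and I would note this degenerate case explicitly.

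The difference is the substantive case. I would start from
\begin{equation*}
\pseudclos(x,\alpha) \diff \pseudclos(y,\beta) = (\antclos\{x\} \diff \antclos\alpha) \diff (\antclos\{y\} \diff \antclos\beta).
\end{equation*}
The key set-theoretic identity is $A \diff (B \diff C) = (A \diff B) \cup (A \cap C)$, applied with $A = \antclos\{x\} \diff \antclos\alpha$, $B = \antclos\{y\}$, $C = \antclos\beta$. The first term $A \diff \antclos\{y\} = \antclos\{x\} \diff (\antclos\alpha \cup \antclos\{y\}) = \pseudclos(x, \{y\} \antunion \alpha)$, matching the first pseudo-element in the claimed decomposition. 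The second term $A \cap \antclos\beta = (\antclos\{x\} \cap \antclos\beta) \diff \antclos\alpha$; here $\antclos\{x\} \cap \antclos\beta = \antclos\{x\} \cap \antclos\beta$ distributes over the union $\antclos\beta = \bigcup_{b\in\beta}\antclos\{b\}$, giving $\bigcup_{b\in\beta}(\antclos\{x \sqcap b\} \diff \antclos\alpha) = \bigcup_{b\in\beta}\pseudclos(x\sqcap b,\alpha)$, which is the $\pseudclos$ of the set $\{(x\sqcap b,\alpha)\mid b\in\beta\}$. Taking the union of the two contributions gives exactly the stated pseudo-antichain.

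The main obstacle I anticipate is purely bookkeeping rather than conceptual: every $\antclos\{x\sqcap b\}\diff\antclos\alpha$ term, and the term $\pseudclos(x,\{y\}\antunion\alpha)$, must be a genuine pseudo-element (the first coordinate must avoid the closure of the second), and several terms may be empty and should simply be dropped from the pseudo-antichain. I would therefore handle these emptiness/well-formedness side conditions carefully, noting that the pseudo-closure of a pseudo-antichain is unaffected by discarding components whose pseudo-closure is empty, so the identities hold at the level of pseudo-closures even when some listed couples degenerate. The essential content is entirely captured by the two elementary Boolean identities for $\cap/\diff$ over $\diff$ together with the meet and $\antunion$ rules for closures from Proposition~\ref{prop:opantichains}.
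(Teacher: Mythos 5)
Your proof is correct, and every step checks out: the Boolean identities $(A \diff B) \cap (C \diff D) = (A \cap C) \diff (B \cup D)$ and $A \diff (B \diff C) = (A \diff B) \cup (A \cap C)$ are valid, the semilattice fact $\antclos\{x\} \cap \antclos\{y\} =\ \antclos\{x \sqcap y\}$ holds because $s \preceq x$ and $s \preceq y$ iff $s \preceq x \sqcap y$, and $\antclos(\alpha\ \antunion\ \beta) =\ \antclos\alpha\ \cup \antclos\beta$ is exactly what Proposition~\ref{prop:opantichains} provides. The route differs from the paper's in organization rather than substance: the paper argues elementwise --- a chain of pointwise equivalences for the intersection, and a double inclusion with a case analysis ($s \not\preceq y$ versus $s \in\ \antclos\beta$) for the difference --- whereas you work at the level of set algebra on closures, so that the paper's case split is absorbed into the identity $A \diff (B \diff C) = (A \diff B) \cup (A \cap C)$ and into distributing $\cap$ and $\diff$ over the union $\antclos\beta = \bigcup_{b \in \beta} \antclos\{b\}$. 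Both proofs arrive at the same decomposition into the same pseudo-elements; your version has the merit of isolating exactly where the semilattice structure enters (only through $\antclos\{x\}\ \cap \antclos\{y\} =\ \antclos\{x\sqcap y\}$) and of making the bookkeeping mechanical, while the paper's pointwise argument is more elementary and needs no named set-theoretic identities. Your handling of the degenerate couples --- dropping any listed pair whose first coordinate falls in the closure of its second coordinate, which leaves the pseudo-closure unchanged --- matches the remark the paper makes immediately after the proposition about its abuse of notation, so no gap remains there either.
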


Notice that there is an abuse of notation in the previous proposition. Indeed, the definition of a pseudo-element $(x,\alpha)$ requires that $x \not\in$ $\antclos\alpha$, whereas this condition could not be satisfied by couples like $(x\sqcap y, \alpha$ $\antunion$ $\beta)$, $(x, \{y\}$ $\antunion$ $\alpha)$ and $(x\sqcap b, \alpha) $ in the previous definition\footnote{This can be easily tested by Proposition~\ref{prop:opantichains}.}. When this happens, such a couple should not be added to the related pseudo-antichain. For instance, $\pseudclos(x, \alpha)$ $\cap$ $\pseudclos(y, \beta)$ is either equal to $\pseudclos\{(x\sqcap y, \alpha$ $\antunion$ $\beta)\}$ or to $\pseudclos\{\}$. 
Notice also that the pseudo-antichains computed in the previous proposition are not necessarily simplified. However, our algorithms implementing those operations always simplify the computed pseudo-antichains for the sake of efficiency.

\begin{proof}[of Proposition~\ref{prop:BooleanOp}]
We prove the three statements:
\begin{itemize}
\item $\pseudclos(x, \alpha)$ $\cup$ $\pseudclos(y, \beta) =$ $\pseudclos\{(x, \alpha), (y, \beta)\}$:\\
This result comes directly from the definition of pseudo-closure of pseudo-antichains.\\

\item $\pseudclos(x, \alpha)$ $\cap$ $\pseudclos(y, \beta) =$ $\pseudclos\{(x\sqcap y, \alpha$ $\antunion$ $\beta)\}$:
\begin{equation*}
\begin{split}
s \in\ \pseudclos(x, \alpha)\ \cap \pseudclos(y, \beta) &\Leftrightarrow s \preceq x, s \not\in\ \antclos\alpha \textnormal{ and } s \preceq y, s \not\in\ \antclos \beta\\
&\Leftrightarrow s \preceq x \sqcap y \textnormal{ and } s \not\in\ \antclos\alpha\ \cup\antclos\beta =\ \antclos(\alpha\ \antunion\ \beta) \textnormal{ (by Proposition~\ref{prop:opantichains})}\\
&\Leftrightarrow s \in\ \pseudclos\{(x\sqcap y, \alpha\ \antunion\ \beta)\}
\end{split}
\end{equation*}
\

\item $\pseudclos(x, \alpha)$ $\diff$ $\pseudclos(y, \beta) =$ $\pseudclos ( \{(x, \{y\}$ $\antunion$ $\alpha) \} \cup \{ (x\sqcap b, \alpha) \mid b \in \beta \} )$:\\
We prove the two inclusions:
\begin{enumerate}
\item \underline{$\subseteq:$} Let $s\in$ $\pseudclos(x, \alpha)\diff\!\!\pseudclos(y, \beta)$, i.e. $s\in$ $\pseudclos(x, \alpha)$ and $s\not\in$ $\pseudclos(y, \beta)$. Then, $s\preceq x$, $s \not\in$ $\antclos\alpha$ and ($s \not\preceq y$ or $s \in$ $\antclos\beta$). Thus, if $s\not\preceq y$, then $s \in$ $\pseudclos(x, \{y\}\ \antunion\ \alpha)$. Otherwise, $s \in$ $\antclos\beta$, i.e. $\exists b \in \beta$ such that $s \preceq b$. It follows that $s \in$ $\pseudclos(x\sqcap b, \alpha)$.

\item \underline{$\supseteq:$} Let $s \in$ $\pseudclos ( \{(x, \{y\}$ $\antunion$ $\alpha) \} \cup \{ (x\sqcap b, \alpha) \mid b \in \beta \} )$. 
Suppose first that $s \in$ $\pseudclos(x, \{y\}\ \antunion\ \alpha)$. Then $s\preceq x$, $s \not\preceq y$ and $s \not\in$ $\antclos\alpha$. We thus have $s\in$ $\pseudclos(x, \alpha)$ and $s\not\in$ $\pseudclos(y, \beta)$.
Suppose now that $\exists b \in \beta \cdot s \in$ $\pseudclos(x\sqcap b, \alpha)$. We have $s\preceq x$, $s \preceq b$ and $s \not\in$ $\antclos\alpha$. It follows that $s\in$ $\pseudclos(x, \alpha)$ and $s \in$ $\antclos\beta$, thus $s\not\in$ $\pseudclos(y, \beta)$.
\end{enumerate}
\end{itemize}
\qed\end{proof}

The following example illustrates the second and third statements of Proposition~\ref{prop:BooleanOp}.
\begin{example}
Let $(S, \preceq)$ be a lower semilattice and let $(x, \{a\})$ and $(y, \{b\})$, with $x, y, a, b \in S$, be two pseudo-elements as depicted in  Figure~\ref{fig:op_pa}. We have $\pseudclos(x, \{a\})$ $\cap$ $\pseudclos(y, \{b\}) =$ $\pseudclos(x\sqcap y, \{a,b\})$. We also have $\pseudclos(x, \{a\})$ $\setminus$ $\pseudclos(y, \{b\}) =$ $\pseudclos\{(x, \{y\}\antunion\{a\}), (x \sqcap b, \{a\})\} =$ $\pseudclos\{(x, \{y\}), (b, \{a\})\}$. Note that $(x, \{y\})$ and $(b, \{a\})$ are not in canonical form. The canonical representation of $\pseudclos(x, \{y\})$ (resp. $\pseudclos (b, \{a\})$) is given by $(x, \{x\sqcap y\})$ (resp. $(b, \{b\sqcap a\})$).
\end{example}
\begin{figure}[h]
\centering
\vspace*{-0.6cm}
  \begin{tikzpicture}[-,>=stealth',auto,node distance=2.5cm,bend angle=45,scale=0.85,font=\footnotesize]
    \fill[light-gray] (-2,-1) -- (-0.5, 0.5) --(1.5,-1.5) -- (1,-2) -- (0, -1) -- (-0.5,-1.5) -- (-1,-1) -- (-1.5,-1.5);
    \fill[light-gray] (4.5,-0.5) -- (6, 1) --(6.5,0.5) -- (5,-1);
    \fill[light-gray] (7,-1) -- (8, -2) --(7.5,-2.5) -- (6.5,-1.5);
    \node (0)  at (-1, 1.3) {$x$};
    \node (0)  at (-1, -0.7) {$a$};
    \node (0)  at (0.5, 1.8) {$y$};
    \node (0)  at (0, -0.7) {$b$};
    \node (0)  at (-0.03, 0.48) {$x\!\sqcap\! y$};
    \node (0)  at (-1, 1) {$\bullet$};
    \node (0)  at (-0.5, 0.5) {$\bullet$};
    \node (0)  at (-1, -1) {$\bullet$};
    \node (0)  at (0.5, 1.5) {$\bullet$};
    \node (0)  at (0, -1) {$\bullet$};

    \node (0)  at (6, 1.3) {$x$};
    \node (0)  at (6, -0.7) {$a$};
    \node (0)  at (7.5, 1.8) {$y$};
    \node (0)  at (7, -0.7) {$b$};
    \node (0)  at (6, 1) {$\bullet$};
    \node (0)  at (6, -1) {$\bullet$};
    \node (0)  at (7.5, 1.5) {$\bullet$};
    \node (0)  at (7, -1) {$\bullet$};
\path
	(-2.5,-0.5) edge (-1,1)
	(-1,1) edge (1.5,-1.5)
	(-1,-1) edge (-1.5,-1.5)
	(-1,-1) edge (0.5,-2.5)
	(0,-1) edge (-1,-2)
	(0,-1) edge (1,-2)
	(0.5,1.5) edge (2.5,-0.5)
	(0.5,1.5) edge (-2,-1)

	(4.5,-0.5) edge (6,1)
	(6,1) edge (8.5,-1.5)
	(6,-1) edge (5.5,-1.5)
	(6,-1) edge (7.5,-2.5)
	(7,-1) edge (6,-2)
	(7,-1) edge (8,-2)
	(7.5,1.5) edge (9.5,-0.5)
	(7.5,1.5) edge (5,-1)
	
	(-2.5,-0.5) edge[ultra thin,dashed] (0.5,2.5)
	(-2,-1) edge[ultra thin,dashed] (1,2)
	(-1.5,-1.5) edge[ultra thin,dashed] (1.5,1.5)
	(-1,-2) edge[ultra thin,dashed] (2,1)
	(-0.5,-2.5) edge[ultra thin,dashed] (2.5,0.5)
	(0,-3) edge[ultra thin,dashed] (3, 0)
	(0,-3) edge[ultra thin,dashed] (-3,0)
	(0.5,-2.5) edge[ultra thin,dashed] (-2.5,0.5)
	(1,-2) edge[ultra thin,dashed] (-2,1)
	(1.5,-1.5) edge[ultra thin,dashed] (-1.5,1.5)
	(2,-1) edge[ultra thin,dashed] (-1,2)
	(2.5,-0.5) edge[ultra thin,dashed] (-0.5,2.5)

	(4.5,-0.5) edge[ultra thin,dashed] (7.5,2.5)
	(5,-1) edge[ultra thin,dashed] (8,2)
	(5.5,-1.5) edge[ultra thin,dashed] (8.5,1.5)
	(6,-2) edge[ultra thin,dashed] (9,1)
	(6.5,-2.5) edge[ultra thin,dashed] (9.5,0.5)
	(7,-3) edge[ultra thin,dashed] (10,0)
	(7,-3) edge[ultra thin,dashed] (4,0)
	(7.5,-2.5) edge[ultra thin,dashed] (4.5,0.5)
	(8,-2) edge[ultra thin,dashed] (5,1)
	(8.5,-1.5) edge[ultra thin,dashed] (5.5,1.5)
	(9,-1) edge[ultra thin,dashed] (6,2)
	(9.5,-0.5) edge[ultra thin,dashed] (6.5,2.5);
\end{tikzpicture}
\vspace*{-0.2cm}
\caption{Intersection (left) and difference (right) of two pseudo-closures of pseudo-elements.}
\vspace*{-0.8cm}
\label{fig:op_pa}
\end{figure}
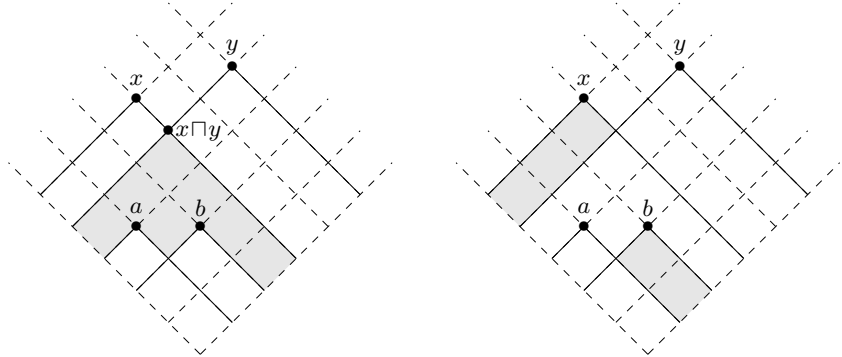
\section{Pseudo-antichain based algorithms}\label{sec:paalgo}
In this section, we propose a pseudo-antichain based version of the symblicit algorithm described in Section~\ref{sec:symblicit}  for solving the SSP and EMP problems for monotonic MDPs. In our approach, equivalence relations and induced partitions are symbolically represented so that each block is \PA-represented. The efficiency of this approach is thus directly linked to the number of blocks to represent, which explains why our algorithm always works with coarsest equivalence relations. It is also linked to the size of the pseudo-antichains representing the blocks of the partitions.

\subsection{Operator $\Pre_{\actionO,\actionI}$}
We begin by presenting an operator denoted $\Pre_{\actionO,\actionI}$ that is very useful for our algorithms.
Let $M_\preceq = (S, \ActionsO, \ActionsI, \edges, \distr)$ be a monotonic MDP equipped with a cost function $\reward : S \times \ActionsO \rightarrow \R$. 
Given $L \subseteq S$, $\actionO \in \ActionsO$ and $\actionI \in \ActionsI$, we denote by $\Pre_{\actionO,\actionI}(L)$ the set of states that reach $L$ by $\actionO,\actionI$ in $\monMDP$, that is 
$$\Pre_{\actionO,\actionI}(L) = \{s\in S \mid \edges(s,\actionO)(\actionI) \in L \}.$$ 
The elements of $\Pre_{\actionO,\actionI}(L)$ are called \textit{predecessors} of $L$ for $\actionO,\actionI$ in $\monMDP$. The following lemma is a direct consequence of the compatibility of $\preceq$.

\begin{lemma}\label{lem:closed}
For all closed set $L \subseteq S$, and all actions $\actionO \in \ActionsO, \actionI \in \ActionsI$, $\Pre_{\actionO,\actionI}(L)$ is closed.
\end{lemma}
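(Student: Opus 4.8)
The plan is to verify the closure condition directly from the definitions, using compatibility of $\preceq$ with $\edges$ together with the hypothesis that $L$ is closed. Concretely, I would fix a closed set $L \subseteq S$ and actions $\actionO \in \ActionsO$, $\actionI \in \ActionsI$, take an arbitrary $s' \in \Pre_{\actionO,\actionI}(L)$ and an arbitrary $s \preceq s'$, and show that $s \in \Pre_{\actionO,\actionI}(L)$, i.e. that $\edges(s,\actionO)(\actionI) \in L$. By the definition of $\Pre_{\actionO,\actionI}(L)$, the membership $s' \in \Pre_{\actionO,\actionI}(L)$ means that $\edges(s',\actionO)(\actionI)$ is defined and belongs to $L$; write $t' = \edges(s',\actionO)(\actionI) \in L$.

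The key step is to invoke compatibility in the right direction. Since $s \preceq s'$, compatibility of $\preceq$ with $\edges$, applied to the pair $(\actionO,\actionI)$ and to the successor $t'$ of $s'$, yields a state $t \in S$ with $\edges(s,\actionO)(\actionI) = t$ and $t \preceq t'$ (in particular $\actionO$ is enabled in $s$, consistently with Proposition~\ref{prop:As}). It then remains to observe that $t \in L$: indeed $t' \in L$, $t \preceq t'$, and $L$ is closed, so by definition of a closed set $t \in L$. Hence $\edges(s,\actionO)(\actionI) = t \in L$, which is precisely $s \in \Pre_{\actionO,\actionI}(L)$. As $s'$ and $s \preceq s'$ were arbitrary, this establishes that $\Pre_{\actionO,\actionI}(L)$ is closed.

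There is no genuine obstacle here; the statement is an immediate consequence of the definitions, which is why the lemma is announced as a direct consequence of compatibility. The only point deserving care is the \emph{direction} in which compatibility is applied: it transports a successor of the larger state $s'$ down to a dominated successor of the smaller state $s$, and this is exactly what lets us push the membership from $t'$ down to $t$ using closedness of $L$. If one tried to argue in the opposite direction it would not go through, so the proof should make explicit that the witness $t$ for $s$ satisfies $t \preceq t'$ rather than $t' \preceq t$.
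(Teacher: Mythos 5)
Your proof is correct and follows exactly the argument the paper intends: the paper omits the proof, calling the lemma a direct consequence of compatibility, and your write-up simply spells out that argument (compatibility transports the successor of $s'$ down to a dominated successor of $s$, and closedness of $L$ finishes the job). Your remark about the direction of compatibility is precisely the right point of care.
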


The next lemma indicates the behavior of the $\Pre_{\actionO,\actionI}$ operator under boolean operations. The second and last properties follow from the fact that $\monMDP$ is deterministic. 

\begin{lemma}\label{lem:opboolpre}
Let $L_1, L_2 \subseteq S$, $\actionO \in \ActionsO$ and $\actionI \in \ActionsI$. Then,
\begin{itemize}
	\item $\Pre_{\actionO,\actionI}(L_1 \cup L_2) = \Pre_{\actionO,\actionI}(L_1) \cup \Pre_{\actionO,\actionI}(L_2)$ 
	\item $\Pre_{\actionO,\actionI}(L_1 \cap L_2) = \Pre_{\actionO,\actionI}(L_1) \cap \Pre_{\actionO,\actionI}(L_2)$ 
	\item $\Pre_{\actionO,\actionI}(L_1 \diff L_2) = \Pre_{\actionO,\actionI}(L_1) \diff \Pre_{\actionO,\actionI}(L_2)$ 
\end{itemize}
\end{lemma}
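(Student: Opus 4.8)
The plan is to recognize $\Pre_{\actionO,\actionI}$ as the \emph{preimage} operator of a single partial function and then invoke the standard fact that preimages commute with all Boolean set operations. First I would fix $\actionO \in \ActionsO$ and $\actionI \in \ActionsI$ and observe that, since $\monMDP$ is deterministic and $\ActionsI$-complete, the assignment $s \mapsto \edges(s,\actionO)(\actionI)$ is a well-defined partial function $f : S \to S$ with domain $\domain(f) = \enabledstates$, the set of states in which $\actionO$ is enabled. Unwinding the definition, for every $L \subseteq S$ we then have $\Pre_{\actionO,\actionI}(L) = \{s \in \domain(f) \mid f(s) \in L\} = f^{-1}(L)$; in particular $\Pre_{\actionO,\actionI}(L) \subseteq \enabledstates$ regardless of $L$.

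Next I would verify each of the three identities by a short membership argument that pushes the Boolean connective through the single-valued condition $f(s) \in (\cdot)$. For the union, $s \in f^{-1}(L_1 \cup L_2)$ iff $s \in \domain(f)$ and $(f(s) \in L_1$ or $f(s) \in L_2)$, which distributes to $f^{-1}(L_1) \cup f^{-1}(L_2)$. For the intersection, the same distribution gives $f(s) \in L_1 \cap L_2$ iff $f(s) \in L_1$ and $f(s) \in L_2$, yielding $f^{-1}(L_1) \cap f^{-1}(L_2)$. For the difference, $f(s) \in L_1 \diff L_2$ iff $f(s) \in L_1$ and $f(s) \notin L_2$, which matches $f^{-1}(L_1) \diff f^{-1}(L_2)$; here one should note that on the minuend $f^{-1}(L_1)$ the condition $s \in \domain(f)$ already holds, so subtracting $f^{-1}(L_2)$ removes exactly those $s$ with $f(s) \in L_2$.

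Finally I would highlight where determinism is really used, since this is the only conceptual point rather than pure bookkeeping. The union identity survives even for multi-valued (existential) predecessors, but the intersection and difference identities genuinely require $f$ to be single-valued: if $\edges(s,\actionO)(\actionI)$ denoted a \emph{set} of successors, a state could have one successor in $L_1$ and a different successor in $L_2$, placing it in $\Pre_{\actionO,\actionI}(L_1) \cap \Pre_{\actionO,\actionI}(L_2)$ while no single successor lies in $L_1 \cap L_2$. Determinism of $\monMDP$ excludes this, which is precisely the remark flagged in the statement about the second and last properties. Partiality (states where $\actionO$ is disabled) causes no trouble, as such states lie in no $\Pre_{\actionO,\actionI}(L)$ and thus appear on neither side of any identity. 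I therefore expect no substantial obstacle: the argument is a one-line reduction to preimages, and the only care required is tracking $\domain(f)$ in the difference case.
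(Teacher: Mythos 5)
Your proof is correct and follows essentially the same route as the paper: the paper also argues element-wise, with the union case immediate and determinism of $\monMDP$ invoked exactly where you invoke single-valuedness, namely to identify the two successors $s_1 \in L_1$ and $s_2 \in L_2$ in the intersection (and, analogously, difference) case. Your reformulation of $\Pre_{\actionO,\actionI}$ as the preimage $f^{-1}$ of the partial function $s \mapsto \edges(s,\actionO)(\actionI)$ is just a cleaner packaging of the same argument, and your bookkeeping of $\domain(f)$ in the difference case is sound.
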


\begin{proof}
The first property is immediate. We only prove the second property, since the arguments are similar for the last one.
Let $s \in \Pre_{\actionO,\actionI}(L_1 \cap L_2)$, i.e. $\exists s' \in L_1\cap L_2$ such that $\edges(s, \actionO)(\actionI) = s'$. We thus have $s \in \Pre_{\actionO,\actionI}(L_1)$ and $s \in \Pre_{\actionO,\actionI}(L_2)$. 
Conversely let $s \in \Pre_{\actionO,\actionI}(L_1) \cap \Pre_{\actionO,\actionI}(L_2)$, i.e. $\exists s_1 \in L_1$ such that $\edges(s, \actionO)(\actionI) = s_1$, and $\exists s_2 \in L_2$ such that $\edges(s, \actionO)(\actionI) = s_2$. As $\monMDP$ is deterministic, we have $s_1 = s_2$ and thus $s \in \Pre_{\actionO,\actionI}(L_1 \cap L_2)$.
\qed\end{proof}

The next proposition indicates how to compute pseudo-antichains w.r.t. the $\Pre_{\actionO,\actionI}$ operator.
\begin{proposition}
\label{prop:Presigma}
Let $(x, \alpha)$ be a pseudo-element with $x \in S$ and $\alpha \subseteq S$. Let $A = \{(x_i, \alpha_i) \ | \ i \in I\}$ be a pseudo-antichain with $x_i \in S$ and $\alpha_i \subseteq S$ for all $i \in I$. Then, for all $\actionO \in \ActionsO$ and $\actionI \in \ActionsI$,
\begin{itemize}
	\item $\Pre_{\actionO,\actionI}(\pseudclos(x, \alpha)) = \bigcup_{x' \in \lceil \Pre_{\actionO,\actionI}(\antclos\ \{x\})\rceil}\pseudclos(x', \lceil\Pre_{\actionO,\actionI}(\antclos\alpha)\rceil)$
	\item $\Pre_{\actionO,\actionI}(\pseudclos A) = \bigcup_{i \in I}\Pre_{\actionO,\actionI}(\pseudclos (x_i, \alpha_i))$
\end{itemize}
\end{proposition}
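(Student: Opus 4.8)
The plan is to reduce both equalities to the boolean-operation lemmas already established for the $\Pre_{\actionO,\actionI}$ operator, so that no fresh induction on states is needed. The second equality is essentially a restatement of the definition of the pseudo-closure of a pseudo-antichain; the first is the substantive one and is obtained by rewriting the pseudo-closure of a single pseudo-element as a difference of two closed sets and pushing $\Pre_{\actionO,\actionI}$ through that difference via Lemma~\ref{lem:opboolpre}, then re-encoding the result with Lemma~\ref{lem:diff}.

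For the second item, recall that by definition $\pseudclos A = \bigcup_{i \in I}\pseudclos(x_i,\alpha_i)$. Since $I$ is finite and the first property of Lemma~\ref{lem:opboolpre} gives $\Pre_{\actionO,\actionI}(L_1 \cup L_2) = \Pre_{\actionO,\actionI}(L_1) \cup \Pre_{\actionO,\actionI}(L_2)$, a trivial induction on $|I|$ yields $\Pre_{\actionO,\actionI}(\pseudclos A) = \bigcup_{i \in I}\Pre_{\actionO,\actionI}(\pseudclos(x_i,\alpha_i))$, which is exactly the claim.

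For the first item, I would start from the identity $\pseudclos(x,\alpha) = \antclos\{x\}\diff\antclos\alpha$ recorded when pseudo-elements were introduced. Applying the difference property of Lemma~\ref{lem:opboolpre} gives $\Pre_{\actionO,\actionI}(\pseudclos(x,\alpha)) = \Pre_{\actionO,\actionI}(\antclos\{x\}) \diff \Pre_{\actionO,\actionI}(\antclos\alpha)$. Both $\antclos\{x\}$ and $\antclos\alpha$ are closed, so by Lemma~\ref{lem:closed} the two sets $\Pre_{\actionO,\actionI}(\antclos\{x\})$ and $\Pre_{\actionO,\actionI}(\antclos\alpha)$ are closed as well; hence each equals the closure of its own antichain of maximal elements, namely $\antclos\gamma$ with $\gamma = \lceil\Pre_{\actionO,\actionI}(\antclos\{x\})\rceil$ and $\antclos\delta$ with $\delta = \lceil\Pre_{\actionO,\actionI}(\antclos\alpha)\rceil$. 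Substituting and invoking Lemma~\ref{lem:diff} on the two antichains $\gamma$ and $\delta$ rewrites $\antclos\gamma\diff\antclos\delta$ as $\pseudclos\{(x',\delta)\mid x'\in\gamma\} = \bigcup_{x'\in\gamma}\pseudclos(x',\delta)$, which is precisely the right-hand side of the statement.

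The thing to watch — rather than a genuine obstacle — is the closedness bookkeeping: Lemma~\ref{lem:diff} requires its two arguments to be antichains, so one must first certify via Lemma~\ref{lem:closed} that the images $\Pre_{\actionO,\actionI}(\antclos\{x\})$ and $\Pre_{\actionO,\actionI}(\antclos\alpha)$ are closed before replacing them by $\antclos\gamma$ and $\antclos\delta$; without this the passage to maximal elements would be unjustified. Everything else is a direct chain of rewritings, and no case analysis on the order $\preceq$ is needed, since the determinism of $\monMDP$, already exploited to establish the difference property of Lemma~\ref{lem:opboolpre}, has absorbed that work.
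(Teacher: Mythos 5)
Your proof is correct and follows essentially the same route as the paper's: both reduce the first equality to the identity $\pseudclos(x,\alpha)=\antclos\{x\}\diff\antclos\alpha$, push $\Pre_{\actionO,\actionI}$ through the difference via Lemma~\ref{lem:opboolpre}, use Lemma~\ref{lem:closed} to replace the two closed images by the closures of their antichains of maximal elements, and conclude with Lemma~\ref{lem:diff}, while the second equality follows from the union property of Lemma~\ref{lem:opboolpre}. Your explicit remark on the closedness bookkeeping is exactly the middle step the paper also makes, so there is nothing to add.
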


\begin{proof}
For the first statement, we have $\Pre_{\actionO,\actionI}(\pseudclos(x, \alpha)) = \Pre_{\actionO,\actionI}(\antclos\{x\} \diff\!\antclos\alpha) = \Pre_{\actionO,\actionI}(\antclos\{x\}) \diff \Pre_{\actionO,\actionI}(\antclos\alpha)$ by definition of the pseudo-closure and by Lemma~\ref{lem:opboolpre}. The sets $\Pre_{\actionO,\actionI}(\antclos \{x\})$ and $\Pre_{\actionO,\actionI}(\antclos \alpha)$ are closed by Lemma~\ref{lem:closed} and thus respectively represented by the antichains $\lceil \Pre_{\actionO,\actionI}(\antclos\{x\})\rceil$ and $\lceil\Pre_{\actionO,\actionI}(\antclos\alpha)\rceil$. By Lemma~\ref{lem:diff} we get the first statement.

The second statement is a direct consequence of Lemma~\ref{lem:opboolpre}.
\qed\end{proof}

From Proposition~\ref{prop:Presigma}, we can efficiently compute pseudo-antichains w.r.t. the $\Pre_{\actionO,\actionI}$ operator if we have an efficient algorithm to compute antichains w.r.t. $\Pre_{\actionO,\actionI}$ (see the first statement). We make the following assumption that we can compute the predecessors of a closed set by only considering the antichain of its maximal elements. Together with Proposition~\ref{prop:Presigma}, it implies that the computation of $\Pre_{\actionO,\actionI}(\pseudclos A)$, for all pseudo-antichain $A$, does not need to treat the whole pseudo-closure $\pseudclos A$.

\begin{assumption} \label{atwo}
There exists an algorithm taking any state $x \in S$ in input and returning $\lceil \Pre_{\actionO,\actionI}(\antclos \{x\})\rceil$ as output.
\end{assumption}



\begin{remark}\label{rem:ass1}
Assumption~\ref{atwo} is a realistic and natural assumption when considering partially ordered state spaces. For instance, it holds for the two classes of MDPs considered in Section~\ref{sec:experiments} for which the given algorithm is straightforward. Assumptions in the same flavor are made in~\cite{DBLP:journals/tcs/FinkelS01} (see Definition 3.2).
\end{remark}

\subsection{Symbolic representations}  \label{subsec:symbrep}

Before giving a pseudo-antichain based algorithm for the symblicit approach of Section~\ref{sec:symblicit} (see Algorithm~\ref{algo:symblicit}), we detail in this section the kind of symbolic representations based on pseudo-antichains that we are going to use. Recall from Section~\ref{sec:pa} that PA-representations are not unique. For efficiency reasons, it will be necessary to work with PA-representations as \textit{compact} as possible, as suggested in the sequel.

\paragraph{Representation of the stochastic models.}
We begin with symbolic representations for the monotonic MDP $\monMDP = (S, \ActionsO, \ActionsI, \edges, \distr)$ and for the MC $\monMC = (S, \ActionsI, \edges_{\lambda}, \distr_{\lambda})$ induced by a strategy $\lambda$. 
For algorithmic purposes, in addition to Assumption~\ref{atwo}, we make the following assumption\footnote{Remark~\ref{rem:ass1} also holds for Assumption~\ref{aone}.} on $\monMDP$. 
\begin{assumption} \label{aone}
There exists an algorithm taking in input any state $s \in S$ and actions $\actionO \in \enabledactions, \actionI \in \ActionsI$, and returning as output $\edges(s, \actionO)(\actionI)$  and $\distr(s, \actionO)(\actionI)$.
\end{assumption}

By definition of $\monMDP$, the set $S$ of states is closed for $\preceq$ and can thus be canonically represented by the antichain $\lceil S \rceil$, and thus represented by the pseudo-antichain $\{(x,\emptyset) \mid x \in \lceil S \rceil\}$. In this way, it follows by Assumption~\ref{aone} that we have a \PA-representation of $\monMDP$, in the sense that $S$ is \PA-represented and we can compute $\edges(s, \actionO)(\actionI)$  and $\distr(s, \actionO)(\actionI)$ on demand.

Let $\lambda : S \rightarrow \ActionsO$ be a strategy on $\monMDP$ and $\monMC$
be the induced MC with cost function $\reward_\lambda$. 
We denote by $\equivstrat$ the equivalence relation on $S$ such that $s \equivstrat s'$ iff $\lambda(s) = \lambda(s')$. We denote by $\partstrat$ the induced partition of $S$. Given a block $B \in \partstrat$, we denote by $\lambda(B)$ the unique action $\lambda(s)$, for all $s \in B$. As any set can be represented by a pseudo-antichain, each block of $\partstrat$ is \PA-represented. Therefore by Assumption~\ref{aone}, we have a \PA-representation of $\monMC$. 

\paragraph{Representation of a subset of goal states.}
Recall that a subset $G \subseteq S$ of goal states is required for the SSP problem. Our algorithm will manipulate $G$ when computing the set of proper states. A natural assumption is to require that $G$ is \textit{closed} (like $S$), as it is the case for the two classes of monotonic MDPs studied in Section~\ref{sec:experiments}. Under this assumption, we have a compact representation of $G$ as the one proposed above for $S$.   
Otherwise, we take for $G$ any \PA-representation.

\paragraph{Representation for $\distr$ and $\reward$.}
For the needs of our algorithm, we introduce symbolic representations for $\distr_\lambda$ and $\reward_\lambda$.
Similarly to $\equivstrat$,  let $\equivdistr$ be the equivalence relation on $S$ such that $s \equivdistr s'$ iff $\distr_\lambda(s) = \distr_\lambda(s')$. We denote by $\partdistr$ the induced partition of $S$. Given a block $B \in \partdistr$, we denote by $\distr_\lambda(B)$  the unique probability distribution $\distr_\lambda(s)$, for all $s \in B$. We use similar notations for the equivalence relation $\equivreward$ on $S$ such that $s \equivreward s'$ iff $\reward_\lambda(s) = \reward_\lambda(s')$.
As any set can be represented by a pseudo-antichain, each block of $\partdistr$ and $\partreward$ is \PA-represented.

We will also need to use the next two equivalence relations. For each $\actionO \in \ActionsO$, we introduce the equivalence relation $\sim_{\distr,\sigma}$ on $S$ such that $s \sim_{\distr,\sigma} s'$ iff $\distr(s,\sigma) = \distr(s',\sigma)$.  Similarly, we introduce relation $\equivsigmarew$ such that $s \equivsigmarew s'$ iff $\reward(s, \actionO) = \reward(s', \actionO)$.  Recall that $\distr$ and $\reward$ are partial functions, there may thus exist one block in their corresponding relation gathering all states $s$ such that $\actionO \not\in \ActionsO_s$. Each block of the induced partitions $S_{\sim_{\distr,\sigma}}$ and $S_{\sim_{\reward,\sigma}}$  is PA-represented.

For the two classes of MDPs studied in Section~\ref{sec:experiments}, both functions $\distr$ and $\reward$ are independent of $S$. It follows that the previous equivalence relations have only one or two blocks, leading to compact symbolic representations of these relations. 

\bigskip
Now that the operator $\Pre_{\actionO,\actionI}$ and the used symbolic representations have been introduced, we come back to the different steps of the symblicit approach of Section~\ref{sec:symblicit} (see Algorithm~\ref{algo:symblicit}) and show how to derive a pseudo-antichain based algorithm. We will use Propositions~\ref{prop:opantichains}, \ref{prop:BooleanOp} and~\ref{prop:Presigma}, and Assumptions~\ref{atwo} and~\ref{aone}, for which we know that boolean and $\Pre_{\actionO,\actionI}$ operations can be performed efficiently on pseudo-closures of pseudo-antichains, by limiting the computations to the related pseudo-antichains. Whenever possible, we will work with partitions with few blocks whose PA-representation is compact. This aim will be reached for the two classes of monotonic MDPs studied in Section~\ref{sec:experiments}. 

\subsection{Initial strategy} 
Algorithm~\ref{algo:symblicit} needs an initial strategy $\lambda_0$ (line 1). This strategy can be selected arbitrarily among the set of strategies for the EMP, while it has to be a proper strategy for the SSP. We detail how to choose the initial strategy in these two quantitative settings.

\paragraph{Expected mean-payoff.} For the EMP, we propose an arbitrary initial strategy $\lambda_0$ with a compact \PA-representation for the induced MC $\monMCzero$. We know that $S$ is \PA-represented by $\{(x,\emptyset) \mid x \in \lceil S \rceil\}$, and that for all $s,s' \in S$ such that $s \preceq s'$, we have $\enabledactionsprime \subseteq \enabledactions$ (Proposition~\ref{prop:As}). This means that for $x \in \lceil S \rceil$ and  $\actionO \in \ActionsO_{x}$ we could choose $\lambda_0(s) = \actionO$ for all $s \in \antclos \{x\}$. However we must be careful with states that belong to $\antclos \{x\} ~\cap \antclos \{x'\}$ with $x, x' \in \lceil S \rceil$, $x \neq x'$. Therefore, let us impose an arbitrary ordering on $\lceil S \rceil\ = \{x_1, \ldots, x_n\}$, i.e. $x_1 < x_2 < \ldots < x_n$. We then define $\lambda_0$ arbitrarily on $\lceil S \rceil$ such that $\lambda_0(x_i) = \actionO_i$ for some $\actionO_i \in \enabledactionsxi$, and we extend it to all $s \in S$ by $\lambda_0(s) = \lambda_0(x)$ with $x = \min_i\{x_i \mid s \preceq x_i\}$. This makes sense in view of the previous remarks. Notice that given $\actionO \in \ActionsO$, the block $B$ of the partition $S_{\equivstratzero}$ such that $\lambda_0(B) = \actionO$ is \PA-represented by $\bigcup_i\{ (x_i,\alpha_i) \mid \alpha_i = \{x_1, \ldots, x_{i-1}\}, \lambda_0(x_i) = \actionO \}$.

\paragraph{Proper states.}
Before explaining how to compute an initial proper strategy $\lambda_0$ for the SSP, we need to propose a pseudo-antichain based version of the algorithm of~\cite{DBLP:conf/concur/Alfaro99} for computing the set $S^P$ of proper states. Recall from Section~\ref{subsec:SSP} that this algorithm is required for solving the SSP problem. 

Let $M_\preceq$ be a monotonic MDP and $G$ be a set of goal states. Recall that $S^P$ is computed as $S^P = \nu Y \cdot \mu X \cdot (\apre(Y,X) \lor \mathsf{G})$, such that for all state~$s$, 
$$s \models \apre(Y, X) \Leftrightarrow \exists \actionO \in \ActionsO_s, (\suc(s, \actionO) \subseteq Y \land \suc(s,\actionO) \cap X \neq \emptyset).$$
Our purpose is to define the set of states satisfying $\apre(Y, X)$ thanks to the operator $\Pre_{\actionO,\actionI}$. The difficulty is to limit the computations to strictly positive probabilities as required by the operator $\suc$. In this aim, given the equivalence relation $\sim_{\distr,\sigma}$ defined in Section~\ref{subsec:symbrep}, for each $\actionO \in \ActionsO$ and $\actionI \in \ActionsI$, we define ${\cal D}^{>0}_{\actionO,\actionI}$ being the set of blocks $\{D \in S_{\sim_{\distr,\sigma}} \mid  \distr(s,\actionO)(\actionI) > 0 \mbox{ with } s \in D\}$. For each $D \in {\cal D}^{>0}_{\actionO,\actionI}$, notice that $\actionO \in \ActionsO_s$ for all $s \in D$ (since $\distr(s,\actionO)$ is defined).
Given two sets $X, Y \subseteq S$, the set of states satisfying $\apre(Y, X)$ is equal to:
$$R(Y,X) = \bigcup_{\actionO \in \ActionsO} \bigcup_{D \in S_{\sim_{\distr,\sigma}}} \Big( \bigcap_{\begin{smallmatrix}\actionI \in \ActionsI\\D \in {\cal D}^{>0}_{\actionO,\actionI}\end{smallmatrix}}(\Pre_{\actionO,\actionI}(Y) \cap D) \cap \bigcup_{\begin{smallmatrix}\actionI \in \ActionsI\\D \in {\cal D}^{>0}_{\actionO,\actionI}\end{smallmatrix}} (\Pre_{\actionO,\actionI}(X) \cap D) \Big)$$

\begin{lemma}
For all $X, Y \subseteq S$ and $s \in S$, $s \models \apre(Y, X) \Leftrightarrow s \in R(Y,X)$.
\end{lemma}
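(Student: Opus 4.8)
The plan is to establish the equivalence pointwise, fixing a state $s \in S$ and unfolding both sides into conditions on an action $\actionO \in \ActionsO$ and the stochastic actions $\actionI \in \ActionsI$, then matching them through the identity
\[
\probmat(s,\actionO)(s') = \sum_{\substack{\actionI \in \ActionsI \\ \edges(s,\actionO)(\actionI)=s'}}\distr(s,\actionO)(\actionI)
\]
linking the two definitions of an MDP. The first thing I would record is the consequence $\suc(s,\actionO) = \{\edges(s,\actionO)(\actionI) \mid \actionI \in \support(\distr(s,\actionO))\}$: indeed $s' \in \suc(s,\actionO) = \support(\probmat(s,\actionO))$ holds exactly when some $\actionI$ with $\distr(s,\actionO)(\actionI) > 0$ satisfies $\edges(s,\actionO)(\actionI) = s'$.

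Next I would set up the dictionary around the block $D$. For a fixed $\actionO$, the state $s$ lies in a unique block $D$ of the partition $S_{\sim_{\distr,\sigma}}$, and since every state of $D$ carries the same distribution $\distr(\cdot,\actionO)$, the block-level condition $D \in {\cal D}^{>0}_{\actionO,\actionI}$ is, for this $s$, equivalent to $\distr(s,\actionO)(\actionI) > 0$, i.e. to $\actionI \in \support(\distr(s,\actionO))$; hence the index set $\{\actionI \mid D \in {\cal D}^{>0}_{\actionO,\actionI}\}$ occurring in $R(Y,X)$ is precisely $\support(\distr(s,\actionO))$. I would also observe that every factor is intersected with $D$, so membership of $s$ forces $s \in D$ and only the block containing $s$ can contribute; the other blocks are irrelevant. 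Unfolding $\Pre_{\actionO,\actionI}(L) = \{s \mid \edges(s,\actionO)(\actionI) \in L\}$, the intersection factor then reads ``$\edges(s,\actionO)(\actionI) \in Y$ for all $\actionI \in \support(\distr(s,\actionO))$'', which by the first observation is exactly $\suc(s,\actionO) \subseteq Y$, while the union factor reads ``$\edges(s,\actionO)(\actionI) \in X$ for some such $\actionI$'', i.e. $\suc(s,\actionO) \cap X \neq \emptyset$. The outer union over $\actionO$ supplies the existential quantifier, and it is automatically confined to enabled actions: the union factor can hold only if $\support(\distr(s,\actionO)) \neq \emptyset$, forcing $\distr(s,\actionO)$ to be defined and thus $\actionO \in \ActionsO_s$. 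Reading this in both directions yields $s \in R(Y,X) \Leftrightarrow \exists \actionO \in \ActionsO_s \cdot (\suc(s,\actionO) \subseteq Y \wedge \suc(s,\actionO) \cap X \neq \emptyset)$, which is the definition of $s \models \apre(Y,X)$.

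The step I expect to demand the most care is the bookkeeping around $D$ and the empty-index conventions. One must check that a block $D$ lying in no ${\cal D}^{>0}_{\actionO,\actionI}$ (in particular the block gathering the states where $\actionO$ is not enabled) contributes nothing to $s$: its union factor is an empty union, hence $\emptyset$, so such blocks cannot witness $s \in R(Y,X)$. One must also confirm that the ``$>0$'' filtering built into ${\cal D}^{>0}_{\actionO,\actionI}$ reproduces the support of $\distr(s,\actionO)$ rather than all of $\ActionsI$, which is exactly what ties the intersection and union factors to $\suc(s,\actionO)$. Everything else is a direct unfolding of the definitions of $\Pre_{\actionO,\actionI}$, $\suc$, and the $\probmat$/$\edges$/$\distr$ correspondence.
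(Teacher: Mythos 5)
Your proof is correct and follows essentially the same route as the paper's: both unfold $\apre(Y,X)$ and $R(Y,X)$ pointwise via the correspondence $\probmat(s,\actionO,s') > 0 \Leftrightarrow \exists \actionI$ with $\distr(s,\actionO)(\actionI) > 0$ and $\edges(s,\actionO)(\actionI) = s'$, identifying, for the block $D$ containing $s$, the index set $\{\actionI \mid D \in {\cal D}^{>0}_{\actionO,\actionI}\}$ with $\support(\distr(s,\actionO))$ so that the intersection factor becomes $\suc(s,\actionO) \subseteq Y$ and the union factor becomes $\suc(s,\actionO) \cap X \neq \emptyset$. Your bookkeeping is in fact slightly more explicit than the paper's (empty index sets, the block of states where $\actionO$ is disabled, and the observation that only the block containing $s$ can contribute), but these are refinements of the same argument rather than a different approach.
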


\begin{proof}
Let $s \models \apre(Y, X)$. Then there exists $\actionO \in \ActionsO_s$ such that $\suc(s, \actionO) \subseteq Y$ and  $\suc(s,\actionO) \cap X \neq \emptyset$. Let $D \in S_{\sim_{\distr,\sigma}}$ be such that $s \in D$. Let us prove that $s \in  \bigcap_{\actionI \in \ActionsI, D \in {\cal D}^{>0}_{\actionO,\actionI}}\Pre_{\actionO,\actionI}(Y)$ and $s \in  \bigcup_{\actionI \in \ActionsI, D \in {\cal D}^{>0}_{\actionO,\actionI}} \Pre_{\actionO,\actionI}(X)$. It will follows that $s \in R(Y,X)$. As $\suc(s,\actionO) \cap X \neq \emptyset$, there exists $x \in X$ such that $\probmat(s,\sigma,x) > 0$, that is, $\edges(s,\actionO)(\actionI) = x$ and $\distr(s,\actionO)(\actionI) > 0$ for some $\actionI \in \ActionsI$. Thus $s \in \Pre_{\actionO,\actionI}(X)$ and $D \in {\cal D}^{>0}_{\actionO,\actionI}$. As $\suc(s, \actionO) \subseteq Y$, then for all $s'$ such that $\probmat(s,\sigma,s') > 0$, we have $s' \in Y$, or equivalently, for all $\actionI \in \ActionsI$ such that $\distr(s,\actionO)(\actionI) > 0$, we have $\edges(s,\actionO)(\actionI) = s' \in Y$. Therefore, $s \in \Pre_{\actionO,\actionI}(Y)$ for all $\actionI$ such that $D \in {\cal D}^{>0}_{\actionO,\actionI}$.

Suppose now that $s \in R(Y,X)$. Then we have that there exists $\actionO \in \ActionsO$ and  $D \in S_{\sim_{\distr,\sigma}}$ such that $s \in  \bigcap_{\actionI \in \ActionsI, D \in {\cal D}^{>0}_{\actionO,\actionI}}\Pre_{\actionO,\actionI}(Y)$ and $s \in  \bigcup_{\actionI \in \ActionsI, D \in {\cal D}^{>0}_{\actionO,\actionI}} \Pre_{\actionO,\actionI}(X)$. With the same arguments as above, we deduce that $\suc(s, \actionO) \subseteq Y$ and  $\suc(s,\actionO) \cap X \neq \emptyset$. Notice that we have $\actionO \in \ActionsO_s$. It follows that $s \models \apre(Y, X)$.
\qed\end{proof}

In the case of the MDPs treated in Section~\ref{sec:experiments}, $G$ is closed and ${\cal D}^{>0}_{\actionO,\actionI}$ is composed of at most one block $D$ that is closed. It follows that all the intermediate sets manipulated by the algorithm computing $S^P$ are closed. We thus have an efficient algorithm since it can be based on antichains only. 

\paragraph{Stochastic shortest path.} For the SSP, the initial strategy $\lambda_0$ must be proper. In the previous paragraph, we have presented an 
algorithm for computing the set $S^P = \nu Y \cdot \mu X \cdot (\apre(Y,X) \lor \mathsf{G})$ of proper states. One can directly extract a \PA-representation of a proper strategy from the execution of this algorithm, as follows. During the greatest fix point computation, a sequence $Y_0, Y_1, \ldots, Y_k$ is computed such that $Y_0 = S$ and $Y_{k-1} = Y_k = S^p$. During the computation of $Y_k$, a least fix point computation is performed, leading to a sequence $X_0, X_1, \ldots,  X_l$ such that $X_0 = G$ and $X_i = X_{i-1} \cup \apre(S^P, X_{i-1})$ for all $i$, $1 \leq i \leq l$, and $X_l = Y_k$. We define a strategy $\lambda_0$ incrementally with each set $X_i$. Initially, $\lambda_0(s)$ is any $\actionO \in \Sigma_s$ for each $s \in X_0$ (since $G$ is reached). When $X_i$ has been computed, then for each $s \in X_i \diff X_{i-1}$, $\lambda_0(s)$ is any $\actionO \in \Sigma_s$ such that $\suc(s, \actionO) \subseteq S^P$ and $\suc(s, \actionO) \cap X_{i-1} \neq \emptyset$. Doing so, each proper state is eventually associated with an action by $\lambda_0$. Note that this strategy is proper. Indeed, to simplify the argument, suppose $G$ is absorbing\footnote{for all $s \in G$ and $\actionO \in \ActionsO_s$, $\sum_{s'\in G}\probmat(s, \actionO, s') = 1$.}. Then by construction of $\lambda_0$, the bottom strongly connected components of the MC induced by $\lambda_0$ are all in $G$, and a classical result on MCs (see \cite{DBLP:books/daglib/0020348}) states that any infinite path will almost surely lead to one of those components. Finally, as all sets $X_i, Y_j$ manipulated by the algorithm are \PA-represented, we obtain a partition $S_{\equivstratzero}$ such that each block $B \in S_{\equivstratzero}$ is \PA-represented.

\subsection{Bisimulation lumping} 
We now consider the step of Algorithm~\ref{algo:symblicit} where Algorithm~\textsc{Lump} is called to compute the largest bisimulation $\equivlump$ of an MC $\monMC$
induced by a strategy $\lambda$ on $\monMDP$ (line 4 with $\lambda = \lambda_n$). We here detail a pseudo-antichain based version of Algorithm~\textsc{Lump} (see Algorithm~\ref{algo:lump}) when $\monMC$ is \PA-represented. Recall that if $\monMDP = (S, \ActionsO, \ActionsI, \edges, \distr)$, then $\monMC = (S, \ActionsI, \edges_{\lambda}, \distr_{\lambda})$. Equivalently, using the usual definition of an MC, $\monMC = (S, \probmat_\lambda)$ with $\probmat_\lambda$ derived from $\edges_{\lambda}$ and $\distr_{\lambda}$ (see Section~\ref{sec:prelim}). Remember also the two equivalence relations $\equivdistr$ and $\equivreward$ defined in Section~\ref{subsec:symbrep}. 

The initial partition $P$ computed by Algorithm~\ref{algo:lump} (line 1) is such that for all $s, s' \in S$, $s$ and $s'$ belong to the same block of $P$ iff $\reward_\lambda(s) = \reward_\lambda(s')$. The initial partition $P$ is thus $\partreward$.

Algorithm~\ref{algo:lump} needs to split blocks of partition $P$ (line 7). This can be performed thanks to Algorithm~\textsc{Split} that we are going to describe (see Algorithm~\ref{algo:split} below). 
Given two blocks $B, C \subseteq S$, this algorithm splits $B$ into a partition $P$ composed of sub-blocks $B_1, \ldots, B_k$ according to the probability of reaching $C$ in $\monMC$, i.e. for all $s,s' \in B$, we have $s,s' \in B_l$ for some $l$ iff $\probmat_\lambda(s,C) = \probmat_\lambda(s',C)$. 

\begin{algorithm}[t]
\caption{\textsc{Split}$(B, C, \lambda)$}
\begin{algorithmic}[1] \label{algo:split}
\STATE $P[0] := B$
\FOR{$i$ in $[1, m]$}
	\STATE $P_\textnormal{new}  :=$ \textsc{InitTable}$(P,\actionI_i)$
	\FORALL{$(p, \val)$ in $P$}
		\STATE $P_\textnormal{new}[p] := P_\textnormal{new}[p] \cup (\val \diff \PreMC(C, \actionI_i))$
		\FORALL{$D \in \partdistr$}
			\STATE $P_\textnormal{new}[p + \distr_\lambda(D)(\actionI_i)] := P_\textnormal{new}[p + \distr_\lambda(D)(\actionI_i)] \cup (\val \cap D \cap \PreMC(C, \actionI_i))$
		\ENDFOR
	\ENDFOR
	\STATE $P:=$ \textsc{RemoveEmptyBlocks}$(P_\textnormal{new})$ 
	\ENDFOR
\STATE return $P$
\end{algorithmic}
\end{algorithm}

Suppose that $\ActionsI = \{\actionI_1, \ldots, \actionI_m\}$. This algorithm computes intermediate partitions $P$ of $B$ such that at step~$i$, $B$ is split according to the probability of reaching $C$ in $\monMC$ when $\ActionsI$ is restricted to $\{\actionI_1, \ldots, \actionI_i\}$. To perform this task, it needs a new operator $\PreMC$ based on $\monMDP$ and $\lambda$. Given $L \subseteq S$ and $\actionI \in \ActionsI$, we define 
$$\PreMC(L, \actionI) = \{s \in S \mid \edges_{\lambda}(s)(\actionI) \in L\}$$ 
as the set of states from which $L$ is reached by $\actionI$ in $\monMDP$ under the selection made by $\lambda$. Notice that when $\ActionsI$ is restricted to $\{\actionI_1, \ldots, \actionI_i\}$ with $i < m$, it may happen that $\distr_\lambda(s)$ is no longer a probability distribution for some $s\in S$ (when $\sum_{\actionI \in \{\actionI_1, \ldots, \actionI_i\}}\distr_\lambda(s)(\actionI) < 1$).

Initially, $\ActionsI$ is restricted to $\emptyset$, and the partition $P$ is composed of one block $B$ (see line 1).
At step $i$ with $i \geq 1$, each block $B_l$ of the partition computed at step $i-1$ is split into several sub-blocks according to its intersection with $\PreMC(C,\actionI_i)$ and each $D \in \partdistr$. We take into account intersections with $D \in \partdistr$ in a way to know which stochastic function $\distr_\lambda(D)$ is associated with the states we are considering. Suppose that at step $i-1$ the probability for any state of block $B_l$ of reaching $C$ is $p$. Then at step $i$, it is equal to $p + \distr_\lambda(D)(\actionI_i)$ if this state belongs to $D \cap \PreMC(C,\actionI_i)$, with $D \in \partdistr$, and to $p$ if it does not belong to $\PreMC(C,\actionI_i)$ (lines 5-7). See Figure~\ref{fig:intuitionlump} for intuition. Notice that some newly created sub-blocks could have the same probability, they are therefore merged.

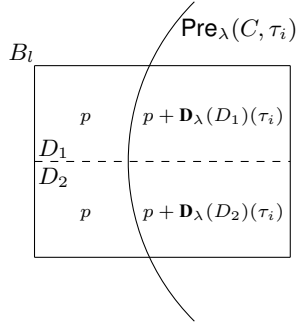
\begin{figure}
\vspace*{-0.9cm}
\centering
  \begin{tikzpicture}[-,>=stealth',auto,node distance=2.5cm,bend angle=45,scale=0.85,font=\footnotesize]
    \node (0)  at (0.8, 0.2) {$B_l$};
    \node (0)  at (1.3, -1.3) {$D_1$};
    \node (0)  at (1.3, -1.75) {$D_2$};
    \node (0)  at (4.2, 0.55) {$\PreMC(C,\actionI_i)$};
    \node (0)  at (1.8, -0.835) {{\scriptsize $p$}};
    \node (0)  at (1.8, -2.335) {{\scriptsize $p$}};
    \node (0)  at (3.8, -0.8) {{\scriptsize $p + \distr_\lambda(D_1)(\actionI_i)$}};
    \node (0)  at (3.8, -2.3) {{\scriptsize $p + \distr_\lambda(D_2)(\actionI_i)$}};
     \path
	(1,0) edge (5,0)
	(1,0) edge (1,-3)
	(1,-3) edge (5,-3)
	(5,-3) edge (5, 0)
	(1, -1.5) edge[dashed] (5, -1.5);
        \draw (3.5, 1) to[out=225,in=135] node {} (3.5,-4);
\end{tikzpicture}
\vspace*{-0.4cm}
\caption{Step $i$ of Algorithm~\ref{algo:split} on a block $B_l$.}
\label{fig:intuitionlump}
\vspace*{-0.3cm}
\end{figure}

The intermediate partitions $P$ (or $P_\textnormal{new}$) manipulated by the algorithm are represented by hash tables:  each entry $(p, \val)$ is stored as $P[p] = \val$ such that $\val$ is the set of states that reach $C$ with probability $p$.
The use of hash tables permits to efficiently gather sub-blocks of states having the same probability of reaching $C$, and thus to keep 
minimal the number of blocks in the partition. Algorithm~\textsc{InitTable} is used to initialize a new partition $P_\textnormal{new}$ from a previous partition $P$ and symbol $\actionI_i$: the new hash table is initialized with $P_\textnormal{new}[p] := \emptyset$ and $P_\textnormal{new}[p + \distr_\lambda(D)(\actionI_i)] := \emptyset$, for all $D \in \partdistr$ and all $(p, \val)$ in $P$. Algorithm \textsc{RemoveEmptyBlocks}$(P)$ removes from the hash table $P$ each pair $(p, block)$ such that $block = \emptyset$.

\begin{theorem}
\label{thm:correct}
Let $\lambda$ be a strategy on $\monMDP$ and $\monMC = (S, \probmat_\lambda)$ be the induced MC. Let $B, C \subseteq S$ be two blocks. Then the output of $\textsc{Split}(B,C,\lambda)$ is a partition $\{B_1,\ldots, B_k\}$ of $B$ such that for all $s,s' \in B$, $s,s' \in B_l$ for some $l$ iff $\probmat_\lambda(s,C) = \probmat_\lambda(s',C)$.
\end{theorem}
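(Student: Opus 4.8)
The plan is to reduce the correctness of \textsc{Split} to a single loop invariant that tracks, for each state $s \in B$, the probability of reaching $C$ accumulated over the first $i$ stochastic actions. Concretely, for $0 \le i \le m$ I define
$$p_i(s) = \sum_{\substack{1 \le j \le i\\ s \in \PreMC(C,\actionI_j)}} \distr_\lambda(s)(\actionI_j).$$
Since $\monMDP$ is $\ActionsI$-complete, $\edges_\lambda(s)$ is a total function on $\ActionsI$, so $s \in \PreMC(C,\actionI)$ iff $\edges_\lambda(s)(\actionI)\in C$. Grouping the successors in the defining sum $\probmat_\lambda(s,C) = \sum_{s'\in C}\probmat_\lambda(s,s')$ according to the stochastic action producing them then gives $p_m(s) = \sum_{\actionI \in \ActionsI,\ s \in \PreMC(C,\actionI)} \distr_\lambda(s)(\actionI) = \probmat_\lambda(s,C)$, while $p_0 \equiv 0$.

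The core is an induction on $i$ proving the invariant: after the $i$-th pass of the main loop (line~2), the table $P$ is a partition of $B$ such that for every stored pair $(p,\val)$ and every $s\in B$, one has $s \in \val$ iff $p_i(s) = p$. The base case $i=0$ is line~1, where $P[0]=B$ and $p_0 \equiv 0$. For the inductive step I would fix a source pair $(p,\val)\in P$, so by hypothesis $\val = \{s\in B \mid p_{i-1}(s)=p\}$. States of $\val \diff \PreMC(C,\actionI_i)$ satisfy $\edges_\lambda(s)(\actionI_i)\notin C$, hence $p_i(s)=p_{i-1}(s)=p$, and line~5 routes them to $P_\textnormal{new}[p]$; states of $\val \cap D \cap \PreMC(C,\actionI_i)$ for $D \in \partdistr$ satisfy $\distr_\lambda(s)=\distr_\lambda(D)$ and $s\in \PreMC(C,\actionI_i)$, hence $p_i(s)=p+\distr_\lambda(D)(\actionI_i)$, and line~7 routes them to $P_\textnormal{new}[p+\distr_\lambda(D)(\actionI_i)]$.

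Two things then close the step. First, because $\partdistr$ is a partition of $S$, the set $\val \diff \PreMC(C,\actionI_i)$ together with $\{\val \cap D \cap \PreMC(C,\actionI_i) \mid D\in\partdistr\}$ forms a partition of $\val$, so each state of $\val$ is routed to exactly one bucket of $P_\textnormal{new}$, namely the one whose key equals $p_i(s)$. Second, and this is the delicate part, since all source pairs write into the \emph{same} table $P_\textnormal{new}$ keyed by probability value, states originating from distinct source blocks but sharing a common value $p_i$ are merged into one new block, while distinct values land in distinct buckets. Hence after \textsc{RemoveEmptyBlocks} the updated $P$ is exactly the partition of $B$ indexed by $p_i$, establishing the invariant at level $i$.

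I expect the main obstacle to be precisely this global merging argument: one must reason over all source blocks simultaneously — not merely within a single $\val$ — to show that the hash table groups together all and only the states with a common partial probability, and that no state of $B$ is dropped or duplicated in the process. Once the invariant is proved, instantiating it at $i=m$ and substituting $p_m(s)=\probmat_\lambda(s,C)$ yields that the returned $\{B_1,\dots,B_k\}$ places $s,s'$ in the same block iff $\probmat_\lambda(s,C)=\probmat_\lambda(s',C)$, which is the statement of the theorem.
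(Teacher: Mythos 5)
Your proposal is correct and follows essentially the same route as the paper's proof: an induction on the index $i$ of the stochastic actions, with the invariant that after step $i$ the table $P$ partitions $B$ and each bucket keyed by $p$ collects exactly the states whose probability of reaching $C$ restricted to $\{\actionI_1,\dots,\actionI_i\}$ equals $p$, using the same case split between $\val \diff \PreMC(C,\actionI_i)$ and $\val \cap D \cap \PreMC(C,\actionI_i)$ for $D \in \partdistr$. The only differences are presentational: you make the partial probability explicit as the formula $p_i(s)$ and carry the ``iff'' form of the invariant (with the hash-table merging argument) throughout, whereas the paper proves the ``if'' direction and upgrades to ``iff'' at the end by noting that the keys are pairwise distinct.
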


\begin{proof}
The correctness of Algorithm \textsc{Split} is based on the following invariant. At step~$i$, $0 \leq i \leq m$, with $\ActionsI$ restricted to $\{\actionI_1,\dots,\actionI_i\}$, we have:
\begin{itemize}
\item $P$ is a partition of $B$
\item $\forall s \in B$, if $s \in P[p]$, then $\probmat_\lambda(s, C) = p$
\end{itemize}

Let us first prove that $P$ is a partition. Note that the use of algorithm \textsc{RemoveEmptyBlocks} ensures that $P$ never contains an empty block. Recall that $\partdistr$ is a partition of $S$.

Initially, when $i = 0$, $P$ is composed of the unique block $B$. Let $i \geq 1$ and suppose that $P = \{B_1,\ldots, B_k\}$ is a partition of $B$ at step $i-1$, and let us prove that $P_\textnormal{new}$ is a partition of $B$ (see line 7 of the algorithm). Each $B_l \in P$ is partitioned as $\{B_l\diff\PreMC(C,\actionI_i)\} \cup \{B_l\cap D \cap\PreMC(C, \actionI_i)\mid D \in \partdistr\}$. This leads to the finer partition $P' = \{B_l \diff\PreMC(C,\actionI_i) \mid B_l \in P\} \cup \{B_l \cap D \cap \PreMC(C,\actionI_i) \mid B_l \in P, D \in \partdistr\}$ of $P$. Some blocks of $P'$ are gathered by the algorithm to get $P_\textnormal{new}$ which is thus a partition.

Let us now prove that at each step $i$, with $\ActionsI$ restricted to $\{\actionI_1,\dots,\actionI_i\}$, we have: $\forall s \in B$, if $s \in P[p]$, then $\probmat_\lambda(s, C) = p$.

Initially, when $i = 0$, $\ActionsI$ is restricted to $\emptyset$, and thus $\probmat_\lambda(s, C) = 0$ for all $s \in B$.
Let $i \geq 1$ and suppose we have that $\forall s \in B$, if $s \in P[p]$ for some $p$, then $\probmat_\lambda(s, C) = p$, when $\ActionsI$ is restricted to $\{\actionI_1,\dots,\actionI_{i-1}\}$. Let us prove that if $s \in P_\textnormal{new}[p]$ for some $p$, then $\probmat_\lambda(s, C) = p$, when $\ActionsI$ is restricted to $\{\actionI_1,\dots,\actionI_{i}\}$.
Let $s \in B$ be such that $s \in P_\textnormal{new}[p]$, we identify two cases:  either $(i)$ $s \in P[p]$ and $s \not\in \PreMC(C, \actionI_i)$, or $(ii)$ $s \in P[p-\distr_\lambda(D)(\actionI_i)]$ and $s \in D \cap \PreMC(C, \actionI_i)$, for some $D \in \partdistr$. In case $(i)$, by induction hypothesis, we know that $\probmat_\lambda(s, C) = p$, when $\ActionsI = \{\actionI_1,\dots,\actionI_{i-1}\}$, and since $s \not\in \PreMC(C, \actionI_i)$, adding $\actionI_i$ to $\ActionsI$ does not change the probability of reaching $C$ from $s$, i.e. $\probmat_\lambda(s, C) = p$ when $\ActionsI = \{\actionI_1, \dots, \actionI_i\}$. In case $(ii)$, by induction hypothesis, we know that $\probmat_\lambda(s, C) = p-\distr_\lambda(D)(\actionI_i)$, when $\ActionsI = \{\actionI_1, \dots, \actionI_{i-1}\}$. Moreover, since $s \in D \cap \PreMC(C, \actionI_i)$, we have $\probmat_\lambda(s, C) = \distr_\lambda(D)(\actionI_i)$, when $\ActionsI = \{\actionI_{i}\}$. It follows that $\probmat_\lambda(s, C) = p-\distr_\lambda(D)(\actionI_i)+\distr_\lambda(D)(\actionI_i) = p$, when $\ActionsI = \{\actionI_1, \dots, \actionI_{i}\}$.

Finally, after step $i=m$, we get the statement of Theorem~\ref{thm:correct} since $P$ is the final partition of $B$ and for all $s \in B$, $s \in P[p]$ iff $\probmat_\lambda(s, C) = p$.\footnote{the ``iff" holds since probabilities $p$ are pairwise distinct.}
\qed\end{proof}

Notice that we have a pseudo-antichain version of Algorithm~\textsc{Lump} as soon as the given blocks $B$ and $C$ are PA-represented. Indeed, this algorithm uses boolean operations and $\PreMC$ operator. This operator can be computed as follows:
\begin{equation*}
\PreMC(C, \actionI) = \bigcup \big\{ \Pre_{\actionO,\actionI}(C)\ \cap B \mid \actionO \in \ActionsO, B \in \partstrat, \lambda(B) = \actionO \big\}.
\end{equation*}
Intuitively, let us fix $\actionO \in \ActionsO$ and $B \in \partstrat$ such that $\lambda(B) = \actionO$. Then $\Pre_{\actionO,\actionI}(C)\cap B$ is the set of states of $S \cap B$ that reach $C$ in the MDP $\monMDP$ with $\actionO$ followed by $\actionI$. Finally the union gives the set of states that reach $C$ with $\actionI$ under the selection made by $\lambda$. All these operations can be performed thanks to Propositions~\ref{prop:BooleanOp}, \ref{prop:Presigma}, and Assumptions~\ref{atwo}, \ref{aone}.

\subsection{Solving linear systems} 

Assume that we have computed the largest bisimulation $\equivlump$ for the MC $\monMC = (S, \probmat_\lambda)$. Let us now detail lines 5-7 of Algorithm~\ref{algo:symblicit} concerning the system of linear equations that has to be solved. We build the Markov chain that is the bisimulation quotient $(\partlump, \probmat_{\lambda,\equivlump})$. We then explicitly solve the linear system of Algorithm~\ref{algo:SSPSI} for the SSP (resp. Algorithm~\ref{algo:HV} for the EMP) (line 3). We thus obtain the expected truncated sum $v$ (resp. the gain value $g$ and bias value $b$) of the strategy $\lambda$, for each block $B \in \partlump$. By definition of $\equivlump$, we have that for all $s, s' \in S$, if $s \equivlump s'$, then $v(s) = v(s')$ (resp. $g(s) = g(s')$ and $b(s) = b(s')$). Given a block $B \in \partlump$, we denote by $v(B)$ the unique expected truncated sum $v(s)$ (resp. $g(B)$ the unique gain value $g(s)$ and by $b(B)$ the unique bias value $b(s)$), for all $s \in B$.

\subsection{Improving strategies}

Given an MDP $\monMDP$ with cost function $\reward$ and the MC $\monMC$ induced by a strategy $\lambda$, we finally present a pseudo-antichain based algorithm to improve strategy $\lambda$ for the SSP, with the expected truncated sum $v$ obtained by solving the linear system (see line 8 of Algorithm~\ref{algo:symblicit}, and Algorithm~\ref{algo:SSPSI}). The improvement of a strategy for the EMP, with the gain $g$ or the bias $b$ values (see Algorithm~\ref{algo:HV}), is similar and is thus not detailed.

Recall that for all $s \in S$, we compute the set $\widehat{\enabledactions}$ of actions $\actionO \in \enabledactions$ that minimize the expression $\las = \reward(s, \actionO) + \sum_{s' \in S}\probmat(s, \actionO, s') \cdot v(s')$, and then we improve the strategy based on the computed $\widehat{\enabledactions}$. We give hereafter an approach based on pseudo-antichains which requires the next two steps. The first step consists in computing, for all $\actionO \in \ActionsO$, an equivalence relation $\equiva$ such that the value $\las$ is constant on each block of the relation. The second step uses the relations $\equiva$, with $\actionO \in \ActionsO$, to improve the strategy.

\paragraph{Computing value $l_{\actionO}$.}
Let $\actionO \in \ActionsO$ be a fixed action. 
We are looking for an equivalence relation $\equiva$ on the set $\enabledstates$ of states where action $\actionO$ is enabled, such that
\begin{equation*}
\forall s, s' \in \enabledstates : s \equiva s' \Rightarrow \las = \lasp.
\end{equation*}
Given $\equivlump$ the largest bisimulation for $\monMC$ and the induced partition $\partlump$, we have for each $s \in \enabledstates$
\begin{equation} \label{eq:las} \nonumber
\las = \reward(s, \actionO) + \sum_{C \in \partlump} \probmat(s, \actionO, C) \cdot v(C)
\end{equation}
since the value $v$ is constant on each block $C$. Therefore to get relation $\equiva$, it is enough to have $s \equiva s' \Rightarrow \reward(s, \actionO) = \reward(s', \actionO)$ and $\probmat(s, \actionO, C) = \probmat(s', \actionO, C), \forall C \in \partlump.$
We proceed by defining the following equivalence relations on $\enabledstates$. For the cost part, we use relation $\equivsigmarew$ defined in Section~\ref{subsec:symbrep}. For the probabilities part, for each block $C$ of $\partlump$, we define relation $\equivc$ such that $s \equivc s'$ iff $\probmat(s, \actionO, C) = \probmat(s', \actionO, C)$. The required relation $\equiva$ on $\enabledstates$ is then defined as the relation
$$\equiva ~=\ \ \ \equivsigmarew \cap \bigcap_{C \in \partlump}\!\!\equivc~=\ \ \ \equivsigmarew \cap \equivsigmaprob$$ 

Let us explain how to compute $\equiva$ with a pseudo-antichain based approach. Firstly, $\monMDP$ being $\ActionsI$-complete, the set $\enabledstates$ is obtained as $\enabledstates = \Pre_{\actionO,\actionI}(S)$ where $\actionI$ is an arbitrary action of $\ActionsI$.
Secondly, each relation $\equivc$ is the output obtained by a call to \textsc{Split}$(\enabledstates, C, \lambda)$ where $\lambda$ is defined on $\enabledstates$ by $\lambda(s) = \actionO$ for all $s \in \enabledstates$\footnote{As Algorithm~\textsc{Split} only works on $\enabledstates$, it is not a problem if $\lambda$ is not defined on $S \diff \enabledstates$.} (see Algorithm~\ref{algo:split}). Thirdly, we detail a way to compute $\equivsigmaprob$ from $\equivc$, for all $C \in \partlump$. Let $\partc = \{B_{C,1}, B_{C,2}, \dots, B_{C,k_C}\}$ be the partition of $\enabledstates$ induced by $\equivc$. For each $B_{C,i} \in \partc$, we denote by $\probmat(B_{C,i}, \actionO, C)$ the unique value $\probmat(s, \actionO, C)$, for all $s \in B_{C,i}$. Then, computing a block $D$ of $\equivsigmaprob$ consists in picking, for all $C \in \partlump$, one block $D_C$ among $B_{C,1}, B_{C,2}, \dots, B_{C,k_C}$, such that the intersection $D = \bigcap_{C \in \partlump} D_C$ is non empty. Recall that, by definition of the MDP $\monMDP$,  we have $\sum_{s' \in S}\probmat(s, \actionO, s') = 1$. Therefore, if $D$ is non empty, then $\sum_{C \in \partlump}\probmat(D_C, \actionO, C) = 1$. Finally, $\equiva$ is obtained as the intersection between $\equivsigmarew$ and  $\equivsigmaprob$.

Relation $\equiva$ induces a partition of $\enabledstates$ that we denote $\parta$. For each block $D \in\ \parta$, we denote by $\lad$ the unique value $\las$, for $s \in D$. 

\paragraph{Improving the strategy.} 

We now propose a pseudo-antichain based algorithm for improving strategy $\lambda$ by using relations $\equivlump$, $\equivstrat$, and $\equiva$, $\forall \actionO \in \ActionsO$ (see Algorithm~\ref{algo:improvestrat}).

We first compute for all $\actionO \in \ActionsO$, the equivalence relation $\equivalump\ =\ \equiva \cap \equivlump$ on $\enabledstates$.
Given $B \in \partalump$, we denote by $\laB$ the unique value $\las$ and by $v(B)$ the unique value $v(s)$, for all $s \in B$.  Let $\actionO \in \ActionsO$, we denote by $\partprom \subseteq \partalump$ the set of blocks $C$ for which the value $v(C)$ is improved by setting $\lambda(C) = \actionO$, that is 
$$\partprom = \{C \in \partalump \mid \lac < v(C)\}.$$ 
We then compute an ordered global list $\cal L$ made of the blocks of all sets $\partprom$, for all $\actionO \in \ActionsO$. It is ordered according to the decreasing value $\lac$. In this way, when traversing $\cal L$, we have more and more promising blocks to decrease $v$. 

From input $\cal L$ and $\equivstrat$, Algorithm~\ref{algo:improvestrat} outputs an equivalence relation $\equivstratp$ for a new strategy $\lambda'$ that improves $\lambda$.
Given $C \in \cal L$, suppose that $C$ comes from the relation $\equivalump$ ($\actionO$ is considered). Then for each $B \in S_{\equivstrat}$ such that $B \cap C  \neq \emptyset$ (line 4),  we improve the strategy by setting $\lambda'(B\cap C) = \actionO$, while the strategy $\lambda'$ is kept unchanged for $B\diff C$.
Algorithm~\ref{algo:improvestrat} outputs a partition $S_{\equivstratp}$ such that $s \equivstratp s' \Rightarrow \lambda'(s) = \lambda'(s')$ for the improved strategy $\lambda'$. If necessary, for efficiency reasons, we can compute a coarser relation for the new strategy $\lambda'$ by gathering blocks $B_1, B_2$ of $S_{\equivstratp}$, for all $B_1, B_2$ such that $\lambda'(B_1) = \lambda'(B_2)$. 

The correctness of Algorithm~\ref{algo:improvestrat} is due to the list $\cal L$, which is sorted according to the decreasing value $ \lac$. It ensures that the strategy is updated at each state $s$ to an action $\sigma \in \widehat{\enabledactions}$, i.e. an action $\actionO$ that minimizes the expression $\reward(s,\actionO) + \underset{s'\in S}{\sum}\probmat(s, \actionO, s')\cdot v_n(s')$ (cf. line $4$ of Algorithm~\ref{algo:SSPSI}).

\begin{algorithm}
\caption{\textsc{ImproveStrategy$({\cal L}, S_{\sim_{\lambda}})$}}
\begin{algorithmic}[1] \label{algo:improvestrat}
\FOR{$C \in {\cal L}$}
	\STATE $S_{\equivstratp} := \emptyset$
	\FOR{$B \in S_{\equivstrat}$}
		\IF{$B \cap C \neq \emptyset$}
			\STATE $S_{\equivstratp} := S_{\equivstratp} \cup \{B\cap C, B\diff C\}$
		\ELSE
			\STATE $S_{\equivstratp} := S_{\equivstratp} \cup B$
		\ENDIF
	\ENDFOR
	\STATE $S_{\equivstrat} := S_{\equivstratp}$
\ENDFOR
\STATE return $S_{\equivstratp}$
\end{algorithmic}
\end{algorithm}
\section{Experiments} \label{sec:experiments}
In this section, we present two application scenarios of the pseudo-antichain based symblicit algorithm of the previous section, one for the SSP problem and the other for the EMP problem. In both cases, we first show the reduction to monotonic MDPs that satisfy Assumptions~\ref{atwo} and~\ref{aone}, and we then present some experimental results. All our experiments have been done on a Linux platform with a $3.2$GHz CPU (Intel Core i7) and $12$GB of memory. Note that our implementations are single-threaded and thus use only one core. For all those experiments, the timeout is set to $10$ hours and is denoted by \timeout. Finally, we restrict the memory usage to $4$GB\footnote{Restricting the memory usage to $4$GB is enough to have a good picture of the behavior of each implementation with respect to the memory consumption.} and when an execution runs out of memory, we denote it by \memout.

\subsection{Stochastic shortest path on STRIPSs}
We consider the following application of the pseudo-antichain based symblicit algorithm for the SSP problem. In the field of planning, a class of problems called \textit{planning from STRIPSs}~\cite{fikes1972strips} operate with states represented by valuations of propositional variables. Informally, a STRIPS is defined by an \textit{initial} state representing the initial configuration of the system and a set of \textit{operators} that transform a state into another state. The problem of planning from STRIPSs then asks, given a valuation of propositional variables representing a set of \textit{goal} states, to find a sequence of operators that lead from the initial state to a goal one. Let us first formally define the notion of STRIPS and show that each STRIPS can be made monotonic. We will then add stochastic aspects and show how to construct a monotonic MDP from a monotonic stochastic STRIPS.

\paragraph{STRIPSs.} A \textit{STRIPS}~\cite{fikes1972strips} is a tuple $(P, I, (M, N), O)$ where $P$ is a finite set of \textit{conditions} (i.e. propositional variables), $I \subseteq P$ is a subset of conditions that are initially true (all others are assumed to be false), $(M, N)$, with $M, N \subseteq P$ and $M \cap N = \emptyset$, specifies which conditions are true and false, respectively, in order for a state to be considered a goal state, and $O$ is a finite set of \textit{operators}. An operator $o \in O$ is a pair $((\gamma, \theta), (\alpha, \delta))$ such that $(\gamma, \theta)$ is the \textit{guard} of $o$, that is, $\gamma \subseteq P$ (resp. $\theta \subseteq P$) is the set of conditions that must be true (resp. false) for $o$ to be executable, and $(\alpha, \delta)$ is the \textit{effect} of $o$, that is, $\alpha \subseteq P$ (resp. $\delta \subseteq P$) is the set of conditions that are made true (resp. false) by the execution of $o$. For all $((\gamma, \theta), (\alpha, \delta)) \in O$, we have that $\gamma \cap \theta = \emptyset$ and $\alpha \cap \delta = \emptyset$.

\noindent From a STRIPS, we derive a transition system as follows. The set of states is $2^P$, that is, a state is represented by the set of conditions that are true in it. The initial state is $I$. The set of goal states are states $Q$ such that $Q  \supseteq M$ and $Q \cap N = \emptyset$. There is a transition from state $Q$ to state $Q'$ under operator $o = ((\gamma, \theta), (\alpha, \delta))$ if $Q \supseteq \gamma$, $Q \cap \theta = \emptyset$ (the guard is satisfied) and $Q' = (Q \cap \alpha) \setminus \delta$ (the effect is applied). A standard problem is to ask whether or not there exists a path from the initial state to a goal state.

\paragraph{Monotonic STRIPSs.} 
A \textit{monotonic STRIPS (MS)} is a tuple $(P, I, M, O)$ where $P$ and $I$ are defined as for STRIPSs, $M \subseteq P$ specifies which conditions must be true in a goal state, and $O$ is a finite set of operators. In the MS definition, an operator $o \in O$ is a pair $(\gamma, (\alpha, \delta))$ where $\gamma \subseteq P$ is the guard of $o$, that is, the set of conditions that must be true for $o$ to be executable, and $(\alpha, \delta)$ is the effect of $o$ as in the STRIPS definition. 
MSs thus differ from STRIPS in the sense that guards only apply on conditions that are true in states, and goal states are only specified by true conditions. The monotonicity will appear more clearly when we will derive hereafter monotonic MDPs from MSs.

\noindent
Each STRIPS ${\cal S} = (P, I, (M, N), O)$ can be made monotonic by duplicating the set of conditions, in the following way. We denote by $\overline{P}$ the set $\{\overline{p} \mid p \in P\}$ containing a new condition $\overline{p}$ for each $p \in P$ such that $\overline{p}$ represents the negation of the propositional variable $p$. We construct from $\cal S$ an MS ${\cal S}' = (P', I', M', O')$ such that $P' = P \cup \overline{P}$, $I' = I \cup \overline{P\diff I}\subseteq P'$, $M' = M \cup \overline{N} \subseteq P'$ and $O' = \{(\gamma \cup \overline{\theta}, (\alpha \cup \overline{\delta}, \delta \cup \overline{\alpha})) \mid ((\gamma, \theta), (\alpha, \delta)) \in O\}$. It is easy to check that ${\cal S}$ and ${\cal S}'$ are equivalent (a state $Q$ in ${\cal S}$ has its counterpart $Q \cup \overline{P\diff Q}$ in ${\cal S}'$). In the following, we thus only consider MSs. 

\begin{example} \label{ex:ST}
To illustrate the notion of MS, let us consider the following example of the monkey trying to reach a bunch of bananas (cf. Example~\ref{ex:monkey}). Let $(P, I, M, O)$ be an MS such that $P =$ \{\textit{box}, \textit{stick}, \textit{bananas}$\}$, $I = \emptyset$, $M = \{\textit{bananas}\}$, and $O = \{\textit{takebox}, \textit{takestick}, \textit{takebananas}\}$ where $\textit{takebox} = (\emptyset, (\{\textit{box}\}, \emptyset))$,  $\textit{takestick} = (\emptyset, (\{\textit{stick}\}, \emptyset))$ and $\textit{takebananas} = (\{\textit{box},\textit{stick}\}, (\{\textit{bananas}\}, \emptyset))$. In this MS, a condition $p \in P$ is true when the monkey possesses the item corresponding to $p$. At the beginning, the monkey possesses no item, i.e. $I$ is the empty set, and its goal is to get the \textit{bananas}, i.e. to reach a state $s \supseteq \{\textit{bananas}\}$. This can be done by first executing the operators \textit{takebox} and \textit{takestick} to respectively get the \textit{box} and the \textit{stick}, and then executing \textit{takebananas}, whose guard is $\{\textit{box}, \textit{stick}\}$.
\end{example}

\paragraph{Monotonic stochastic STRIPSs.} MSs can be extended with stochastic aspects as follows~\cite{blum2000probabilistic}. Each operator $o = (\gamma, \pi) \in O$ now consists of a guard $\gamma$ as before, and an effect given as a probability distribution $\pi : 2^P\times 2^P \rightarrow [0,1]$ on the set of pairs $(\alpha, \delta)$. An MS extended with such stochastic aspects is called a \textit{monotonic stochastic STRIPS (MSS)}. 

\noindent
Additionally, we associate with an MSS $(P, I, M, O)$ a cost function $C : O \rightarrow \R_{> 0}$ that associates a strictly positive cost with each operator. The problem of planning from MSSs is then to minimize the expected truncated sum up to the set of goal states from the initial state, i.e. this is a version of the SSP problem.

\begin{example}\label{ex:mst}
We extend the MS of Example~\ref{ex:ST} with stochastic aspects to illustrate the notion of MSS. Let $(P, I, M, O)$ be an MSS such that $P$, $I$ and $M$ are defined as in Example~\ref{ex:ST}, and $O =$ \{\textit{takebox}, \textit{takestick}, \textit{takebananaswithbox}, \textit{takebananaswithstick}, \textit{takebananaswithboth}\} where 
\begin{itemize}
\item $\textit{takebox} = (\emptyset, (1 : (\{\textit{box}\}, \emptyset)))$,
\item $\textit{takestick} = (\emptyset, (1 : (\{\textit{stick}\}, \emptyset)))$,
\item $\textit{takebananaswithbox} = (\{\textit{box}\}, (\frac{1}{4}: (\{\textit{bananas}\}, \emptyset), \frac{3}{4}: (\emptyset, \emptyset)))$,
\item $\textit{takebananaswithstick} = (\{\textit{stick}\}, (\frac{1}{5}: (\{\textit{bananas}\}, \emptyset), \frac{4}{5}: (\emptyset, \emptyset)))$, and 
\item $\textit{takebananaswithboth} = (\{\textit{box}, \textit{stick}\}, (\frac{1}{2}: (\{\textit{bananas}\}, \emptyset), \frac{1}{2}: (\emptyset, \emptyset)))$.
\end{itemize}
In this MSS, the monkey has a strictly positive probability to fail reaching the \textit{bananas}, whatever the items it uses. However, the probability of success increases when it has both the \textit{box} and the \textit{stick}.
\end{example}

In the following, we show that MSSs naturally define monotonic MDPs on which the pseudo-antichain based symblicit algorithm of Section~\ref{sec:paalgo} can be applied.

\paragraph{From MSSs to monotonic MDPs.} Let ${\cal S} = (P, I, M, O)$ be an MSS. We can derive from ${\cal S}$ an MDP $M_{\cal S} = (S, \ActionsO, \ActionsI, \edges, \distr)$ together with a set of goal states $G$ and a cost function $\reward$ such that:
\begin{itemize}
\itemsep0.1em
\item $S = 2^P$,
\item $G = \{s \in S \mid s \supseteq M\}$,
\item $\ActionsO = O$, and for all $s \in S$, $\enabledactions = \{(\gamma, \pi) \in \ActionsO \mid s \supseteq \gamma\}$,
\item $\ActionsI = \{(\alpha, \delta) \in 2^P\times2^P \mid \exists (\gamma, \pi) \in O, (\alpha, \delta) \in \support(\pi)\}$,
\item $\edges$, $\distr$ and $\reward$ are defined for all $s \in S$ and $\actionO = (\gamma, \pi) \in \enabledactions$, such that:
\begin{itemize} 
\item for all $\actionI = (\alpha, \delta) \in \ActionsI$, $\edges(s, \actionO)(\actionI) = (s \cup \alpha) \diff \delta$,
\item for all $\actionI \in \ActionsI$, $\distr(s, \actionO)(\actionI) = \pi(\actionI)$, and
\item $\reward(s, \actionO) = C(\actionO)$.
\end{itemize}
\end{itemize}
Note that we might have that $M_{\cal S}$ is not $\Sigma$-non-blocking, if no operator can be applied on some state of $\cal S$. In this case, we get a $\Sigma$-non-blocking MDP from $M_{\cal S}$ by eliminating states $s$ with $\enabledactions = \emptyset$  as long as it is necessary.

\begin{lemma} \label{lem:StripsMono}
The MDP $M_{\cal S}$ is monotonic, $G$ is closed, and functions $\distr, \reward$ are independent from $S$.
\end{lemma}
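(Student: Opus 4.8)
The plan is to use as partial order the superset relation on $S = 2^P$, i.e. to set $s \preceq s'$ iff $s \supseteq s'$, exactly as suggested by Example~\ref{ex:monkey}. First I would check that $(S,\preceq)$ is a semilattice: for any $s,s' \in 2^P$ the common lower bounds (w.r.t. $\preceq$) are precisely the supersets of $s \cup s'$, and the $\preceq$-greatest of these is $s \cup s'$ itself, so the meet is $s \sqcap s' = s \cup s'$, which always exists. Since $S = 2^P$ is the whole universe it is trivially closed for $\preceq$. It then remains to establish the three assertions of the lemma: (i) $\preceq$ is compatible with $\edges$; (ii) $G$ is closed; and (iii) $\distr$ and $\reward$ do not depend on the state.

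The core of the argument is (i). Fix $s,s' \in S$ with $s \preceq s'$, i.e. $s \supseteq s'$, fix actions $\actionO = (\gamma,\pi) \in \ActionsO$ and $\actionI = (\alpha,\delta) \in \ActionsI$, and suppose $\edges(s',\actionO)(\actionI) = t'$ is defined. By construction of $M_{\cal S}$, $\edges(s',\actionO)$ being defined means $\actionO \in \ActionsO_{s'}$, i.e. $s' \supseteq \gamma$; hence $s \supseteq s' \supseteq \gamma$, so $\actionO \in \ActionsO_s$ and $t = \edges(s,\actionO)(\actionI) = (s \cup \alpha)\diff\delta$ is defined as well. I would then show $t \preceq t'$, i.e. $t \supseteq t'$, using that $t' = (s' \cup \alpha)\diff\delta$ and that the map $X \mapsto (X \cup \alpha)\diff\delta$ is monotone for inclusion: from $s \supseteq s'$ we get $s \cup \alpha \supseteq s' \cup \alpha$, and removing $\delta$ from both sides preserves inclusion, giving $(s\cup\alpha)\diff\delta \supseteq (s'\cup\alpha)\diff\delta$. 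This is exactly the compatibility condition, so $M_{\cal S}$ is monotonic.

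For (ii), recall $G = \{s \in S \mid s \supseteq M\}$. Taking $s \in G$ and $s' \preceq s$ (i.e. $s' \supseteq s$), transitivity of $\supseteq$ gives $s' \supseteq s \supseteq M$, so $s' \in G$; hence $G$ is closed. For (iii), the definitions give $\distr(s,\actionO)(\actionI) = \pi(\actionI)$ and $\reward(s,\actionO) = C(\actionO)$ for $\actionO = (\gamma,\pi)$, and neither expression mentions $s$: whenever $\actionO$ is enabled in two states the stochastic function and the cost coincide, so both functions are independent of $S$ (which is what forces the relations $\equivdistr$ and $\equivsigmarew$ to have only few blocks).

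Finally, I expect the only genuine pitfall to be orienting the order correctly. The set of enabled actions grows with the set of true conditions, so to stay consistent with Proposition~\ref{prop:As} (which requires $\ActionsO_{s'} \subseteq \ActionsO_s$ whenever $s \preceq s'$) the order must be $\supseteq$ rather than $\subseteq$; with this choice enabledness is inherited downward along $\preceq$, as used in step (i). No heavy computation is needed beyond the monotonicity of the effect map $X \mapsto (X\cup\alpha)\diff\delta$.
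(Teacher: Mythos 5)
Your proof is correct and follows essentially the same route as the paper's: order $S=2^P$ by $\supseteq$, observe that $G=\antclos\{M\}$ is closed, verify compatibility via the monotonicity of $X \mapsto (X\cup\alpha)\diff\delta$, and read off state-independence of $\distr,\reward$ from their definitions. If anything, you are slightly more careful than the paper, which omits the explicit check that enabledness ($s' \supseteq \gamma$ implies $s \supseteq \gamma$) carries over before comparing the successors.
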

\begin{proof}
First, $S$ is equipped with the partial order $\supseteq$ and $(S, \supseteq)$ is a semilattice. Second, $S$ is closed for $\supseteq$ by definition. 
Thirdly,  we have that $\supseteq$ is compatible with $\edges$. Indeed, for all $s, s' \in S$ such that $s \supseteq s'$, for all $\actionO \in \ActionsO$ and $\actionI = (\alpha, \delta) \in \ActionsI$, $\edges(s, \actionO)(\actionI) = (s \cup \alpha)\diff \delta \supseteq (s' \cup \alpha) \diff \delta = \edges(s', \actionO)(\actionI)$. Finally the set $G =\ \antclos\{M\}$ of goal states is closed for $\supseteq$, and $\distr, \reward$ are clearly independent from $S$.
\qed\end{proof}

\paragraph{Symblicit algorithm.} In order to apply the pseudo-antichain based symblicit algorithm of Section~\ref{sec:paalgo} on the monotonic MDPs derived from MSSs, Assumptions~\ref{atwo} and~\ref{aone} must hold. Let us show that Assumption~\ref{aone} is satisfied. For all $s \in S$, $\actionO = (\gamma, \pi) \in \enabledactions$ and $\actionI = (\alpha, \delta) \in \ActionsI$, we clearly have an algorithm for computing $\edges(s, \actionO)(\actionI) = (s \cup \alpha)\diff \delta$, and $\distr(s, \actionO)(\actionI) = \pi(\actionI)$. Let us now consider Assumption~\ref{atwo}. An algorithm for computing $\lceil \Pre_{\actionO,\actionI}(\antclos\{x\})\rceil$, for all $x \in S$, $\actionO \in \ActionsO$ and $\actionI \in \ActionsI$, is given by the next proposition. 
\begin{proposition}
Let $x \in S$, $\actionO = (\gamma, \pi) \in \ActionsO$ and $\actionI = (\alpha, \delta) \in \ActionsI$. If $x \cap \delta \neq \emptyset$, then  $\lceil \Pre_{\actionO,\actionI}(\antclos\{x\})\rceil = \emptyset$, otherwise $\lceil \Pre_{\actionO,\actionI}(\antclos\{x\})\rceil = \{\gamma \cup (x \diff \alpha)\}$.
\end{proposition}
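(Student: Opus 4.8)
The plan is to unfold the definitions of $\Pre_{\actionO,\actionI}$ and of $\edges$ for the MDP $M_{\cal S}$, and then reduce the membership condition to a purely set-theoretic statement about subsets of $P$. Recall that here the partial order is $\preceq\ =\ \supseteq$, so that $s \in\ \antclos\{x\}$ iff $s \preceq x$, i.e. $s \supseteq x$ (Proposition~\ref{prop:opantichains}), and that for this MDP $\edges(s, \actionO)(\actionI) = (s \cup \alpha)\diff\delta$ with $\actionI = (\alpha,\delta)$. I would first observe that, since $\edges$ is a \emph{partial} function with $\domain(\edges) = \{(s,\actionO) \mid \actionO \in \enabledactions\}$ and $\actionO = (\gamma,\pi)$ is enabled in $s$ exactly when $s \supseteq \gamma$, a state $s$ can belong to $\Pre_{\actionO,\actionI}(\antclos\{x\})$ only if $s \supseteq \gamma$. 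Hence $s \in \Pre_{\actionO,\actionI}(\antclos\{x\})$ iff $s \supseteq \gamma$ and $(s\cup\alpha)\diff\delta \supseteq x$.

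The key step is to rewrite the condition $(s\cup\alpha)\diff\delta \supseteq x$ so that the operations on $s$ are eliminated. I would show it is equivalent to the conjunction of $x \cap \delta = \emptyset$ and $x \diff \alpha \subseteq s$. Indeed, if $(s\cup\alpha)\diff\delta \supseteq x$, then every $p \in x$ lies in $(s\cup\alpha)\diff\delta$, so $p \notin \delta$ (whence $x\cap\delta = \emptyset$) and $p \in s \cup \alpha$ (whence, for $p \notin \alpha$, $p \in s$, so $x\diff\alpha \subseteq s$); conversely these two facts immediately give $x \subseteq (s\cup\alpha)\diff\delta$. The decisive observation is that the clause $x\cap\delta = \emptyset$ does \emph{not} depend on $s$, which is precisely what produces the case distinction in the statement.

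Then I would split on whether $x \cap \delta = \emptyset$. If $x\cap\delta\neq\emptyset$, no state $s$ satisfies the condition, so $\Pre_{\actionO,\actionI}(\antclos\{x\}) = \emptyset$ and therefore $\lceil\Pre_{\actionO,\actionI}(\antclos\{x\})\rceil = \emptyset$. If $x\cap\delta = \emptyset$, the condition collapses to $s \supseteq \gamma$ and $s \supseteq x\diff\alpha$, i.e. $s \supseteq \gamma \cup (x\diff\alpha)$, so $\Pre_{\actionO,\actionI}(\antclos\{x\}) = \antclos\{\gamma\cup(x\diff\alpha)\}$. To finish I take maximal elements: since $\preceq\ =\ \supseteq$, the $\preceq$-maximal elements are the $\subseteq$-minimal sets, and $\gamma\cup(x\diff\alpha)$ is the $\subseteq$-least element of $\{s \mid s\supseteq \gamma\cup(x\diff\alpha)\}$, hence its unique $\preceq$-maximal element. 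This yields $\lceil\Pre_{\actionO,\actionI}(\antclos\{x\})\rceil = \{\gamma\cup(x\diff\alpha)\}$.

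The computation itself is routine; the only two places requiring care are, first, remembering that the enabledness requirement $s \supseteq \gamma$ is imposed \emph{silently} by $\edges$ being a partial function, so that the $\gamma$ in the final answer is not accidentally dropped, and second, the inverted direction of the order $\preceq\ =\ \supseteq$, under which ``maximal'' means set-inclusion-minimal. Getting this direction right is exactly what guarantees that the canonical antichain reduces to the single element $\gamma\cup(x\diff\alpha)$ rather than to some larger set, and it is the main (albeit modest) conceptual obstacle.
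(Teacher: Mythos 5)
Your proof is correct and follows essentially the same route as the paper's: both unfold $\Pre_{\actionO,\actionI}$ and $\edges$, use that enabledness of $\actionO=(\gamma,\pi)$ in $s$ means $s\supseteq\gamma$, and reduce everything to the set-theoretic fact that $(s\cup\alpha)\diff\delta\supseteq x$ holds iff $x\cap\delta=\emptyset$ and $x\diff\alpha\subseteq s$. The only difference is organizational: you package this as a single biconditional characterization of $\Pre_{\actionO,\actionI}(\antclos\{x\})$ from which both cases fall out, whereas the paper argues membership of $\gamma\cup(x\diff\alpha)$ and the upper-bound property separately, and treats the $x\cap\delta\neq\emptyset$ case on its own.
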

\begin{proof}
Suppose first that $x \cap \delta = \emptyset$.

We first prove that $s = \gamma \cup (x \diff \alpha) \in \Pre_{\actionO,\actionI}(\antclos\{x\})$.  We have to show that $\actionO \in \Sigma_{s}$ and  $\edges(s, \actionO)(\actionI) \in \antclos\{x\}$. Recall that $\actionO = (\gamma, \pi)$. We have that $s = \gamma \cup (x\diff \alpha) \supseteq \gamma$, showing that $\actionO \in \Sigma_{s}$. We have that $\edges(s, \actionO)(\actionI)  = (\gamma \cup (x \diff \alpha) \cup \alpha) \diff \delta =  (\gamma \cup x \cup \alpha) \diff \delta \supseteq x$ since $x \cap \delta = \emptyset$. We thus have that $\edges(s, \actionO)(\actionI) \in \antclos\{x\}$.

We then prove that for all $s \in \Pre_{\actionO,\actionI}(\antclos\{x\})$, $s \in \antclos\{\gamma \cup (x \diff \alpha)\}$, i.e. $s \supseteq \gamma \cup (x \diff \alpha)$. Let $s \in \Pre_{\actionO,\actionI}(\antclos\{x\})$. We have that $\actionO \in \Sigma_{s}$ and $\edges(s, \actionO)(\actionI) \in \antclos\{x\}$, that is,  $s \supseteq \gamma$ and $\edges(s, \actionO)(\actionI) = (s \cup \alpha) \diff \delta \supseteq x$. By classical set properties, it follows that $(s \cup \alpha) \supseteq x$, and then $s  \supseteq x\diff\alpha$. Finally, since $s \supseteq \gamma$, we have $s \supseteq \gamma \cup (x \diff \alpha)$, as required.

Suppose now that $x \cap \delta \neq \emptyset$, then $\Pre_{\actionO,\actionI}(\antclos\{x\}) = \emptyset$. Indeed for all $s \in \antclos\{x\}$, we have  $s \cap \delta \neq \emptyset$, and by definition of $\edges$, there is no $s'$ such that $\edges(s', \actionO)(\actionI) = s$.
\qed\end{proof}

Finally, notice that for the class of monotonic MDPs derived from MSSs, the symbolic representations described in Section~\ref{subsec:symbrep} are compact, since $G$ is closed and  $\distr, \reward$ are independent from $S$ (see Lemma~\ref{lem:StripsMono}).
Therefore we have all the required ingredients for an efficient pseudo-antichain based algorithm to solve the SSP problem for MSSs. The next experiments show its performance.

\paragraph{Experiments.} We have implemented  in Python and C the pseudo-antichain based symblicit algorithm for the SSP problem. The C language is used for all the low level operations while the orchestration is done with Python. The binding between C and Python is realized with the ctypes library of Python. The source code is publicly available at \url{http://lit2.ulb.ac.be/STRIPSSolver/}, together with the two benchmarks presented in this section. We compared our implementation with the purely explicit strategy iteration algorithm implemented in the development release 4.1.dev.r7712 of the tool $\sf{PRISM}$~\cite{KNP11}, since to the best of our knowledge, there is no tool implementing an MTBDD based symblicit algorithm for the SSP problem.\footnote{A comparison with an MTBDD based symblicit algorithm is done in the second application for the EMP problem.} Note that this explicit implementation exists primarily to prototype new techniques and is thus not fully optimized~\cite{Parker}. Note that value iteration algorithms are also implemented in $\sf{PRISM}$. 
While those algorithms are usually efficient, they only compute approximations. As a consequence, for the sake of a fair comparison, we consider here only the performances of strategy iteration algorithms. 

\medskip
The first benchmark ($\sf{Monkey}$) is obtained from Example~\ref{ex:mst}. In this benchmark, the monkey has several items at its disposal to reach the bunch of bananas, one of them being a stick. However, the stick is available as a set of several pieces that the monkey has to assemble. Moreover, the monkey has multiple ways to build the stick as there are several sets of pieces that can be put together. However, the time required to build a stick varies from a set of pieces to another. Additionally, we add useless items in the room: there is always a set of pieces from which the probability of getting a stick is $0$. The operators of getting some items are stochastic, as well as the operator of getting the bananas: the probability of success varies according to the owned items (cf. Example~\ref{ex:mst}). The benchmark is parameterized in the number $p$ of pieces required to build a stick, and in the number $s$ of sticks that can be built. Note that the monkey can only use one stick, and thus has no interest to build a second stick if it already has one. Results are given in Table~\ref{table:STRIPS1}. 

\begin{table}[h!]
	\caption{Stochastic shortest path on the $\sf{Monkey}$ benchmark. The column $(s, p)$ gives the parameters of the problem, $\ETP_\lambda$ the expected truncated sum of the computed strategy $\lambda$, and $|M_{\cal S}|$ the number of states of the MDP.
For the pseudo-antichain based implementation ($\sf{PA}$), $\#$it is the number of iterations of the strategy iteration algorithm, $|\partlump|$ the maximum size of computed bisimulation quotients, \textit{lump} the total time spent for lumping, \textit{syst} the total time spent for solving the linear systems, and \textit{impr} the total time spent for improving the strategies. For the explicit implementation ($\sf{Explicit}$), \textit{constr} is the time spent for model construction and \textit{strat} the time spent for the strategy iteration algorithm. For both implementations, \textit{total} is the total execution time and \textit{mem} the total memory consumption. All times are given in seconds and all memory consumptions are given in megabytes.}
	\label{table:STRIPS1}
	\centering
		\scriptsize
 		\begin{tabular}{|r|r|r||r|r|r|r|r|r|r||r|r|r|r|r|r|r|r|r|r|r|r|r|r|r|}
		\hline
	  	& & & \multicolumn{7}{|c||}{{\small $\sf{PA}$}} & \multicolumn{4}{|c|}{{\small $\sf{Explicit}$}}\rule[-2pt]{0pt}{10pt}\\
		$\ (s, p) \ $ & $\ETP_\lambda$  & $|M_{\cal S}|$  & \ $\#$it \  & $\ |\partlump|\ $ & \ lump  \ &  \ syst \  &  \ impr  \ & \  total \  & mem & \ constr \  & \ strat \ & \ total \ & \ mem\ \rule[-3pt]{0pt}{10pt}\\
\hline\hline
$\ (1,2)\ $ & $\ 35.75\ $ & $\ 256\ $ & $\ 4\ $ & $\ 15\ $ & $\ 0.01\ $ & $\ 0.00\ $ & $\ 0.02\ $  & $\ 0.03\ $ & $\ 15.6\ $ & $\ 0.4\ $ & $\ 0.03\ $  & $\ 0.43\ $ & $\ 178.2\ $\rule[-3pt]{0pt}{10pt}\\
$\ (1,3)\ $ & $\ 35.75\ $ & $\ 1024\ $ & $\ 5\ $ & $\ 19\ $ & $\ 0.04\ $ & $\ 0.00\ $ & $\ 0.03\ $  & $\ 0.07\ $ & $\ 15.8\ $ & $\ 3.42\ $ & $\ 0.09\ $  & $\ 3.51\ $ & $\ 336.7\ $\rule[-3pt]{0pt}{10pt}\\
$\ (1,4)\ $ & $\ 35.75\ $ & $\ 4096\ $ & $\ 6\ $ & $\ 31\ $ & $\ 0.17\ $ & $\ 0.00\ $ & $\ 0.12\ $  & $\ 0.29\ $ & $\ 16.3\ $ & $\ 55.29\ $ & $\ 0.16\ $  & $\ 55.45\ $ & $\ 1735.1\ $\rule[-3pt]{0pt}{10pt}\\
$\ (1,5)\ $ & $\ 36.00\ $ & $\ 16384\ $ & $\ 7\ $ & $\ 39\ $ & $\ 0.75\ $ & $\ 0.00\ $ & $\ 0.62\ $  & $\ 1.37\ $ & $\ 17.1\ $ & $\ \ $ & $\ \ $  & $\ \ $ & $\ \memout\ $\rule[-3pt]{0pt}{10pt}\\
\hline
$\ (2,2)\ $ & $\ 34.75\ $ & $\ 1024\ $ & $\ 5\ $ & $\ 19\ $ & $\ 0.05\ $ & $\ 0.00\ $ & $\ 0.03\ $  & $\ 0.09\ $ & $\ 15.8\ $ & $\ 3.2\ $ & $\ 0.12\ $  & $\ 3.32\ $ & $\ 379.9\ $\rule[-3pt]{0pt}{10pt}\\
$\ (2,3)\ $ & $\ 34.75\ $ & $\ 8192\ $ & $\ 5\ $ & $\ 37\ $ & $\ 0.32\ $ & $\ 0.00\ $ & $\ 0.13\ $  & $\ 0.45\ $ & $\ 16.4\ $ & $\ 240.66\ $ & $\ 0.30\ $  & $\ 240.96\ $ & $\ 3463.2\ $\rule[-3pt]{0pt}{10pt}\\
$\ (2,4)\ $ & $\ 34.75\ $ & $\ 65536\ $ & $\ 6\ $ & $\ 45\ $ & $\ 2.39\ $ & $\ 0.01\ $ & $\ 1.04\ $  & $\ 3.44\ $ & $\ 18.0\ $ & $\ \ $ & $\ \ $  & $\ \ $ & $\ \memout\ $\rule[-3pt]{0pt}{10pt}\\
$\ (2,5)\ $ & $\ 35.75\ $ & $\ 524288\ $ & $\ 7\ $ & $\ 65\ $ & $\ 27.56\ $ & $\ 0.02\ $ & $\ 10.13\ $  & $\ 37.71\ $ & $\ 23.4\ $ & $\ \ $ & $\ \ $  & $\ \ $ & $\ \memout\ $\rule[-3pt]{0pt}{10pt}\\
\hline
$\ (3,2)\ $ & $\ 35.75\ $ & $\ 4096\ $ & $\ 4\ $ & $\ 23\ $ & $\ 0.09\ $ & $\ 0.00\ $ & $\ 0.07\ $  & $\ 0.16\ $ & $\ 16.0\ $ & $\ 60.43\ $ & $\ 0.16\ $  & $\ 60.59\ $ & $\ 1625.8\ $\rule[-3pt]{0pt}{10pt}\\
$\ (3,3)\ $ & $\ 35.75\ $ & $\ 65536\ $ & $\ 5\ $ & $\ 43\ $ & $\ 1.14\ $ & $\ 0.00\ $ & $\ 0.43\ $  & $\ 1.57\ $ & $\ 17.3\ $ & $\ \ $ & $\ \ $  & $\ \ $ & $\ \memout\ $\rule[-3pt]{0pt}{10pt}\\
$\ (3,4)\ $ & $\ 35.75\ $ & $\ 1048576\ $ & $\ 6\ $ & $\ 57\ $ & $\ 12.89\ $ & $\ 0.01\ $ & $\ 4.92\ $  & $\ 17.83\ $ & $\ 21.7\ $ & $\ \ $ & $\ \ $  & $\ \ $ & $\ \memout\ $\rule[-3pt]{0pt}{10pt}\\
$\ (3,5)\ $ & $\ 36.00\ $ & $\ 16777216\ $ & $\ 7\ $ & $\ 88\ $ & $\ 208.33\ $ & $\ 0.05\ $ & $\ 63.73\ $  & $\ 272.13\ $ & $\ 37.5\ $ & $\ \ $ & $\ \ $  & $\ \ $ & $\ \memout\ $\rule[-3pt]{0pt}{10pt}\\
\hline
$\ (4,2)\ $ & $\ 35.75\ $ & $\ 16384\ $ & $\ 4\ $ & $\ 29\ $ & $\ 0.22\ $ & $\ 0.02\ $ & $\ 0.14\ $  & $\ 0.38\ $ & $\ 16.3\ $ & $\ 1114.19\ $ & $\ 0.70\ $  & $\ 1114.89\ $ & $\ 1704.3\ $\rule[-3pt]{0pt}{10pt}\\
$\ (4,3)\ $ & $\ 35.75\ $ & $\ 524288\ $ & $\ 5\ $ & $\ 50\ $ & $\ 2.72\ $ & $\ 0.00\ $ & $\ 1.26\ $  & $\ 4.00\ $ & $\ 18.3\ $ & $\ \ $ & $\ \ $  & $\ \ $ & $\ \memout\ $\rule[-3pt]{0pt}{10pt}\\
$\ (4,4)\ $ & $\ 35.75\ $ & $\ 16777216\ $ & $\ 6\ $ & $\ 87\ $ & $\ 45.68\ $ & $\ 0.04\ $ & $\ 22.41\ $  & $\ 68.14\ $ & $\ 25.0\ $ & $\ \ $ & $\ \ $  & $\ \ $ & $\ \memout\ $\rule[-3pt]{0pt}{10pt}\\
$\ (4,5)\ $ & $\ 36.00\ $ & $\ 536870912\ $ & $\ 7\ $ & $\ 114\ $ & $\ 724.77\ $ & $\ 0.11\ $ & $\ 532.46\ $  & $\ 1257.41\ $ & $\ 60.9\ $ & $\ \ $ & $\ \ $  & $\ \ $ & $\ \memout\ $\rule[-3pt]{0pt}{10pt}\\
\hline
$\ (5,2)\ $ & $\ 35.75\ $ & $\ 65536\ $ & $\ 4\ $ & $\ 31\ $ & $\ 0.36\ $ & $\ 0.00\ $ & $\ 0.18\ $  & $\ 0.54\ $ & $\ 16.6\ $ & $\ 20312.67\ $ & $\ 3.50\ $  & $\ 20316.17\ $ & $\ 2342.6\ $\rule[-3pt]{0pt}{10pt}\\
$\ (5,3)\ $ & $\ 35.75\ $ & $\ 4194304\ $ & $\ 5\ $ & $\ 56\ $ & $\ 5.71\ $ & $\ 0.02\ $ & $\ 2.47\ $  & $\ 8.20\ $ & $\ 19.5\ $ & $\ \ $ & $\ \ $  & $\ \ $ & $\ \memout\ $\rule[-3pt]{0pt}{10pt}\\
$\ (5,4)\ $ & $\ 35.75\ $ & $\ 268435456\ $ & $\ 6\ $ & $\ 97\ $ & $\ 95.49\ $ & $\ 0.04\ $ & $\ 101.27\ $  & $\ 196.83\ $ & $\ 31.3\ $ & $\ \ $ & $\ \ $  & $\ \ $ & $\ \memout\ $\rule[-3pt]{0pt}{10pt}\\
$\ (5,5)\ $ & $\ 36.00\ $ & $\ 17179869184\ $ & $\ 7\ $ & $\ 152\ $ & $\ 1813.78\ $ & $\ 0.08\ $ & $\ 5284.31\ $  & $\ 7098.40\ $ & $\ 81.3\ $ & $\ \ $ & $\ \ $  & $\ \ $ & $\ \memout\ $\rule[-3pt]{0pt}{10pt}\\
\hline
		\end{tabular}
		\normalsize
\end{table}

The second benchmark ($\sf{Moats~and~castles}$) is an adaptation of a benchmark of~\cite{DBLP:conf/aips/MajercikL98} as proposed in~\cite{blum2000probabilistic}\footnote{In~\cite{blum2000probabilistic}, the authors study a different problem that is to maximize the probability of reaching the goal within a given number of steps.}. The goal is to build a sand castle on the beach; a moat can be dug before in a way to protect it. We consider up to $7$ discrete depths of moat. The operator of building the castle is stochastic: there is a strictly positive probability for the castle to be demolished by the waves. However, the deeper the moat is, the higher the probability of success is. For example, the first depth of moat offers a probability $\frac{1}{4}$ of success, while with the second depth of moat, the castle has probability $\frac{9}{20}$ to resist to the waves. The optimal strategy for this problem is to dig up to a given depth of moat and then repeat the action of building the castle until it succeeds. The optimal depth of moat then depends on the cost of the operators and the respective probability of successfully building the castle for each depth of moat. To increase the difficulty of the problem, we consider building several castles, each one having its own moat. The benchmark is parameterized in the number $d$ of depths of moat that can be dug, and the number $c$ of castles that have to be built. Results are given in Table~\ref{table:STRIPS2}.

\begin{table}[h!]
	\caption{Stochastic shortest path on the $\sf{Moats~and~castles}$ benchmark. The column $(c, d)$ gives the parameters of the problem and all other columns have the same meaning as in Table~\ref{table:STRIPS1}.}
	\label{table:STRIPS2}
	\centering
	\scriptsize
 		\begin{tabular}{|r|r|r||r|r|r|r|r|r|r||r|r|r|r|r|r|r|r|r|r|r|r|r|r|r|}
		\hline
	  	& & & \multicolumn{7}{|c||}{{\small $\sf{PA}$}} & \multicolumn{4}{|c|}{{\small $\sf{Explicit}$}}\rule[-2pt]{0pt}{10pt}\\
		$\ (c, d) \ $ & $\ETP_\lambda$  & $|M_{\cal S}|$  & \ $\#$it \  & $\ |\partlump|\ $ & \ lump  \ &  \ syst \  &  \ impr  \ & \  total \  & mem & \ constr \  & \ strat \ & \ total \ & \ mem\ \rule[-3pt]{0pt}{10pt}\\
\hline\hline
$\ (2,3)\ $ & $\ 39.3333\ $ & $\ 256\ $ & $\ 3\ $ & $\ 17\ $ & $\ 0.05\ $ & $\ 0.01\ $ & $\ 0.04\ $  & $\ 0.10\ $ & $\ 15.8\ $ & $\ 0.46\ $ & $\ 0.03\ $  & $\ 0.49\ $ & $\ 206.9\ $\rule[-3pt]{0pt}{10pt}\\
$\ (2,4)\ $ & $\ 34.6667\ $ & $\ 1024\ $ & $\ 3\ $ & $\ 34\ $ & $\ 0.41\ $ & $\ 0.00\ $ & $\ 0.17\ $  & $\ 0.58\ $ & $\ 16.5\ $ & $\ 6.30\ $ & $\ 0.08\ $  & $\ 6.38\ $ & $\ 483.8\ $\rule[-3pt]{0pt}{10pt}\\
$\ (2,5)\ $ & $\ 32.2222\ $ & $\ 4096\ $ & $\ 3\ $ & $\ 49\ $ & $\ 1.36\ $ & $\ 0.00\ $ & $\ 0.45\ $  & $\ 1.82\ $ & $\ 17.3\ $ & $\ 133.46\ $ & $\ 0.20\ $  & $\ 133.66\ $ & $\ 1202.5\ $\rule[-3pt]{0pt}{10pt}\\
$\ (2,6)\ $ & $\ 32.2222\ $ & $\ 16384\ $ & $\ 3\ $ & $\ 66\ $ & $\ 9.71\ $ & $\ 0.01\ $ & $\ 1.95\ $  & $\ 11.68\ $ & $\ 19.3\ $ & $\ 2966.01\ $ & $\ 0.79\ $  & $\ 2966.80\ $ & $\  1706.2\ $\rule[-3pt]{0pt}{10pt}\\
\hline
$\ (3,2)\ $ & $\ 72.6667\ $ & $\ 512\ $ & $\ 3\ $ & $\ 45\ $ & $\ 0.52\ $ & $\ 0.00\ $ & $\ 0.24\ $  & $\ 0.77\ $ & $\ 16.6\ $ & $\ 1.77\ $ & $\ 0.06\ $  & $\ 1.83\ $ & $\ 282.9\ $\rule[-3pt]{0pt}{10pt}\\
$\ (3,3)\ $ & $\ 59.0000\ $ & $\ 4096\ $ & $\ 3\ $ & $\ 84\ $ & $\ 12.58\ $ & $\ 0.03\ $ & $\ 2.73\ $  & $\ 15.35\ $ & $\ 20.2\ $ & $\ 149.44\ $ & $\ 0.20\ $  & $\ 149.64\ $ & $\ 1205.5\ $\rule[-3pt]{0pt}{10pt}\\
$\ (3,4)\ $ & $\ 52.0000\ $ & $\ 32768\ $ & $\ 3\ $ & $\ 219\ $ & $\ 129.17\ $ & $\ 0.05\ $ & $\ 21.56\ $  & $\ 150.83\ $ & $\ 30.7\ $ & $\ 14658.22\ $ & $\ 2.47\ $  & $\ 14660.69\ $ & $\ 1610.9\ $\rule[-3pt]{0pt}{10pt}\\
$\ (3,5)\ $ & $\ 48.3333\ $ & $\ 262144\ $ & $\ 3\ $ & $\ 357\ $ & $\ 658.86\ $ & $\ 0.13\ $ & $\ 81.08\ $  & $\ 740.17\ $ & $\ 49.1\ $ & $\ \ $ & $\ \ $  & $\ \ $ & $\ \memout\ $\rule[-3pt]{0pt}{10pt}\\
$\ (3,6)\ $ & $\ 48.3333\ $ & $\ 2097152\ $ & $\ 3\ $ & $\ 595\ $ & $\ 10730.09\ $ & $\ 0.42\ $ & $\ 865.48\ $  & $\ 11596.71\ $ & $\ 145.8\ $ & $\ \ $ & $\ \ $  & $\ \ $ & $\ \memout\ $\rule[-3pt]{0pt}{10pt}\\
\hline
$\ (4,2)\ $ & $\ 96.8889\ $ & $\ 4096\ $ & $\ 3\ $ & $\ 132\ $ & $\ 31.61\ $ & $\ 0.03\ $ & $\ 12.06\ $  & $\ 43.72\ $ & $\ 26.5\ $ & $\ 173.40\ $ & $\ 0.22\ $  & $\ 173.62\ $ & $\ 1211.2\ $\rule[-3pt]{0pt}{10pt}\\
$\ (4,3)\ $ & $\ 78.6667\ $ & $\ 65536\ $ & $\ 3\ $ & $\ 464\ $ & $\ 1376.94\ $ & $\ 0.21\ $ & $\ 217.06\ $  & $\ 1594.48\ $ & $\ 82.2\ $ & $\ \ $ & $\ \ $  & $\ \ $ & $\ \memout\ $\rule[-3pt]{0pt}{10pt}\\
\hline
		\end{tabular}
	\normalsize
\end{table}

On those two benchmarks, we observe that the explicit implementation quickly runs out of memory when the state space of the MDP grows. Indeed, with this method, we were not able to solve MDPs with more than $65536$ (resp. $32768$) states in Table~\ref{table:STRIPS1} (resp. Table~\ref{table:STRIPS2}). On the other hand, the symblicit algorithm behaves well on large models: the memory consumption never exceeds $150$Mo and this even for MDPs with hundreds of millions of states. For instance, the example $(5, 5)$ of the $\sf{Monkey}$ benchmark is an MDP of more than $17$ billions of states that is solved in less than $2$ hours with only $82$Mo of memory\footnote{On our benchmarks, the value iteration algorithm of ${\sf PRISM}$~performs better than the strategy iteration one w.r.t. the run time and memory consumption. However, it still consumes more memory than the pseudo-antichain based algorithm, and runs out of memory on several examples.}.

\subsection{Expected mean-payoff with \LTLMP synthesis}
We consider another application of the pseudo-antichain based symblicit algorithm, but now for the EMP problem. This application is related to the problems of \LTLMP realizability and synthesis~\cite{DBLP:journals/corr/abs-1210-3539,DBLP:conf/tacas/BohyBFR13}. Let us fix some notations and definitions. Let $\phi$ be an \LTL formula defined over the set $P = I \uplus O$ of signals and let $\Sigma_P = 2^P$, $\Sigma_O = 2^O$ and $\Sigma_I = 2^I$. Let $\Lit(O) = \{o \mid o \in O\} \cup \{\neg o \mid o \in O\}$ be the set of literals over $O$. Let $w : \Lit(O) \mapsto \Z$ be a weight function where positive numbers represent rewards.\footnote{Note that in~\cite{DBLP:journals/corr/abs-1210-3539,DBLP:conf/tacas/BohyBFR13}, the weight function $w$ is more general since it also associates values to $\Lit(I)$. However, for this application, we restrict $w$ to $\Lit(O)$.} This function is extended to $\Sigma_O$ as follows:  $w(\sigma) = \Sigma_{o \in \sigma} w(o) +  \Sigma_{o \in O \setminus \{\sigma\}} w(\neg o)$ for all $\sigma \in \Sigma_O$.

\paragraph{$\LTLMP$ realizability and synthesis.} The problem of \LTLMP realizability is best seen as a game between two players, Player $O$ and Player $I$. This game is infinite and such that at each turn $k$, Player $O$ gives a subset $o_k \in \Sigma_O$ and Player $I$ responds by giving a subset $i_k \in \Sigma_I$. The outcome of the game is the infinite word $(o_0 \cup i_0)(o_1 \cup i_1)\dots \in \Sigma_P^\omega$. A strategy for Player $O$ is a mapping $\lambda_O : (\Sigma_O\Sigma_I)^* \rightarrow \Sigma_O$, while a strategy for Player $I$ is a mapping $\lambda_I : (\Sigma_O\Sigma_I)^*\Sigma_O \rightarrow \Sigma_I$. The outcome of the strategies $\lambda_O$ and $\lambda_I$ is the word $\Outcome(\lambda_O, \lambda_I) = (o_0 \cup i_0)(o_1 \cup i_1)\dots$ such that $o_0 = \lambda_O(\epsilon), i_0 = \lambda_I(o_0)$ and for all $k \geq 1, o_k = \lambda_O(o_0i_0\dots o_{k-1}i_{k-1})$ and $i_k = \lambda_I(o_0i_0\dots o_{k-1}i_{k-1}o_k)$. A \textit{value} $\mpval(u)$ is associated with each outcome $u \in \Sigma_P^\omega$ such that
\begin{center}
$\mpval(u) =$
$\begin{cases} \liminf_{n \rightarrow \infty} \frac{1}{n} \sum_{k=0}^{n-1} w(o_k) \text{ if } u \models \phi\\  - \infty \text{ otherwise} \end{cases}$
\end{center}
i.e. $\mpval(u)$ is the mean-payoff value of $u$ if $u$ satisfies $\phi$, otherwise, it is $-\infty$. Given an $\LTL$ formula $\phi$ over $P$, a weight function $w$ and a threshold value $\nu \in \Z$, the $\LTLMP$ \textit{realizability problem} asks to decide whether there exists a strategy $\lambda_O$ for Player $O$ such that $\mpval(\Outcome(\lambda_O, \lambda_I)) \geq \nu$ for all strategies $\lambda_I$ of Player~$I$. If the answer is $\mathsf{Yes}$, $\phi$ is said $\MP$-realizable. The $\LTLMP$ \textit{synthesis problem} is then to produce such a strategy $\lambda_O$ for Player $O$. 

To illustrate the problems of $\LTLMP$ realizability and synthesis, let us consider the following specification of a server that should grant exclusive access to a resource to two clients.

\begin{example} \label{ex:ltl} 
A client requests access to the resource by setting to true its request signal ($r_1$ for client~$1$ and $r_2$ for client~$2$), and the server grants those requests by setting to true the respective grant signal $g_1$ or $g_2$. We want to synthesize a server that eventually grants any client request, and that only grants one request at a time. Additionally, we ask client~$2$'s requests to take the priority over client 1's requests.  Moreover, we would like to keep minimal the delay between requests and grants. This can be formalized by the \LTL formula $\phi$ given below where the signals in $I = \{r_1, r_2\}$ are controlled by the two clients, and the signals in $O = \{g_1,w_1, g_2,w_2 \}$ are controlled by the server. Moreover we add the next weight function $w : \Lit(O) \rightarrow \Z$:

\begin{minipage}{0.5\linewidth}
\begin{equation*} 
	\begin{split}
		\phi_1 &= \square(r_1 \rightarrow \nextLTL(w_1 \until g_1))\\
		\phi_2 &= \square(r_2 \rightarrow \nextLTL(w_2 \until g_2))\\
		\phi_3 &= \square(\lnot g_1 \vee \lnot g_2)\\
		\phi &= \phi_1 \wedge \phi_2 \wedge \phi_3
	\end{split} 
\end{equation*}
\end{minipage}
\begin{minipage}{0.4\linewidth}
\begin{center}
$w(l) =$
$\begin{cases} -1 \text{ if } l = w_1\\ -2 \text{ if } l = w_2\\ 0 \text{ otherwise.} \end{cases}$
\end{center}
\end{minipage}

\medskip\noindent
A possible strategy for the server is to behave as follows: it grants immediately any request of client~$2$ if the last ungranted request of client~$1$ has been emitted less than $n$ steps in the past, otherwise it grants the request of client~$1$. The mean-payoff value of this solution in the worst-case (when the two clients always emit their respective request) is equal to $-(1 + \frac{1}{n})$.
\end{example}

\paragraph{Reduction to safety games.} In~\cite{DBLP:journals/corr/abs-1210-3539,DBLP:conf/tacas/BohyBFR13}, we propose an antichain based algorithm for solving the $\LTLMP$ realizability and synthesis problems with a reduction to a two-player turn-based safety game. We here present this game $G$ without explaining the underlying reasoning, see~\cite{DBLP:journals/corr/abs-1210-3539} for more details. The game $G = (S_O, S_I, E, \alpha)$ is a turn-based safety game such that $S_O$ (resp. $S_I$) is the set of positions of Player $O$ (resp. Player $I$), $E$ is the set of edges labeled by $o \in \Sigma_O$ (resp. $i \in \Sigma_I$) when leaving a position in $S_O$ (resp. $S_I$), and $\alpha \subseteq S_O \cup S_I$ is the set of bad positions (i.e. positions that Player $O$ must avoid to reach). Let $\Win_O$ be the set of positions in $G$ from which Player $O$ can force Player $I$ to stay in $(S_O\cup S_I)\diff\alpha$, that is the set of winning positions for Player $O$. The safety game $G$ restricted to positions $\Win_O$ is a representation of a subset of the set of all winning strategies $\lambda_O$ for Player $O$ that ensure a value $\mpval(\Outcome(\lambda_O,\lambda_I))$ greater than or equal to the given threshold $\nu$, for all strategies $\lambda_I$ of Player $I$. Those strategies are called \textit{worst-case winning strategies}. 

Note that the reduction to safety games given in~\cite{DBLP:journals/corr/abs-1210-3539,DBLP:conf/tacas/BohyBFR13} allows to compute the set of all worst-case winning strategies (instead of a subset of them). 
Indeed the proposed algorithm is incremental on two parameters $K = 0, 1, \dots$ and $C= 0, 1, \dots$, and works as follows. For each value of $K$ and $C$, a corresponding safety game is constructed, whose number of states depends on $K$ and $C$. Those safety games have the following nice property. If player $O$ has a worst-case winning strategy in the safety game for $K = k$ and $C = c$, then he has a worst-case winning strategy in all the safety games for $K \geq k$ and $C \geq c$. The algorithm thus stops as soon as a worst-case winning strategy is found. There exist theoretical bounds $\mathbb{K}$ and $\mathbb{C}$ such that the set of all worst-case winning strategies can be represented by the safety game with parameters $\mathbb{K}$ and $\mathbb{C}$. However, $\mathbb{K}$ and $\mathbb{C}$ being huge, constructing this game is unfeasible in practice.

\paragraph{From safety games to MDPs.} We can go beyond $\LTLMP$ synthesis. Let $G$ be a safety game as above, that represents a subset of worst-case winning strategies. 
For each state $s \in \Win_O\cap S_O$, we denote by $\Sigma_{O,s} \subseteq \Sigma_O$ the set of actions that are safe to play in $s$ (i.e. actions that force Player $I$ to stay in $\Win_O$). For all $s \in \Win_O\cap S_O$, we know that $\Sigma_{O,s} \neq \emptyset$ by construction of $\Win_O$.
From this set of worst-case winning strategies, we want to compute the one that behaves \textit{the best against a stochastic opponent}. Let $\probI : \Sigma_I \rightarrow\ ]0,1]$ be a probability distribution on the actions of Player $I$. Note that we require $\support(\probI) = \Sigma_I$ so that it makes sense with the worst-case. 
By replacing Player $I$ by $\probI$ in the safety game $G$ restricted to $\Win_O$, we derive an MDP $M_G = (S, \ActionsO, \ActionsI, \edges, \distr)$ where:
\begin{itemize}
\itemsep0.1em
\item $S = \Win_O\cap S_O$,
\item $\ActionsO = \Sigma_O$, and for all $s \in S$, $\enabledactions = \Sigma_{O,s}$,
\item $\ActionsI = \Sigma_I$,
\item $\edges$, $\distr$ and $\reward$ are defined for all $s \in S$ and $\actionO \in \enabledactions$, such that:
\begin{itemize}
\item for all $\actionI \in \ActionsI$, $\edges(s, \actionO)(\actionI) = s'$ such that $(s, \actionO, s''), (s'', \actionI, s') \in E$,
\item for all $\actionI \in \ActionsI$, $\distr(s, \actionO)(\actionI) = \probI(\actionI)$, and
\item $\reward(s, \actionO) = w(\actionO)$.
\end{itemize}
\end{itemize}
Note that since $\Sigma_{O,s} \neq \emptyset$ for all $s \in S$, we have that $M$ is $\Sigma$-non-blocking. 

Computing the best strategy against a stochastic opponent among the worst-case winning strategies represented by $G$ reduces to solving the EMP problem for the MDP $M_G$\footnote{More precisely, it reduces to the EMP problem where the objective is to maximize the expected mean-payoff (see footnotes~\ref{fn:altobj} and~\ref{fn:MPmax}).}.

\begin{lemma}  \label{lem:LTLMono}
The MDP $M_G$ is monotonic, and functions $\distr, \reward$ are independent from $S$. 
\end{lemma}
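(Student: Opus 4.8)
The plan is to dispatch the three assertions separately, the two independence claims being immediate and the monotonicity claim requiring the structural properties of the safety game $G$. For independence from $S$, observe that by construction $\distr(s,\actionO)(\actionI) = \probI(\actionI)$ depends only on the stochastic action $\actionI$ through the fixed distribution $\probI$, and $\reward(s,\actionO) = w(\actionO)$ depends only on $\actionO$ through the weight function $w$; neither mentions the state $s$, so both functions are independent from $S$.

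For monotonicity, the first step is to equip $S = \Win_O \cap S_O$ with the partial order $\preceq$ coming from the antichain-based safety game construction of~\cite{DBLP:journals/corr/abs-1210-3539,DBLP:conf/tacas/BohyBFR13}: the positions of $G$ are (counter-augmented) sets of states of the underlying automaton, ordered componentwise by (the reverse of) set inclusion together with the numeric order on counters. I would recall from that construction that $(S_O, \preceq)$ is a semilattice and that $\Win_O$ is closed for $\preceq$ --- the monotonicity property stating that a position at least as favourable as a winning one is again winning, which is exactly what licenses the antichain representation in the cited work. Restricting to the winning region then yields that $(S, \preceq)$ is a semilattice, discharging condition~1 of the definition of monotonic MDP.

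The heart of the argument is condition~2, compatibility of $\preceq$ with $\edges$. Recall that $\edges(s,\actionO)(\actionI) = s'$ exactly when $(s,\actionO,s'')$ and $(s'',\actionI,s')$ are edges of $G$, so the one-step successor map factors through the $\actionO$-labelled move of Player $O$ followed by the $\actionI$-labelled move of Player $I$. I would import two monotonicity facts from the safety game construction: (a) the safe-action sets are antitone, i.e. $s \preceq s' \Rightarrow \Sigma_{O,s'} \subseteq \Sigma_{O,s}$, so that whenever $\actionO$ is enabled at the larger position it is enabled at the smaller one; and (b) each labelled transition relation of $G$ is order-preserving in its source. Given $s \preceq s'$ and $t' = \edges(s',\actionO)(\actionI)$, fact~(a) makes $\edges(s,\actionO)(\actionI)$ defined, and composing the two transitions using fact~(b) yields $t = \edges(s,\actionO)(\actionI) \preceq t'$, which is precisely condition~2. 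With conditions~1 and~2 established, $M_G$ is monotonic.

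The main obstacle is that facts~(a) and~(b) are not consequences of the MDP definitions in this paper; they rest entirely on the combinatorial structure of the safety game produced in~\cite{DBLP:journals/corr/abs-1210-3539}, namely that both the $\Sigma_O$- and $\Sigma_I$-labelled transitions respect the componentwise order on counter-augmented automaton-state sets. I would therefore phrase the proof so that these monotonicity properties are explicitly imported as the key inputs from the companion construction, the remainder being the routine composition that propagates the order along an $(\actionO,\actionI)$-path.
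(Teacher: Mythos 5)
Your proposal is correct and follows essentially the same route as the paper's proof: both reduce the claim to the structural monotonicity properties of the safety game $G$ established in~\cite{DBLP:journals/corr/abs-1210-3539,DBLP:conf/tacas/BohyBFR13} (the ordered, closed winning region and order-respecting transitions), and both dispose of the independence of $\distr$ and $\reward$ directly from the definitions $\distr(s,\actionO)(\actionI)=\probI(\actionI)$ and $\reward(s,\actionO)=w(\actionO)$. The only difference is presentational: you unfold the compatibility of $\preceq$ with $\edges$ into the two imported facts (antitone safe-action sets and order-preserving $\Sigma_O$- and $\Sigma_I$-transitions), whereas the paper cites this compatibility wholesale from the companion construction.
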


\begin{proof}
It is shown in~\cite{DBLP:journals/corr/abs-1210-3539,DBLP:conf/tacas/BohyBFR13} that the safety game $G$ has properties of monotony. The set $S_O\cup S_I$ is equipped with a partial order $\preceq$ such that $(S_O\cup S_I, \preceq)$ is a complete lattice, and the sets $S_O, S_I$ and $\Win_O$ are closed for $\preceq$. For the MDP $M_G$ derived from $G$, we thus have that $(S, \preceq)$ is a (semi)lattice, and $S$ is closed for $\preceq$. Moreover, by construction of $G$ (see details in~\cite[Sec. 5.1]{DBLP:journals/corr/abs-1210-3539}) and $M_G$, 
we have that $\preceq$ is compatible with~$\edges$. By construction, $\distr, \reward$ are independent from $S$.
\qed\end{proof}

\paragraph{Symblicit algorithm.} 
In order to apply the pseudo-antichain based symblicit algorithm of Section~\ref{sec:paalgo}, Assumptions~\ref{atwo} and~\ref{aone} must hold for $M_G$. This is the case for Assumption~\ref{aone} since $\edges(s, \actionO)(\actionI)$ can be computed for all $s\in S, \actionO \in \enabledactions$ and $\actionI \in \ActionsI$ (see~\cite[Sec. 5.1]{DBLP:journals/corr/abs-1210-3539}), and $\distr$ is given by $\probI$. Moreover, from \cite[Prop. 24]{DBLP:journals/corr/abs-1210-3539} and $\support(\probI) = \Sigma_I$, we have an algorithm for computing $\lceil \Pre_{\actionO,\actionI}(\antclos \{x\})\rceil$, for all $x \in S$. So, Assumption~\ref{atwo} holds too. 
Notice also that for the MDP $M_G$ derived from the safety game $G$, the symbolic representations described in Section~\ref{subsec:symbrep} are compact, since $\distr$ and $\reward$ are independent from $S$ (see Lemma~\ref{lem:LTLMono}).

Therefore for this second class of MDPs, we have again an efficient pseudo-antichain based algorithm to solve the EMP problem, as indicated by the next experiments.

\paragraph{Experiments.} We have implemented the pseudo-antichain based symblicit algorithm for the EMP problem and integrated it into $\sf{Acacia+}$~($\sf{v2.2}$)~\cite{DBLP:conf/cav/BohyBFJR12}. $\sf{Acacia+}$ is a tool written in Python and C that provides an antichain based version of the algorithm described above for solving the $\LTLMP$ realizability and synthesis problems. The last version of $\sf{Acacia+}$ is available at \url{http://lit2.ulb.ac.be/acaciaplus/}, together with all the examples considered in this section. It can also be used directly online via a web interface. We compared our implementation with an MTBDD based symblicit algorithm implemented in $\sf{PRISM}$~\cite{prismEMP}. To the best of our knowledge, only strategy iteration algorithms are implemented for the EMP problem. In the sequel, for the sake of simplicity, we refer to the MTBDD based implementation as $\sf{PRISM}$ and to the pseudo-antichain based one as $\sf{Acacia+}$. Notice that for $\sf{Acacia+}$, the given execution times and memory consumptions only correspond to the part of the execution concerning the symblicit algorithm (and not the construction of the safety game $G$ and the subset of worst-case winning strategies that it represents).

\medskip
We compare the two implementations on a benchmark of~\cite{DBLP:conf/tacas/BohyBFR13} obtained from the $\LTLMP$ specification of Example~\ref{ex:ltl} extended with stochastic aspects ($\sf{Stochastic~shared~resource~arbiter}$). For the stochastic opponent, we set a probability distribution such that requests of client~$1$ are more likely to happen than requests of client~$2$: at each turn, client~$1$ has probability $\frac{3}{5}$ to make a request, while client~$2$ has probability $\frac{1}{5}$. The probability distribution $\probI : \Sigma_I \rightarrow\ ]0,1]$ is then defined as $\probI(\{\neg r_1, \neg r_2\}) = \frac{8}{25}$, $\probI(\{r_1, \neg r_2\}) = \frac{12}{25}$, $\probI(\{\neg r_1, r_2\}) = \frac{2}{25}$ and $\probI(\{r_1, r_2\}) = \frac{3}{25}$. We use the backward algorithm of $\sf{Acacia+}$ for solving the related safety games. The benchmark is parameterized in the threshold value $\nu$. Results are given in Table~\ref{table:LTL}. Note that the number of states in the MDPs depends on the implementation. Indeed, for $\sf{PRISM}$, it is the number of reachable states of the MDP, denoted $|M_G^R|$, that is, the states that are really taken into account by the algorithm, while for $\sf{Acacia+}$, it is the total number of states since unlike $\sf{PRISM}$, our implementation does not prune unreachable states. For this application scenario, we observe that the ratio (number of reachable states)/(total number of states) is in general quite small\footnote{For all the MDPs considered in Tables~\ref{table:STRIPS1} and \ref{table:STRIPS2}, this ratio is $1$.}. 

\begin{table}[h!]
	\caption{Expected mean-payoff on the $\sf{Stochastic\ shared\ resource\ arbiter}$ benchmark with $2$ clients and decreasing threshold values. The column $\nu$ gives the threshold, $|M_G^R|$ the number of reachable states in the MDP, and all other columns have the same meaning as in Table~\ref{table:STRIPS1}. The expected mean-payoff $\EMP_\lambda$ of the optimal strategy $\lambda$ for all the examples is $-0.130435$.}	\label{table:LTL}
	\centering
	\scriptsize
 	\begin{tabular}{|r||r|r|r|r|r|r|r|r||r|r|r|r|r|r|r|r|r|r|r|r|r|r|r|r|}
		\hline
	  	& \multicolumn{8}{|c||}{{\small $\sf{Acacia+}$}} & \multicolumn{5}{|c|}{{\small $\sf{PRISM}$}}\rule[-2pt]{0pt}{10pt}\\
		$\nu~$ & $|M_G|$  & \ $\#$it \  & $\ |\partlump|\ $ & \ lump  \ &  \ LS \  &  \ impr  \ & \  total \  & \ mem \ & $|M_G^R| $ & \ constr \  & \ strat \  & \ total \ & \ mem\ \rule[-3pt]{0pt}{10pt}\\
\hline
$\ -1.1\ $ & $\ 5259\ $ & $\ 2\ $ & $\ 22\ $ & $\ 0.12\ $ & $\ 0.01\ $ & $\ 0.02\ $  & $\ 0.15\ $ & $\ 17.4\ $ & $\ 691\ $ & $\ 0.43\ $ & $\ 0.07\ $  & $\ 0.50\ $ & $\ 168.1\ $\rule[-3pt]{0pt}{10pt}\\
$\ -1.05\ $ & $\ 72159\ $ & $\ 2\ $ & $\ 42\ $ & $\ 0.86\ $ & $\ 0.01\ $ & $\ 0.09\ $  & $\ 0.97\ $ & $\ 17.7\ $ & $\ 6440\ $ & $\ 1.58\ $ & $\ 0.18\ $  & $\ 1.76\ $ & $\ 249.9\ $\rule[-3pt]{0pt}{10pt}\\
$\ -1.04\ $ & $\ 35750\ $ & $\ 2\ $ & $\ 52\ $ & $\ 1.63\ $ & $\ 0.02\ $ & $\ 0.13\ $  & $\ 1.79\ $ & $\ 18.1\ $ & $\ 3325\ $ & $\ 1.78\ $ & $\ 0.28\ $  & $\ 2.06\ $ & $\ 264.1\ $\rule[-3pt]{0pt}{10pt}\\
$\ -1.03\ $ & $\ 501211\ $ & $\ 2\ $ & $\ 70\ $ & $\ 4.41\ $ & $\ 0.04\ $ & $\ 0.26\ $  & $\ 4.71\ $ & $\ 18.8\ $ & $\ 15829\ $ & $\ 4.83\ $ & $\ 0.46\ $  & $\ 5.29\ $ & $\ 277.0\ $\rule[-3pt]{0pt}{10pt}\\
$\ -1.02\ $ & $\ 530299\ $ & $\ 2\ $ & $\ 102\ $ & $\ 16.62\ $ & $\ 0.11\ $ & $\ 0.64\ $  & $\ 17.39\ $ & $\ 20.2\ $ & $\ 11641\ $ & $\ 6.74\ $ & $\ 0.59\ $  & $\ 7.33\ $ & $\ 343.4\ $\rule[-3pt]{0pt}{10pt}\\
$\ -1.01\ $ & $\ 4120599\ $ & $\ 2\ $ & $\ 202\ $ & $\ 237.78\ $ & $\ 0.50\ $ & $\ 3.94\ $  & $\ 242.30\ $ & $\ 26.2\ $ & $\ 43891\ $ & $\ 29.91\ $ & $\ 1.61\ $  & $\ 31.52\ $ & $\ 642.5\ $\rule[-3pt]{0pt}{10pt}\\
$\ -1.005\ $ & $\ 64801599\ $ & $\ 2\ $ & $\ 402\ $ & $\ 3078.88\ $ & $\ 3.07\ $ & $\ 28.19\ $  & $\ 3110.50\ $ & $\ 48.0\ $ & $\ 563585\ $ & $\ 179.23\ $ & $\ 4.72\ $  & $\ 183.95\ $ & $\ 1629.2\ $\rule[-3pt]{0pt}{10pt}\\
$\ -1.004\ $ & $\ 63251499\ $ & $\ 2\ $ & $\ 502\ $ & $\ 7357.72\ $ & $\ 5.68\ $ & $\ 52.81\ $  & $\ 7416.77\ $ & $\ 60.5\ $ & $\ 264391\ $ & $\ 270.30\ $ & $\ 7.71\ $  & $\ 278.01\ $ & $\ 2544.0\ $\rule[-3pt]{0pt}{10pt}\\
$\ -1.003\ $ & $\ 450012211\ $ & $\ 2\ $ & $\ 670\ $ & $\ 23455.44\ $ & $\ 12.72\ $ & $\ 120.25\ $  & $\ 23589.49\ $ & $\ 93.6\ $ & $\ \ $ & $\ \ $ & $\ \ $  & $\ \ $ & $\ \memout\ $\rule[-3pt]{0pt}{10pt}\\
\hline
	\end{tabular}
	\normalsize
\end{table}

On this benchmark, $\sf{PRISM}$ is faster that $\sf{Acacia+}$ on large models, but $\sf{Acacia+}$ is more efficient regarding the memory consumption and this in spite of considering the whole state space. For instance, the last MDP of Table~\ref{table:LTL} contains more than $450$ millions of states and is solved by $\sf{Acacia+}$ in around $6.5$ hours with less than $100$Mo of memory, while for this example, $\sf{PRISM}$ runs out of memory. Note that the surprisingly large amount of memory consumption of both implementations on small instances is due to Python libraries loaded in memory for $\sf{Acacia+}$, and to the JVM and the CUDD package for $\sf{PRISM}$~\cite{DBLP:conf/hvc/JansenKOSZ07}.

To fairly compare the two implementations, let us consider Figure~\ref{fig:time} (resp. Figure~\ref{fig:memory}) that gives a graphical representation of the execution times (resp.  the memory consumption) of $\sf{Acacia+}$ and $\sf{PRISM}$ as a function of the number of states taken into account, that is, the total number of states for $\sf{Acacia+}$ and the number of reachable states for $\sf{PRISM}$. For that experiment, we consider the benchmark of examples of Table~\ref{table:LTL} with four different probability distributions on $\Sigma_I$. Moreover, for each instance, we consider the two MDPs obtained with the backward and the forward algorithms of $\sf{Acacia+}$ for solving safety games. The forward algorithm always leads to smaller MDPs. On the whole benchmark, $\sf{Acacia+}$ times out on three instances, while $\sf{PRISM}$ runs out of memory on four of them. Note that all scales in Figures~\ref{fig:time} and~\ref{fig:memory} are logarithmic.

\begin{figure}[!h]
   \begin{minipage}[t]{.48\linewidth}
	\includegraphics[width=\textwidth]{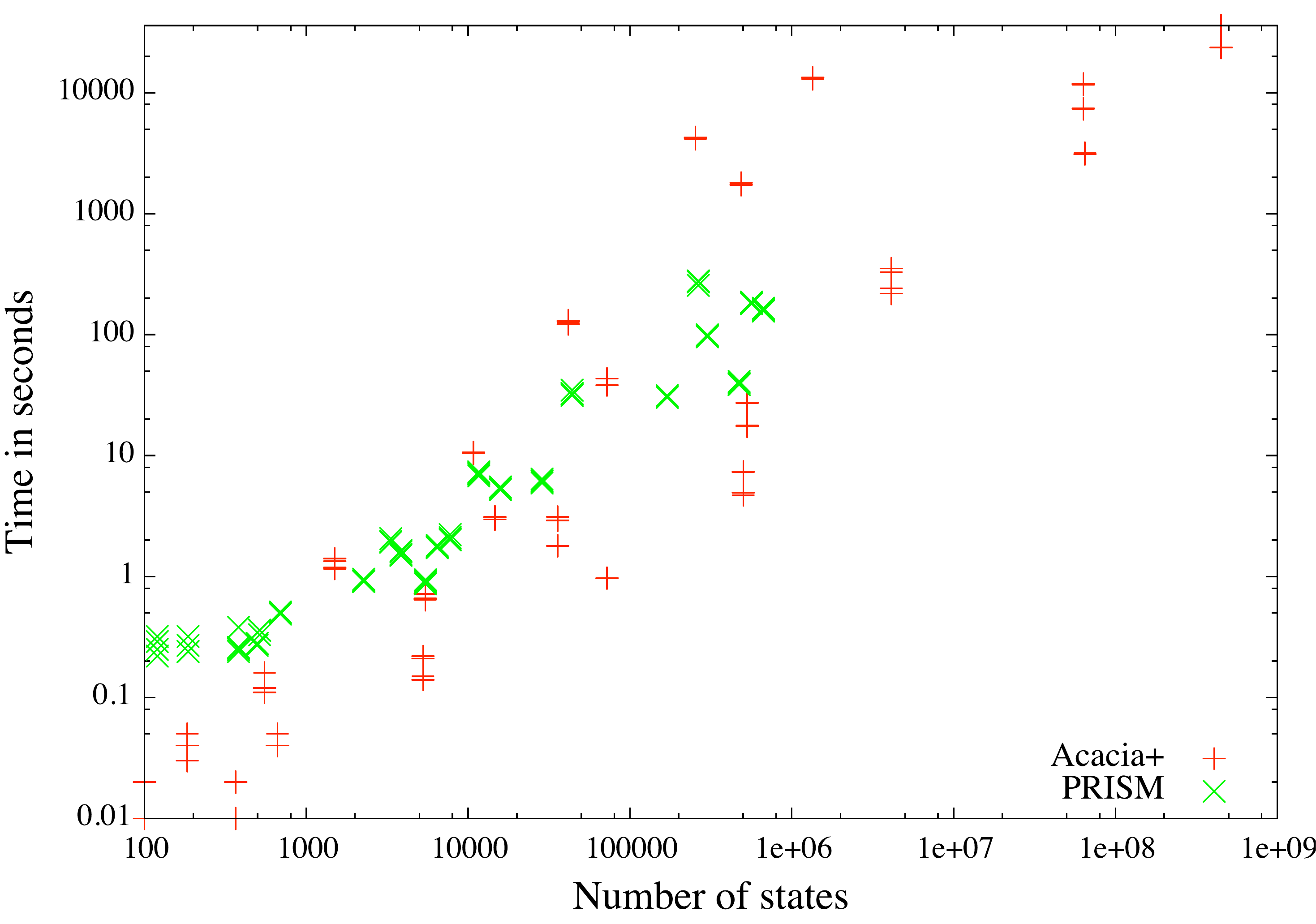}
	\caption{Execution time}
	\label{fig:time}   
\end{minipage} \hfill
   \begin{minipage}[t]{.48\linewidth}
	\includegraphics[width=\textwidth]{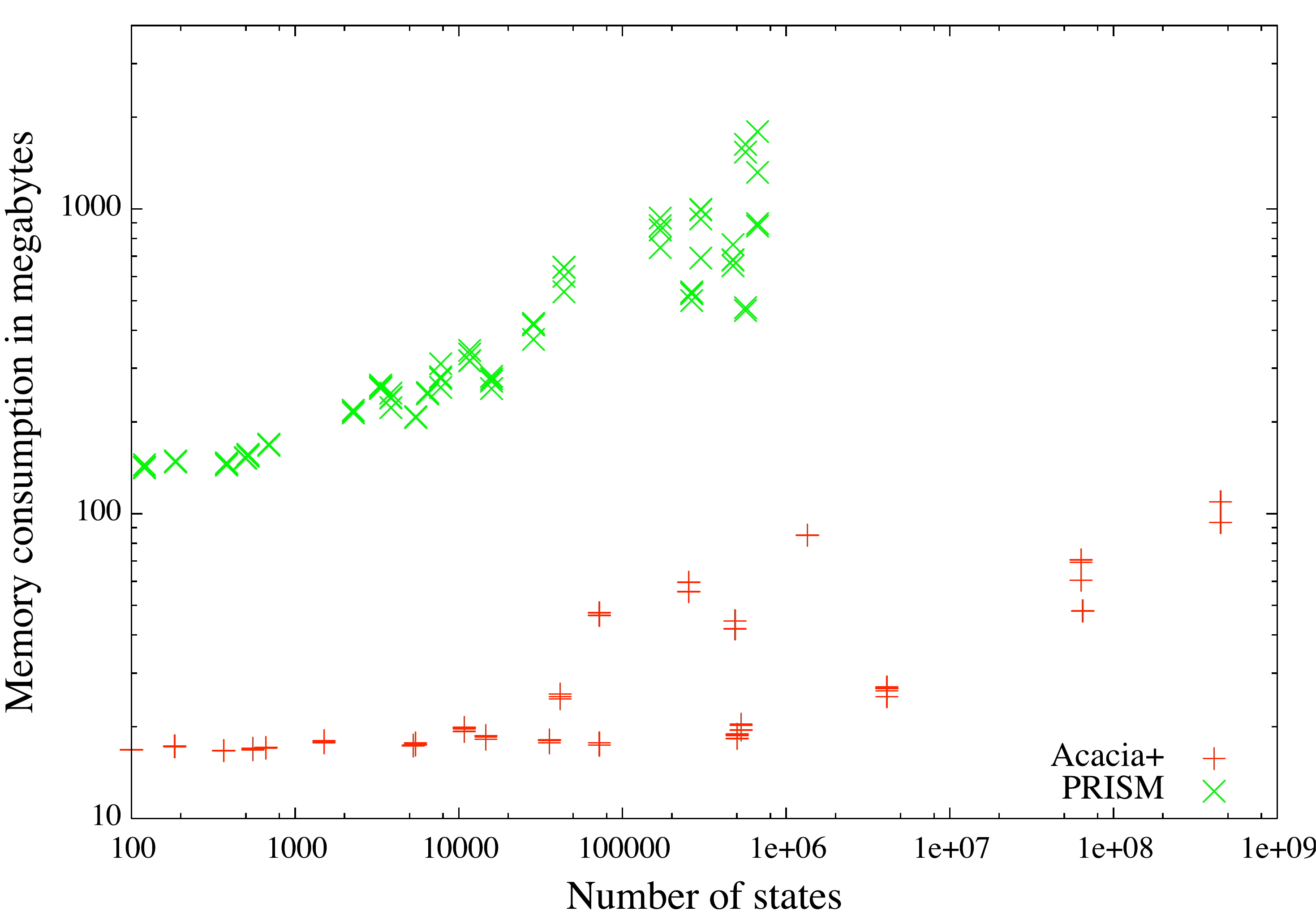}
	\caption{Memory consumption}
	\label{fig:memory}  
 \end{minipage}
\end{figure}

On Figure~\ref{fig:time}, we observe that for most of the executions, $\sf{Acacia+}$ works faster that $\sf{PRISM}$. We also observe that  $\sf{Acacia+}$ does not behave well for a few particular executions, and that these executions all correspond to MDPs obtained from the forward algorithm of $\sf{Acacia+}$.

Figure~\ref{fig:memory} shows that regarding the memory consumption, $\sf{Acacia+}$ is  more efficient than $\sf{PRISM}$ and it can thus solve larger MDPs (the largest MDP solved by $\sf{PRISM}$ contains half a million states while $\sf{Acacia+}$ solves MDPs of more than $450$ million states). This points out that monotonic MDPs are better handled by pseudo-antichains, which exploit the partial order on the state space, than by BDDs.

\medskip
Finally, in the majority of experiments we performed for both the EMP and the SSP problems, we observe that most of the execution time of the pseudo-antichain based symblicit algorithms is spent for lumping. It is also the case for the MTBDD based symblicit algorithm~\cite{DBLP:conf/qest/WimmerBBHCHDT10}.

\section{Conclusion} \label{sec:conclusion}
In this paper, we have presented the interesting class of monotonic MDPs, and the new data structure of pseudo-antichain. We have shown how monotonic MDPs can be exploited by symblicit algorithms using pseudo-antichains (instead of MTBDDs) for two quantitative settings: the expected mean-payoff and the stochastic shortest path. Those algorithms have been implemented, and we have reported promising experimental results for two applications coming from automated planning and \LTLMP synthesis. We are convinced that pseudo-antichains can be used in the design of efficient algorithms in other contexts like for instance model-checking or synthesis of non-stochastic models, as soon as a natural partial order can be exploited.

\medskip
\textbf{Acknowledgments. } We would like to thank Mickael Randour for his fruitful discussions, Marta Kwiatkowska, David Parker and Christian Von Essen for their help regarding the tool $\sf{PRISM}$, and Holger Hermanns and Ernst Moritz Hahn for sharing with us their prototypical implementation.

\bibliographystyle{abbrv}
\bibliography{biblio}

\end{document}